\theoremstyle{plain}
\newtheorem{thm}{Theorem}[subsection] 
\theoremstyle{plain}
\newtheorem{defn}[thm]{Definition} 
\newtheorem{prop}[thm]{Proposition}
\newtheorem{lemma}[thm]{Lemma}
\theoremstyle{definition}
\newtheorem{exmp}[thm]{Example} 
\def\th@plain{%
  \thm@notefont{}% same as heading font
  \itshape % body font
}
\def\th@definition{%
  \thm@notefont{}% same as heading font
  \normalfont % body font
}
\begin{document}

\title{Kontsevich's deformation quantization: from Dirac to multiple zeta values}

\author{Oisin Kim}

\newcommand\blfootnote[1]{%
  \begingroup
  \renewcommand\thefootnote{}\footnote{#1}%
  \addtocounter{footnote}{-1}%
  \endgroup
}
\blfootnote{Date: \today.}

\begin{abstract}

One way of reconciling classical and quantum mechanics is deformation quantization, which involves deforming the commutative algebra of functions on a Poisson manifold to a non-commutative, associative algebra, reminiscent of the space of quantum observables. This depends on a formal parameter $\hbar$, so that the original pointwise product is recovered when $\hbar=0$. In \cite{Kontsevich}, Kontsevich showed that a deformation quantization exists for every Poisson manifold. He furthermore gave a simple, combinatorial formula for producing a quantization of any Poisson structure on $\mathbb{R}^n$. 

The primary aim of this essay, largely drawn from the author's MMath dissertation at Oxford, is to present and explain Kontsevich's results. Starting with the motivation, we discuss how the problem is solved by situating it in a richer mathematical structure, performing a few original calculations along the way. We hope to communicate a sense of the strange links between this subject and seemingly distant areas of mathematics, and also to describe some of the contemporary research in the field. To these ends, we consider a recent paper, \cite{Panzer}, which connects deformation quantization to multiple zeta values.  
\end{abstract}

\maketitle
\tableofcontents

\section{Introduction}
\epigraph{“I have no way of knowing whether the events that I am about to narrate are effects or causes.”}{\textit{Jorge Luis Borges \\ Ficciones}}

\numberwithin{equation}{subsection} 
The aim of mechanics is to describe a physical system. In modern physics, there are two main ways of dealing with this. Classical mechanics, first formalised by Newton in the $17^{\text{th}}$ century, gives a deterministic description of physics on the macroscopic scale. Quantum mechanics, discovered in the early $20^{\text{th}}$ century, gives a microscopic description of physics that is often interpreted probabilistically. To motivate the subsequent discussion, we will briefly recap some key facts about these theories.
\subsection{Background}
One of the two standard descriptions of classical mechanics on $\mathbb{R}^n$ is the Hamiltonian formalism. Here the phase space is parametrized by canonical coordinates, consisting of generalized coordinates $\textbf{q}:=(q_1,...q_n)$ and conjugate generalized momenta $\textbf{p}:=(p_1,...,p_n)$. The system's energy is represented by a Hamiltonian function $H(\textbf{q},\textbf{p})$; the time evolution is described by Hamilton's equations. Given functions $f(\textbf{q},\textbf{p})$ and $g(\textbf{q},\textbf{p})$, the Poisson bracket is defined in this setting as:
\begin{equation*}\{f,g\} := \frac{\partial f}{\partial q^i}\frac{\partial g}{\partial p_i} - \frac{\partial f}{\partial p_i}\frac{\partial g}{\partial q^i},
\end{equation*}
where we have adopted Einstein notation for repeated indices, as we will for the remainder of the essay. With this construction, the time evolution of $f(\textbf{q},\textbf{p},t)$ can be easily expressed in the form: 
\begin{equation*} \frac{\text{d}f}{\text{d}t} = \{f,H\}+\frac{\partial f}{\partial t}.
\end{equation*}
This setup can then be extended to a general phase space. In classical mechanics, the states often form a smooth manifold $M$, typically equipped with a Poisson bracket, making it a Poisson structure. Here, the observables are the real-valued smooth functions $C^{\infty}(M)$; the algebra of observables is commutative. The motivating example is $M=T^{\ast}N$, the cotangent bundle of the space of configurations $N$. This has a coordinate-free Poisson bracket, making it into a Poisson manifold.

On the other hand, in quantum mechanics, the states are rays in a Hilbert space. The observables are then self-adjoint operators acting on the states. The evolution of the quantum system is governed by the Schr{\"o}dinger equation:
\begin{equation*}i\hbar \frac{\text{d}}{\text{d}t}\psi(t)=H\psi(t),
\end{equation*}
where $H$ denotes the quantum observable Hamiltonian. Clearly, quantum observables do not necessarily commute. 

Following the formation of quantum theory, many physicists and mathematicians sought to reconcile these systems by \emph{quantizing} classical mechanics. This may seem logically strange, as in principle we should derive classical behaviours from quantum theory. Unfortunately, our only existing methods for modelling quantum systems start with classical descriptions. 
\subsection{Quantization}
A natural idea for quantizing a classical system is to convert its observables into operators on a Hilbert space. We call this Hilbert space quantization, where a linear map $Q$ maps functions in $C^{\infty}(M)$ to self-adjoint operators. However, it is clear that classical and quantum observables are incompatible in many ways. For example, we have the problem of operator ordering: $Q(f)$ and $Q(g)$ do not need to commute, but in the classical case $fg = gf$ always. This makes our mapping inconsistent.

A natural idea, proposed by P.A.M. Dirac in \cite{Dirac}, is that $Q$ satisfies the condition:
\begin{equation*}Q(\{f,g\})=\frac{1}{i\hbar}[Q(f),Q(g)].
\end{equation*}
When $f$ and $g$ are canonical coordinates and momenta, we see that this becomes the Heisenberg commutation relation. Here Dirac considered $\hbar$ as a numerical value, Planck's constant, rather than as a formal variable in $\mathbb{R}[[\hbar]]$, a distinction that will be important later in our exposition. It was famously shown by H.J. Groenewold in \cite{Groenewold} that a map satisfying the above condition, which maps arbitrary polynomials in $x$ and $p$ to operator polynomials in $X$ and $P$, does not exist (if we also send the constant function $1$ to the identity).

There are various ways of overcoming this problem, but we must discard some of the assumptions of Groenewold's theorem, for example by restricting the space of functions to be quantized. One standard way of doing this is geometric quantization \cite{Woodhouse}, a type of Hilbert space quantization. A concrete example is the two-sphere:
\begin{equation*} x^2+y^2+z^2=j^2,
\end{equation*}
viewed as a symplectic manifold. In this case, even the simple observables $x,y,z$ can only be quantized for certain $j$. This is because: (i) any geometric quantization maps into a finite-dimensional Hilbert space, and (ii) the two-sphere's Poisson structure induces on the linear span of $x,y,z$ a Lie algebra structure isomorphic to the Lie algebra of angular momentum operators \cite{Hawkins}. In fact, here this implies that $j\in \frac{1}{2}\mathbb{N}$.

\subsection{Deformation quantization}
An alternative approach, and the subject of this essay, is \emph{deformation quantization}. The idea is to deform the commutative algebra of functions in $C^{\infty}(M)$ to a non-commutative, associative algebra. The deformation depends on a formal parameter $\hbar$, so that our new product reduces to the original commutative pointwise product when $\hbar=0$. In \cite{Kontsevich}, Kontsevich showed that a deformed \emph{star product} exists for every finite-dimensional Poisson manifold. He furthermore gave an explicit graphical formula for constructing a star product when the Poisson manifold is an open subset of $\mathbb{R}^d$. 

Note the difference with geometric quantization, or Hilbert space quantization in general: we are just deforming the algebra of the observables, without actually introducing a Hilbert space to act on. In this sense, deformation quantization is an abstract, mathematical affair, with ambiguous physical applicability. Everything is formal; we do not even worry about convergence. Returning to the two-sphere, Kontsevich has shown a deformation quantization exists for any $j$, whilst we already know a geometric quantization only exists for $j\in  \frac{1}{2} \mathbb{N}$.
\subsection{Outline of work}
In this essay, we will present and explain Kontsevich's results. Along the way they touch on many different (and beautiful) areas of mathematics; we aim to give the reader some appreciation for the surprising connections that emerge. 

Kontsevich's main result comes from embedding star products and Poisson structures into a richer mathematical structure, as is often the case in mathematics. We will see that Poisson structures are just deformations of the zero Poisson structure, and that any star product is obviously a deformation of function multiplication:
\begin{equation*}
f \star g = fg + B_1(f,g)\hbar + B_2(f,g){\hbar}^2+...
\end{equation*}
Since we are working with deformation problems, we will need to introduce the notion of a differential graded Lie algebra, or DGLA: specifically, we need a map between the DGLAs of multivector fields and multidifferential operators. A first attempt will be the HKR map of Hochschild, Konstant and Rosenberg \cite{Hochschild}. However, it does not respect the multiplicative structures on either side, a problem because our deformations correspond to solutions of the Maurer-Cartan equation:
\begin{equation*}
\textnormal{d}a+\frac{1}{2}[a,a]=0.
\end{equation*}
Ultimately, we will see that Kontsevich solved this issue with a different kind of mapping, known as an $L_{\infty}$-quasi-isomorphism:
\begin{equation*}
U: T_{poly}(M) \rightarrow D_{poly}(M).
\end{equation*}
We then examine Kontsevich's graphical formula in more depth. For associativity there must be weight coefficients in front of the star product's operators. Remarkably, these real numbers are expressible as rational linear combinations of multiple zeta values, a fact proven constructively in \cite{Panzer}. We will discuss this recent work, as well as other mysterious relationships between this (physically motivated) subject and number theory.

We perform a few original calculations. Firstly, in Section \ref{amend15}, we show Proposition \ref{prop1} using elementary techniques and our method. Using this, we give a simple proof to the degree one part of the HKR theorem in Proposition \ref{prop5}. This is the most relevant case for the purposes of deformation quantization, because Kontsevich's broader formality theorem only modifies the bijection between Maurer-Cartan sets at $O(\hbar^2)$; to our knowledge, an elementary proof did not previously exist in the literature. Finally, Appendix \ref{amend14} contains an original proof of Lemma \ref{lem2}, utilising the algorithm of \cite{Panzer} in an elementary setting (with heavy input from our supervisor). All figures are produced originally.

\subsection{Acknowledgements}
Firstly, I would like to sincerely thank my supervisor, Dr Erik Panzer, who proposed the topic of the dissertation this work is largely drawn from. His help and direction were invaluable throughout the writing of this essay, and were all the more appreciated during an unusual year. Most corrections were based on his scrupulously careful reading and insightful commentary, both of which went far beyond the standard call of marking duty. 

I would also like to thank those closest to me during the writing of this essay: my father, for providing intellectual guidance; my mother, for her emotional support and encouragement; and finally my girlfriend, Sophie, and brother, Niall, for both generally putting up with me.
\section{Basics}
Here we introduce the basic algebra necessary to define deformation quantization. This section's material is drawn in large part from \cite{Cattaneo}, \cite{Esposito}, and \cite{Kontsevich}. 
\subsection{Poisson structures}

 We want to formulate what it means for classical mechanics to exist on a smooth manifold $M$, not just $\mathbb{R}^n$. We begin with:
\begin{defn}[Poisson algebra]
A Poisson algebra is a vector space $V$ over a field $\mathbb{K}$ equipped with two $\mathbb{K}$-bilinear products. Firstly, a commutative product that makes $V$ into an commutative and associative $\mathbb{K}$-algebra. Secondly, a Poisson bracket $\{.,.\}:V\times V \rightarrow V$, that makes $V$ into a Lie algebra, and furthermore satisfies the Leibniz rule:
\begin{equation*} \{x,yz\}=y\{x,z\}+\{x,y\}z.\end{equation*}
\end{defn}
We saw in the introduction the standard Poisson bracket for $\mathbb{R}^n$; this generalises for arbitrary $M$ with:
 \begin{defn}[Poisson manifold]
A Poisson manifold is a smooth manifold M, equipped with a product $\{.,.\}:C^{\infty}(M)\times C^{\infty}(M) \rightarrow C^{\infty}(M)$ that makes $(C^{\infty}(M),\{.,.\})$ into a Poisson algebra.
\end{defn}
Here, the pointwise multiplication on $C^{\infty}(M)$ is the commutative product. To further understand this construction, we consider bivector fields. We want the Poisson bracket to satisfy the Leibniz rule in each component separately, in local coordinates $\{x^1,...,x^n\}$. So the map $f\rightarrow \{f,g\}$ is a derivation, with the same true for the other argument. It follows that there exists a unique $ \alpha \in \Gamma(\wedge^{2}TM) \; \text{with} \; \{f,g\} = \alpha(\text{d}f,\text{d}g) = (\text{d}f \otimes \text{d}g) (\alpha)$,
where $\Gamma(\wedge^{2}TX)$ denotes the space of skew-symmetric bivector fields. We can write:
\begin{equation*}  \label{ref3} \alpha=\alpha^{ij}\partial_i\otimes \partial_j,\end{equation*}
with $\alpha^{ij}=\{x^i,x^j\}$. 

Of course, the fact that $\{.,.\}$ satisfies the Jacobi identity should also impose a condition on the $\alpha^{ij}$. This is most easily formulated in terms of the \emph{Schouten-Nijenhuis bracket}, which acts between multivector fields. Let $\mathfrak{X}^k(M)=\Gamma({\wedge}^k TM)$, with $\mathfrak{X}^0(M)=C^{\infty}(M)$: we will use this notation for the remainder of this essay. On $\mathfrak{X}^1(M)$ there is the usual Lie bracket, which defines the \emph{Lie derivative}:
\begin{equation} \label{ref2}
    \mathcal{L}_{X}(Y):=[X,Y]=XY-YX.
\end{equation}
\begin{defn}[Schouten-Nijenhuis bracket]
The Lie bracket (\ref{ref2}) can be extended to  an operation: $[ .,.]_S: \mathfrak{X}^a(M) \otimes \mathfrak{X}^b(M) \rightarrow \mathfrak{X}^{a+b-1}(M)$, the Schouten-Nijenhuis bracket, defined on homogeneous elements by:
\begin{multline} \label{ref15}
  [X_1\wedge...\wedge X_a,Y_1\wedge...\wedge Y_b]_S := \\ \sum_{i=1}^{a}\sum_{j=1}^{b}(-1)^{i+j}[X_i,Y_j]\wedge X_1  \wedge ... \wedge X_{i-1} \wedge X_{i+1} \wedge ... \wedge X_a \wedge Y_1 \wedge ... \wedge Y_{j-1} \wedge Y_{j+1} \wedge ... \wedge Y_b,
\end{multline}
and extended linearly to all multivector fields.
\end{defn}
Here the wedge product denotes the anti-symmetrization of the tensor product. When $a=1$ this defines the Lie derivative $\mathcal{L}_{X_1}(Y_1 \wedge ... \wedge Y_b)$ of multivector fields. We will later naturally extend these definitions to formal multivector fields without comment. Armed with $[.,.]_S$, we have the easy condition:
\begin{prop}
The bivector $\alpha$ satisfies the Jacobi identity if and only if:
\begin{equation} \label{ref4}
[\alpha,\alpha]_S = 0.
\end{equation}
So this is also the condition necessary for it to define a Poisson structure.
\end{prop}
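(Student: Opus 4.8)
The plan is to reduce the whole statement to a local coordinate computation and then exploit the arbitrariness of $f,g,h$. First I would write the Poisson bracket in the coordinates $\{x^1,\dots,x^n\}$ as $\{f,g\} = \alpha^{ij}\,\partial_i f\,\partial_j g$, where $\alpha^{ij} = \{x^i,x^j\}$ is antisymmetric, and introduce the \emph{Jacobiator}
\[
J(f,g,h) := \{f,\{g,h\}\} + \{g,\{h,f\}\} + \{h,\{f,g\}\}.
\]
The bracket defines a Poisson structure precisely when $J$ vanishes for all $f,g,h$, so the proposition amounts to the pointwise identity $J(f,g,h) = c\,[\alpha,\alpha]_S(\mathrm{d}f,\mathrm{d}g,\mathrm{d}h)$ for a fixed nonzero constant $c$, after which the equivalence with (\ref{ref4}) follows immediately.

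Next I would expand $J$ using the Leibniz rule. The summand $\{f,\{g,h\}\} = \alpha^{il}\,\partial_i f\,\partial_l(\alpha^{jk}\,\partial_j g\,\partial_k h)$ splits into a coefficient-derivative piece $\alpha^{il}(\partial_l\alpha^{jk})\,\partial_i f\,\partial_j g\,\partial_k h$, which is first order in each argument, together with two second-order pieces $\alpha^{il}\alpha^{jk}\,\partial_i f\,(\partial_l\partial_j g)\,\partial_k h$ and $\alpha^{il}\alpha^{jk}\,\partial_i f\,\partial_j g\,(\partial_l\partial_k h)$. The key observation is that all second-order contributions cancel in the cyclic sum: collecting, say, every term carrying a second derivative of $g$ and using the symmetry of mixed partials $\partial_l\partial_j g = \partial_j\partial_l g$ against the antisymmetry of $\alpha^{ij}$ forces them to vanish in pairs. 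What survives is the purely first-order expression
\[
J(f,g,h) = \big(\alpha^{il}\partial_l\alpha^{jk} + \alpha^{jl}\partial_l\alpha^{ki} + \alpha^{kl}\partial_l\alpha^{ij}\big)\,\partial_i f\,\partial_j g\,\partial_k h.
\]

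Separately I would compute $[\alpha,\alpha]_S$ directly from the definition (\ref{ref15}) in the case $a=b=2$. Writing $\alpha$ as a combination of wedges $\partial_i\wedge\partial_j$ with coefficients built from $\alpha^{ij}$ and applying the formula, the vanishing Lie brackets $[\partial_i,\partial_j]=0$ of coordinate fields kill every term except those in which the bracket lands on a coefficient function. Tracking the signs and the wedge reordering then produces a trivector whose coefficient is proportional to $\alpha^{il}\partial_l\alpha^{jk} + \text{(cyclic)}$; contracting it against $\mathrm{d}f\otimes\mathrm{d}g\otimes\mathrm{d}h$ reproduces the displayed expression for $J$ up to the constant $c$, which absorbs the normalisation in the anti-symmetrisation convention. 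To close the equivalence I would note that the coefficient tensor $T^{ijk} := \alpha^{il}\partial_l\alpha^{jk} + \alpha^{jl}\partial_l\alpha^{ki} + \alpha^{kl}\partial_l\alpha^{ij}$ is totally antisymmetric, being the coefficient of a trivector; since at any point one may choose smooth functions whose differentials realise arbitrary prescribed covectors, $J\equiv 0$ is equivalent to $T^{ijk}\equiv 0$, hence to $[\alpha,\alpha]_S=0$.

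I expect the main obstacle to be purely bookkeeping rather than conceptual: getting the cancellation of the second-order terms right (the antisymmetry of $\alpha$ and the symmetry of second partials must be paired correctly), and matching the combinatorial factors and signs between the hands-on computation of $J$ and the expansion of the abstract definition (\ref{ref15}). In particular, the $\otimes$-versus-$\wedge$ normalisation of $\alpha$ and the precise value of $c$ must be tracked consistently, though neither affects the final equivalence since $c\neq 0$.
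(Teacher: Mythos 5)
Your proposal is correct and takes essentially the same route as the paper, whose ``proof'' consists of remarking that the Jacobi identity and $[\alpha,\alpha]_S=0$ are both equivalent to one and the same condition in the coordinates $\{x^i\}$ --- namely the vanishing of your tensor $T^{ijk}=\alpha^{il}\partial_l\alpha^{jk}+\alpha^{jl}\partial_l\alpha^{ki}+\alpha^{kl}\partial_l\alpha^{ij}$ --- while deferring the computation to \cite{Esposito}. You have simply carried out that deferred calculation in full (the cancellation of second-order terms via the symmetry of mixed partials against the antisymmetry of $\alpha^{ij}$, the matching of both sides to $T^{ijk}$ up to a nonzero constant, and the pointwise realisation of arbitrary covectors by differentials), all of which is sound.
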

To prove this, one can show that the Jacobi identity and (\ref{ref4}) are both equivalent to a condition in $\{x^i\}$, but the full calculation is not illuminating (see \cite{Esposito}  for details). A final note about this bracket: it makes the space of multivector fields into a differential graded Lie algebra, with a shift of grading (this will be discussed further when we address the formality theorem). 

We now consider some simple examples. Firstly, there are constant Poisson structures on $\mathbb{R}^n$, when the  $\alpha^{ij}$ are constant functions with $\alpha_{ij}=-\alpha_{ji}$; these will later correspond to the Moyal product.
\begin{exmp}[Poisson structure on $\mathbb{R}^2$]
On $\mathbb{R}^2$ it is easily checked that all smooth Poisson structures are of the form:
\begin{equation*}
\{f,g\}=q(x,y)\left(\frac{\partial f}{\partial x} \frac{\partial g}{\partial y} - \frac{\partial f}{\partial y}\frac{\partial g}{\partial x}\right),
\end{equation*}
where $q: \mathbb{R}^2 \rightarrow \mathbb{R}$ is an arbitrary smooth function. 
\end{exmp}
For a smooth Poisson structure on $\mathbb{R}^3$, we can think of $\mathbb{R}^3$ as the dual of the Lie algebra $so(3)$. In fact, given an arbitrary real, finite-dimensional Lie algebra $\mathfrak{g}$, we can easily construct a Poisson structure on the dual $\mathfrak{g}^*$ by identifying $\mathfrak{g}$ with its double dual $\mathfrak{g}^{**}$ \cite{Laurent}. This is because the Lie bracket is an element of $(\Lambda^2 \mathfrak{g}^*) \otimes \mathfrak{g}$. 
\begin{exmp}[Poisson structure on $\mathbb{R}^3$]
Using this recipe for $\mathbb{R}^3$, we get a Poisson bracket on $C^{\infty}(\mathbb{R}^3) \times C^{\infty}(\mathbb{R}^3)$, defined by:
\begin{equation*}
\{.,.\} := x  \frac{\partial}{\partial y} \wedge \frac{\partial}{\partial z} + y \frac{\partial }{\partial z} \wedge \frac{\partial }{\partial x} + z \frac{\partial }{\partial x} \wedge \frac{\partial }{\partial y}.
\end{equation*}
\end{exmp}
\noindent We have now seen some easy Poisson structures, corresponding to the classical systems we aim to deform. The Poisson bracket will play an important role in bridging classical  and quantum mechanics; it is already introducing some non-commutativity into the classical realm.
\subsection{Star products}
Recall that we aim to quantize our system by deforming the pointwise multiplication on $C^{\infty}(M)$ into a non-commutative, associative product. This should depend on a formal parameter $\hbar$, so that $\star$ reduces to pointwise multiplication when $\hbar=0$.  To preserve a sense of the quantum mechanical origins of the theory, we use $\hbar$ as our formal variable, but this is not Planck's constant. If we ever wished to actually interpret the results, we would use $\hbar \rightarrow i \hbar$.
\begin{defn}[Star product]
A star product in $C^{\infty}(M)$ is an associative, $\mathbb{R}[\![\hbar ] \!]$ bilinear product:
\begin{equation*} \star: C^{\infty}(M)[\![\hbar ] \!] \times C^{\infty}(M)[\![\hbar ] \!] \rightarrow  C^{\infty}(M)[\![\hbar ] \!], \end{equation*}
acting on $f,g\in C^{\infty}(M) \subset C^{\infty}(M)[\![\hbar ] \!] $ as: 
\begin{equation}
\label{ref14}
f \star g = \sum_{i=0}^{\infty} B_i(f,g) {\hbar}^i,
\end{equation}
and extended $\mathbb{R}[\![\hbar ] \!]$ linearly. The $B_i$ are bidifferential operators and we require also that $B_0(f,g)=fg$.
\end{defn}
If $\hbar$ was a number, we would hope that the series on the right of (\ref{ref14}) would converge and define a function on M. To avoid worrying about this we just map into $C^{\infty}(M)[\![\hbar ] \!]$, where $\hbar$ is now regarded as a formal variable. Then for associativity we clearly need to take star products between formal power series also, so $\star$ must act on
 $ C^{\infty}(M)[\![\hbar ] \!] \times C^{\infty}(M)[\![\hbar ] \!]$.

We can now decompose $B_1$ into symmetric and anti-symmetric parts as:
\begin{equation*} \label{ref8}
B_1^-=\frac{1}{2}(B_1(f,g)-B_1(g,f)), \; \; 
B_1^+=\frac{1}{2}(B_1(f,g)+B_1(g,f)).
\end{equation*}
\begin{lemma}
Given a star product $f\star g= fg +B_1(f,g)\hbar+B_2(f,g){\hbar}^2...$, $B_1^-$ defines a Poisson structure on $M$. 
\end{lemma}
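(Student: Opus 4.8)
The plan is to read off constraints on the $B_i$ from associativity order by order in $\hbar$, and then show these force $\{f,g\}:=B_1^-(f,g)$ to be antisymmetric, a biderivation, and a solution of the Jacobi identity. Extracting the coefficient of $\hbar^n$ from $(f\star g)\star h = f\star(g\star h)$ gives $\sum_{i+j=n}B_i(B_j(f,g),h)=\sum_{i+j=n}B_i(f,B_j(g,h))$. The $\hbar^0$ equation is just associativity of pointwise multiplication. The $\hbar^1$ equation rearranges to the \emph{Hochschild $2$-cocycle condition}
\begin{equation*} fB_1(g,h)-B_1(fg,h)+B_1(f,gh)-B_1(f,g)h=0, \end{equation*}
and the $\hbar^2$ equation rearranges to $B_1(B_1(f,g),h)-B_1(f,B_1(g,h))=(\delta B_2)(f,g,h)$, where $\delta B_2$ denotes the same Hochschild combination applied to $B_2$. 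Antisymmetry and bilinearity of the bracket are immediate, so the two remaining things to prove are the Leibniz rule and the Jacobi identity, i.e.\ \eqref{ref4}.

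For the Leibniz rule, I would first note that the cocycle condition is invariant under the swap $\tau\colon B(f,g)\mapsto B(g,f)$: a short computation gives $\delta(\tau B)(f,g,h)=-(\delta B)(h,g,f)$, so cocycles are closed under $\tau$ and hence $B_1^-=\tfrac12(B_1-\tau B_1)$ is itself a cocycle. It then suffices to show that any antisymmetric $2$-cocycle $C$ is a biderivation. Using the cocycle condition to rewrite $C(f,gh)$, the failure of the Leibniz rule in the second slot equals the first-slot defect $\mu(f,g;h):=C(fg,h)-fC(g,h)-gC(f,h)$, which is manifestly symmetric in $f,g$; antisymmetry of $C$ forces $\mu(f,g;h)=-\mu(g,h;f)$, and iterating this cyclically yields $\mu=-\mu$, so $\mu=0$. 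Thus $\{f,\cdot\}$ is a derivation for each fixed $f$, so $\{f,g\}=\alpha^{ij}\partial_i f\,\partial_j g$ for some bivector $\alpha$, as required.

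For the Jacobi identity I would apply total antisymmetrization $\mathrm{Alt}$ over $f,g,h$ to the $\hbar^2$ equation. On the right, the coboundary $\delta B_2$ antisymmetrizes to zero: the two merged terms $B_2(fg,h)$ and $B_2(f,gh)$ are each symmetric in a pair of arguments and so vanish under $\mathrm{Alt}$, while the two outer terms $fB_2(g,h)$ and $B_2(f,g)h$ are carried into one another by an even (cyclic) relabelling and cancel. On the left, since $B_1^-$ is now known to be antisymmetric, $\mathrm{Alt}$ of the associator $B_1^-(B_1^-(f,g),h)-B_1^-(f,B_1^-(g,h))$ is a nonzero multiple of the Jacobiator $\sum_{\mathrm{cyc}}\{\{f,g\},h\}$. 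Hence the Jacobiator vanishes, which is precisely $[\alpha,\alpha]_S=0$.

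The main obstacle is this Jacobi step, and specifically the bookkeeping around the symmetric part $B_1^+$: the clean antisymmetrization above isolates the Jacobiator of $B_1^-$ only when the associator is built from $B_1^-$ alone, whereas the genuine $\hbar^2$ equation involves the full $B_1=B_1^++B_1^-$. I would handle this either by first removing $B_1^+$ through an equivalence transformation $f\mapsto f+\hbar E(f)+\cdots$ (legitimate because the symmetric cocycle $B_1^+$ is a Hochschild coboundary $\delta E$, and such a transformation alters $B_1$ by $\delta E$ while leaving $B_1^-$ and the induced bracket untouched), or by checking directly that the mixed and purely symmetric contributions drop out under $\mathrm{Alt}$. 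Confirming that the coboundary dies under antisymmetrization and that the associator produces exactly the Jacobiator is where all the real content lies; the rest is formal manipulation of the $\hbar$-expansion.
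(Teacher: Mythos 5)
Your route (b) is exactly the paper's proof: ``add the two cyclic permutations, then subtract the three non-cyclic permutations'' is precisely your $\mathrm{Alt}$ applied to the $\hbar^2$ identity, and your treatment of $\delta B_2$ is sound (the merged terms $B_2(fg,h)$, $B_2(f,gh)$ die because they are symmetric in a pair of arguments, and the outer terms cancel under the even cyclic relabelling, with no symmetry assumption on $B_2$ needed). The check you deferred --- that the $B_1^+$ contributions drop out of the alternated associator --- is two lines, so the ``main obstacle'' is not really one: writing $B_1=S+P$ with $S=B_1^+$, $P=B_1^-$, every term with $S$ in the inner slot, namely $S(S(f,g),h)$, $S(f,S(g,h))$, $P(S(f,g),h)$, $P(f,S(g,h))$, is symmetric in an adjacent pair of arguments and is annihilated by $\mathrm{Alt}$; for the remaining mixed pair, symmetry of $S$ gives $S(f,P(g,h))=S(P(g,h),f)$, which is the even cyclic relabelling of $S(P(f,g),h)$, so these two cancel against their opposite signs, leaving only the $P$-associator, whose alternation is a nonzero multiple of the Jacobiator because $P$ is antisymmetric. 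By contrast, your route (a) is not free of charge: the assertion that the symmetric cocycle $B_1^+$ is a Hochschild coboundary $\delta E$ is exactly Proposition \ref{prop1} of the paper, the degree-one HKR input, which is proved only later and at considerable length --- invoking it here as ``legitimate'' without proof would be a genuine gap, so you should commit to (b). Your Leibniz argument via the defect $\mu$ and its cyclic sign relation $\mu(f,g;h)=-\mu(g,h;f)$ is correct and is a mildly repackaged version of the paper's step of adding one cyclic and one non-cyclic permutation of the cocycle identity; likewise your observation $\delta(\tau B)(f,g,h)=-(\delta B)(h,g,f)$, which makes $B_1^-$ a cocycle, checks out.
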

\begin{proof}
Associativity of $\star$ is equivalent to the condition:
\begin{equation}
\label{ref6}
\sum_{i+j=n} B_i(B_j(f,g),h) = \sum_{i+j=n} B_i(f,B_j(g,h)), \; \; \forall f,g,h \in C^{\infty}(M).
\end{equation}
For $n=1$ this implies the relation:
\begin{equation}
\label{ref7}
fB_1(g,h)-B_1(fg,h)+B_1(f,gh)-B_1(f,g)h=0,
\end{equation}
with the same identity for permutations of $f,g$ and $h$ (we will call this the cocycle condition for reasons that will become clear later in the exposition). Now, adding (\ref{ref7}) to one cyclic and one non-cyclic permutation implies that $B_1^-$ satisfies the Leibniz rule. If we take (\ref{ref6}) for $n=2$:
\begin{equation*}
B_2(fg,h)-B_2(f,gh)+B_1(B_1(f,g),h)-B_1(f,B_1(g,h))+B_2(f,g)h-fB_2(g,h)=0,
\end{equation*}
add the two cyclic permutations, then subtract the three non-cyclic permutations, we get exactly that $B_1^-$ satisfies the Jacobi identity. 
\end{proof}

We can see a relationship emerging between Poisson structures and star products, but it needs to be made more mathematically precise. This leads to:
\begin{defn}[Deformation quantization] \label{amend13}
A deformation quantization of a Poisson manifold is a star product for which the induced bracket satisfies $B_1^-(.,.)=\frac{1}{2}\{.,.\}$, where $\{.,.\}$ denotes the original Poisson bracket.
\end{defn}
From this definition, it is immediate that the ordinary product on functions extended $\mathbb{R}[\![\hbar]\!]$ linearly to $C^{\infty}(M)[\![\hbar]\!]$, which is commutative, corresponds to the zero Poisson structure. Note that the factor of $\frac{1}{2}$ means that the commutator of the star product satisfies Dirac's original quantization condition, at order $\hbar$.

\section{Kontsevich's result} \label{amend15}
The material in the next two sections draws on \cite{Cattaneo}, \cite{Keller}, \cite{Esposito}, \cite{Gutt} and \cite{Kontsevich}.
\subsection{Formal star products}
 Let $J$ be the set of automorphisms of $C^{\infty}(M)[\![\hbar ] \!]$, taken as a $\mathbb{R}  [\![\hbar ] \!]$ module, of the following form on $f \in C^{\infty}(M)$:
\begin{equation*}
T(f) = \sum_{n=0}^{\infty}T_n(f){\hbar}^n. 
\end{equation*}
Here the $T_n$ are linear differential operators. We require that they vanish on constants and also that $T_0$ is the identity. Given $T$ in this form, we can easily solve for the inverse, $D=T^{-1}$, in the same form,
where the $D_n$ are inductively defined by:
\begin{equation*}
D_0=\mathds{1},  \; \; D_n=-\sum_{m=0}^{n-1} T_m D_{n-m}.
\end{equation*} The fact that $T_0$ is the identity implies that the shifting is only happening in the formal portions of $C^{\infty}(M)[\![\hbar]\!]$.  Now we have:
\begin{defn}[Equivalent star products] \label{ref24}
If there exists $T \in J$ such that:
\begin{equation*}
T(u\star v) = T(u)\star' T(v), \; \; \forall u,v \in C^{\infty}(M) [\![\hbar]\!],
\end{equation*}
then $\star$ and $\star'$ belong to the same star product equivalence class. We also say that they are gauge equivalent.
\end{defn}
We now state and prove two important results about gauge equivalence. Firstly:
\begin{lemma} \label{lem1}
The $B_1^-$ Poisson bracket depends only on the star product equivalence class.
\end{lemma}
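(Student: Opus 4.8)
The plan is to take two gauge-equivalent star products $\star$ and $\star'$, expand the defining intertwining relation of Definition \ref{ref24} to first order in $\hbar$, and show that the resulting constraint forces $B_1$ and $B_1'$ to differ by a \emph{symmetric} bidifferential operator, whose antisymmetrization therefore vanishes. Concretely, write $f \star g = fg + B_1(f,g)\hbar + O(\hbar^2)$ and $f \star' g = fg + B_1'(f,g)\hbar + O(\hbar^2)$, and take the gauge transformation in the form $T = \mathds{1} + T_1\hbar + O(\hbar^2)$, where $T_1$ is a differential operator vanishing on constants. Since the $B_i$ are bidifferential operators determined by their action on ordinary functions, it suffices to apply the relation $T(u\star v) = T(u)\star' T(v)$ to $f,g \in C^{\infty}(M) \subset C^{\infty}(M)[\![\hbar]\!]$.

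Next I would compute both sides to order $\hbar$. On the left, $\mathbb{R}[\![\hbar]\!]$-linearity of $T$ gives $T(f\star g) = fg + \big[B_1(f,g) + T_1(fg)\big]\hbar + O(\hbar^2)$. On the right, I would expand $T(f) = f + T_1(f)\hbar + O(\hbar^2)$ and $T(g) = g + T_1(g)\hbar + O(\hbar^2)$ and feed these into $\star'$, remembering that $\star'$ is $\mathbb{R}[\![\hbar]\!]$-bilinear; the order-$\hbar$ contribution comes partly from the leading product $T(f)T(g)$ and partly from $B_1'$ evaluated on the leading terms, yielding $T(f)\star' T(g) = fg + \big[f\,T_1(g) + T_1(f)\,g + B_1'(f,g)\big]\hbar + O(\hbar^2)$. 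Equating coefficients of $\hbar$ then produces the identity
\[
B_1(f,g) - B_1'(f,g) = f\,T_1(g) + T_1(f)\,g - T_1(fg).
\]

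The key observation, and really the whole content of the lemma, is that the right-hand side is manifestly symmetric under $f \leftrightarrow g$, because pointwise multiplication on $C^{\infty}(M)$ is commutative. Hence $B_1 - B_1'$ is a symmetric bidifferential operator, so its antisymmetric part is zero and $B_1^- = (B_1')^-$. Thus the induced Poisson bracket is unchanged within a gauge equivalence class.

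I do not expect a genuine obstacle here: the only care needed is order-by-order bookkeeping, in particular tracking that $\star'$ acts $\mathbb{R}[\![\hbar]\!]$-bilinearly so that $B_1'$ contributes at order $\hbar$ only through its value on the leading terms $f$ and $g$. The substantive point is simply the symmetry of the obstruction term $f\,T_1(g) + T_1(f)\,g - T_1(fg)$, which is exactly the coboundary that gauge transformations are free to add.
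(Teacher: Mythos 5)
Your proposal is correct and follows essentially the same route as the paper: expanding $T(f\star g)=T(f)\star' T(g)$ to order $\hbar$ yields exactly the identity (\ref{ref38}), and the conclusion rests on the same observation that the difference $B_1-C_1$ equals the manifestly symmetric term $fT_1(g)+T_1(f)g-T_1(fg)$, so the antisymmetric parts coincide. Your write-up merely makes the order-by-order bookkeeping more explicit than the paper's terse version.
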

\begin{proof}
\emph{(Taken from \cite{Esposito})}. Given an arbitrary star product $\star$, apply $T\in J$, obtaining a gauge equivalent star product $\star'$:
\begin{equation*} \label{ref37}
\sum_{i=0}^{\infty} B_i(f,g) {\hbar}^i \rightarrow_T
\sum_{i=0}^{\infty} {C_i} (f,g) {\hbar}^i.
\end{equation*}
Using the formal expansion of $T$, this translates to the following condition for $T_1$:
\begin{equation} \label{ref38}
B_1(f,g)+T_1(fg)=C_1(f,g)+T_1(f)g+fT_1(g).
\end{equation}
So $B_1(f,g)-C_1(f,g)$ is symmetric in $f$ and $g$ and does not contribute to the induced bracket.
\end{proof}
Secondly:
\begin{prop} \label{prop1}
Any star product is gauge equivalent to a star product such that $B_1^+=0$.
\end{prop}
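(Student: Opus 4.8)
The plan is to reduce the proposition to a statement about Hochschild cohomology and then settle that statement by an explicit, order-by-order construction, avoiding any appeal to the Hochschild--Kostant--Rosenberg theorem (which Proposition \ref{prop5} will instead deduce from the present result). The starting point is the transformation law already computed in the proof of Lemma \ref{lem1}: applying a gauge transformation $T \in J$ with first-order coefficient $T_1$ replaces $B_1$ by
\[ C_1(f,g) = B_1(f,g) + T_1(fg) - f\,T_1(g) - T_1(f)\,g. \]
Writing $(\delta T_1)(f,g) := T_1(fg) - f\,T_1(g) - T_1(f)\,g$, the correction term is symmetric in $f$ and $g$ because pointwise multiplication is commutative; hence it leaves $B_1^-$ untouched (consistent with Lemma \ref{lem1}) and changes only the symmetric part, $C_1^+ = B_1^+ + \delta T_1$. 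Since the higher coefficients $T_n$ do not enter the $O(\hbar)$ term, it suffices to find one differential operator $T_1$, vanishing on constants, with $\delta T_1 = -B_1^+$, and then set $T = \mathrm{id} + \hbar T_1$.

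First I would verify that $B_1^+$ is a symmetric Hochschild $2$-cocycle, i.e. that it obeys the cocycle relation (\ref{ref7}). This is immediate from the earlier lemma: $B_1^-$ is, up to a factor, the Poisson bivector, the skew bidifferential operator attached to any bivector field is killed by $\delta$, and $B_1$ satisfies (\ref{ref7}); subtracting gives $\delta B_1^+ = \delta(B_1 - B_1^-) = 0$. The whole content of the proposition therefore becomes the elementary fact that a symmetric $2$-cocycle among bidifferential operators is a coboundary.

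For the construction I would induct on the differential order of $B_1^+$. A direct Leibniz computation gives $\delta(b^{ij}\partial_i\partial_j)(f,g) = 2\,b^{ij}\,\partial_i f\,\partial_j g$ for symmetric $b^{ij}$, and more generally shows that the coboundary of a homogeneous order-$k$ operator $a^{\alpha}\partial_\alpha$ contributes, at top bidegree, the two-sided symmetric expression $\sum_{0<\beta<\alpha}\binom{\alpha}{\beta}\,a^{\alpha}\,\partial_\beta f\,\partial_{\alpha-\beta}g$, in which both arguments are differentiated. Reading the cocycle relation (\ref{ref7}) at the highest bidegree forces the leading symbol of $B_1^+$ to be exactly of this form; matching coefficients then determines a symmetric operator of the same order whose coboundary cancels the leading symbol, strictly lowering the order. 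Iterating and summing the pieces --- each of order $\geq 2$, so that the resulting $T_1$ automatically vanishes on constants --- produces the required primitive. The main obstacle is precisely this last step: extracting from the cocycle condition that the leading symbol is two-sided, and hence lies in the image of $\delta$, which is in effect proving the symmetric half of HKR in degree two by hand. One must also dispose of a possible order-zero term $c\,fg$ in $B_1^+$, which is a coboundary only of a multiplication operator and so cannot be removed while keeping $T_1$ constant-vanishing; this term is absent once the star product is normalised so that $B_i(1,\cdot) = B_i(\cdot,1) = 0$.
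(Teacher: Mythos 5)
Your proposal is correct in outline and takes a genuinely different route from the paper's. Both arguments reduce the claim to solving $(\delta T_1)(f,g) := T_1(fg)-fT_1(g)-T_1(f)g = -B_1^+(f,g)$, which is exactly the paper's condition (\ref{ref30}), but from there the paper proceeds by bare hands: it defines $T_1$ recursively on monomials (equations (\ref{ref31}) and (\ref{ref35})), checks consistency of the recursion via associativity of $\star$ restricted to symmetric parts, shows by explicit resummation (partial-factorial identities) that the result is a single differential operator, extends from polynomials to $C^\infty$ by a Taylor-jet argument, and finally globalizes with a partition of unity. You instead argue cohomologically by induction on differential order, which is the classical symbol-matching proof, and it is sound: since $\delta$ never differentiates the coefficients, $\delta\bigl(\sum_\alpha a^\alpha\partial_\alpha\bigr)(f,g)=\sum_\alpha a^\alpha\sum_{0<\beta<\alpha}\binom{\alpha}{\beta}\partial_\beta f\,\partial_{\alpha-\beta}g$ holds exactly (not merely at top bidegree), and the cocycle identity (\ref{ref7}) is likewise pointwise polynomial in the symbols and preserves total degree, so each homogeneous component of $B_1^+$ is separately a symmetric cocycle and can be cancelled degree by degree. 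What your route buys: $T_1$ is a differential operator with smooth coefficients by construction, so the paper's two most laborious passages --- proving differential-operator-ness of the monomial recursion and the jet extension from polynomials to smooth functions --- evaporate. What it costs: the crux you honestly flag, that a symmetric homogeneous polynomial cocycle of degree at least one is a coboundary, is asserted via ``matching coefficients'' rather than carried out, and that is where essentially all the work of the paper's proof reappears in compressed form; it is elementary (pure terms $b(\xi,0)$ are forced constant, which is the paper's $h=1$ argument, and the bidegree-$(k,N-k)$ coefficients of a cocycle satisfy binomial recursions identifying them with those of $\delta t$), but it must be written out. You also omit globalization to a manifold; that step is one line in your framework, since $\delta(\rho T)=\rho\,\delta T$ makes the partition-of-unity patching immediate, but it should be said.

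Your caveat about the order-zero term is well taken and in fact sharper than the paper's own treatment: the paper removes $a_{\mathbf 0}$ with the gauge $T_1(f)=a_{\mathbf 0}f$, a multiplication operator that does not vanish on constants, in tension with its own definition of $J$. Since $C_1(1,1)=B_1(1,1)$ under any constant-vanishing $T_1$, and higher $T_n$ do not affect $C_1$, this term genuinely cannot be removed within $J$ as literally defined; one must either relax the constant-vanishing requirement (as the paper tacitly does) or assume the normalisation $B_1(1,\cdot)=B_1(\cdot,1)=0$ as you do, the two fixes being equivalent up to the multiplication gauge.
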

Kontsevich provides a highly abstract proof for this statement when he proves his own version of the HKR theorem, the entry point to the formality theorem, but the machinery used goes beyond the scope of this essay. Going through the calculation in depth is useful, as the ideas will later generalise to the HKR discussion. The existing proofs we found in the literature were either abstract or very sketchy with the details. So now we provide our own. 
\begin{proof}
Starting with an arbitrary star product, apply $T\in J$:
\begin{equation*}
\sum_{i=0}^{\infty} B_i(f,g) {\hbar}^i \rightarrow_T
\sum_{i=0}^{\infty} {C_i} (f,g) {\hbar}^i.
\end{equation*}
Then, (\ref{ref38}) gives us an explicit form for $C_1(f,g)$:
\begin{equation} \label{amend4}
C_1(f,g)=B_1(f,g)-fT_1(g)+T_1(fg)-T_1(f)g.
\end{equation}
Let $(x_k)$ be local coordinates for $M$. 
Say we start with the bidifferential operator $B_1(f,g)$:
\begin{equation*}
B_1= \sum_{I,J} a_{I,J} \partial_I \otimes \partial_J,
\end{equation*}
where the $I,J$ are coordinate multi-indices, such that $a_\mathbf{0}$ is non-zero, i.e. there is a term with no derivatives. Firstly choosing $T_1(f)= a_\mathbf{0} f$, we have that $C_1(f,g)=B_1(f,g)-a_\mathbf{0} f$, so we can assume up to gauge equivalence that $B_1$ has no order zero term.

We now want to enforce:
\begin{equation} \label{amend5}
C_1(f,g)+C_1(g,f)=0, \; \; \forall f,g \in C^\infty(M)
\end{equation}
Using (\ref{amend4}), (\ref{amend5}) is now equivalent to the $T_1$ condition:
\begin{equation} \label{ref30}
T_1(fg)=fT_1(g)+T_1(f)g-B_1^+(f,g), \; \; \forall f,g  \in C^\infty(M).
\end{equation}
We will use this to define $T_1$ locally on smooth functions. 

Assume that $T_1$ is defined separately on monomials containing only $\{x_1,...,x_{n-1}\}$ and monomials containing only ${x_n}$, such that (\ref{ref30}) holds true in either case. Define $T_1$ on monomials of $\{x_1,...x_n\}$, expressible as $x_n^{k_1} p_1$, $p_1=p_1(x_1,...,x_{n-1})$, as:
\begin{equation} \label{ref31}
T(x_n^{k_1}p_1):=T(x_n^{k_1})p_1+x_n^{k_1}T(p_1)-B_1^+(x_n^{k_1},p_1).
\end{equation}
Take two arbitrary such monomials $x_n^{k_1}p_1$ and $x_n^{k_2}p_2$. We want to show that:
\begin{equation} \label{ref32}
T(x_n^{k_1} p_1 x_n^{k_2} p_2) = T(x_n^{k_1}p_1)x_n^{k_2} p_2+x_n^{k_1}p_1T(x_n^{k_2} p_2)-B_1^+(x_n^{k_1}p_1,x_n^{k_2} p_2).
\end{equation}
Now using our definition (\ref{ref31}), we have that:
\begin{equation} \label{ref33}
T_1(x_n^{k_1} p_1 x_n^{k_2} p_2) = T_1(x_n^{k_1+k_2})p_1p_2+x_n^{k_1+k_2}T_1(p_1 p_2)-B_1^+(x_n^{k_1+k_2},p_1p_2).
\end{equation}
Now reapplying (\ref{ref31}) to (\ref{ref32}) and (\ref{ref33}), we see that their equivalence can be simplified to:
\begin{multline*}
p_1p_2B_1^+(x_n^{k_1},x_n^{k_2})+x_n^{k_1+k_2}B_1^+(p_1,p_2)+B_1^+(x_n^{k_1+k_2},p_1p_2) \\
= B_1^+(x_n^{k_1},p_1)x_n^{k_2}p_2+x_n^{k_1}p_1B_1^+(x_n^{k_2},p_2)+B_1^+(x_n^{k_1}p_1,x_n^{k_2}p_2).
\end{multline*}
But then this just comes from the associativity of the star product for different bracketing of four terms, restricting operators to the symmetric part. 

By induction we need only define $T_1$ for monomials of the form $x_k^n$, again so that (\ref{ref30}) holds. Before doing this, recall that:
\begin{equation} \label{amend9}
B_1^+= \sum_{I,J} a_{I,J} \partial_{I} \otimes \partial_J.
\end{equation}
Now consider the cocycle condition (\ref{ref7}) for  $B_1$; noting that it still holds when the arguments are swapped, it must also hold  for the operator $B_1^+$. Imposing $h=1$ gives us:
\begin{equation} \label{amend8}
fB_1^+(g,1)-B_1^+(fg,1)=0, \; \; \forall f,g \in C^\infty(M),
\end{equation}
where we have cancelled terms by the symmetry of the operator.
Substituting (\ref{amend9}) into (\ref{amend8}), note that all terms except those with $\partial_J=\mathds{1}$ vanish, so that:
\begin{equation*}
f\sum_{I} a_{I,0}\partial_I(g)-\sum_{I}a_{I,0}\partial_{I}(fg)=0, \; \; \forall f,g \in C^\infty(M).
\end{equation*}
In other words, the operator $\sum_{I} a_{I,0} \partial_I$ is function linear, and hence is simply multiplication by $a_{0,0}$, but by gauge equivalence we have already chosen this to be $0$. Thus $B_1^+$ has no terms of the form $a_{I,0} \partial_I \otimes \mathds{1}$, and since  $a_{I,J}=a_{J,I}$, there are no terms of the form $a_{0,I}\mathds{1}\otimes \partial_{I}$  either.

We will now define $T_1$ on monomials of the form $x^n$. Take $T_1$ to vanish on constant functions and linear functions. We then define:
\begin{equation} \label{ref35}
T_1(x^n) := T_1(x^{n-1})x-B_1^+(x^{n-1},x), \; \;  n \geq 2.
\end{equation}
We want to show that:
\begin{equation} \label{ref34}
T_1(x^{n} x^{m})=T_1(x^{n})x^{m}+x^{n} T_1(x^{m})-B_1^+(x^{n},x^{m}), \; \; \forall n,m.
\end{equation}
Fix $n$ arbitrarily; use induction on $m$. Say $T_1$ is defined such that (\ref{ref34}) holds for all $m \leq k$. We have that:
\begin{equation*}
T_1(x^{n+k+1})=T_1(x^{n+k})x-B_1^+(x^{n+k},x).
\end{equation*}
This is equivalent to:
\begin{equation*}
T_1(x^{n})x^{k+1}+x^nT_1(x^{k+1})-B_1^+(x^{n},x^{k+1}),
\end{equation*}
if and only if the condition:
\begin{equation*}
xB_1^+(x^n,x^k)+B_1^+(x^{n+k},x)=B_1^+(x^n,x^{k+1})+x^nB_1^+(x,x^{k}),
\end{equation*}
is satisfied. But this is just simple associativity of the star product, again restricting to the symmetric part.

$T_1$ is now defined for local monomials, and by linearity for polynomials. The condition (\ref{ref30}) holds for polynomials $f$ and $g$ as both sides of the equality are bilinear in $(f,g)$. It is stated in \cite{Cattaneo} that this is enough to define a differential operator with the desired properties on all local smooth functions. Unconvinced with the reasoning, we provide our own explanation. 

It is sufficient to show that our $T_1$ is expressible as a differential operator on all monomials. Start with one variable, say $x_k$. Using (\ref{ref35}), we have that:
\begin{equation} \label{ref36}
T_1(x_k^n)=x_kT_1(x_k^{n-1})-B_1^+(x_k,x_k^{n-1})=...=-\sum_{i=1}^{n-1}x_k^{n-1-i}B_1^+(x_k,x_k^i).
\end{equation}
Recall the decomposition (\ref{amend9}) for the operator $B_1^+$. We have already shown, through the cocycle condition, that it does not contain terms of the form: $\alpha_j(\mathds{1}\otimes \partial_k^j +\partial_k^j \otimes \mathds{1})$. But then $B_1^+(x_k,x_k^i)$ can only be non-zero for a special case, namely, for terms of the form:  
\begin{equation} \label{amend11}
\beta_j(\partial_k \otimes \partial_k^j+\partial_k^j \otimes \partial_k),
\end{equation}
restricting to $1 \leq j  \leq n-1$. For $j=1$ (\ref{ref36}) becomes:
\begin{equation*}
-\sum_{i=1}^{n-1} 2 \beta_1 i x_k^{n-1-i} x_k^{i-1}= -\sum_{i=1}^{n-1} 2 \beta_1 i x_k^{n-2}= -{n(n-1)}\beta_1 x_k^{n-2}=-\beta_1(\partial_k)^2(x_k^n):= D_1^1(x_k^n),
\end{equation*}
whilst for $2\leq j \leq n-1$ it becomes:
\begin{multline*}
-\sum_{i=1}^{n-1} \beta_j x_k^{n-1-i} \frac{i!}{(i-j)!}x^{i-j}= -\beta_j \left( \sum_{i=j}^{n-1} \frac{i!}{(i-j)!} \right) x_k^{n-j-1} \\ = - \beta_j \frac{n(n-1)...(n-j)}{j+1} x_k^{n-j-1}=-\frac{\beta_j}{j+1}(\partial_{k})^{j+1}(x^n_k):=D_j(x_k^n),
\end{multline*}
where we have used a combinatorial identity for the sum of partial factorials to treat the summed term in brackets.  
Defining $D_{x_k}:=\sum_{j} D_j$, these neat simplifications give us an expression for $T_1$ on single coordinate monomials as a single differential operator. 

For a monomial of $x_k$ and $x_l$, $k \neq l$, say $x_k^i x_l^j$, we use the the condition (\ref{ref30}) and identical reasoning. We have:
\begin{multline*}
T_1(x_k^i x_l^j)=T_1(x_k^i)x_l^j+x_k^iT_1(x_l^j)-B_1^+(x_k^i,x_l^j)
=\frac{1}{2}(D_{x_k}+D_{x_l})(x_k^ix_l^j)+\sum_{\gamma, \delta} \alpha_{\gamma,\delta}(\partial_k)^\gamma\otimes(\partial_l)^\delta(x_k^i,x_l^j) 
\\= \frac{1}{2}(D_{x_k}+D_{x_l})(x_k^ix_l^j)+\sum_{\gamma, \delta} \alpha_{\gamma, \delta} (\partial_k)^\gamma(\partial_l)^\delta(x_k^ix_l^j):=D_{x_k, x_l}(x_k^i x_l^j).
\end{multline*}
Arguing inductively on variables we have an expression for $T_1$ as a differential operator $D_{x_1...x_n}$ on all variables with the required properties. 

 We have found a differential operator $D$ satisfying (\ref{ref30}) on all polynomials of $\{x_1,...,x_n\}$; we need to check that (\ref{ref30}) holds for all smooth functions $f$ and $g$. Because this an equality of functions, we only need to check at points, where both sides only depend on polynomials of sufficiently large Taylor expansions of $f$ and $g$, to which we apply our previous arguments. 
 
 More precisely, given a point $p \in M$ and $f, g\in C^\infty(M)$, consider the set of all polynomial functions $h$ such that $h$ agrees exactly with the first $n$ terms of the Taylor series of $f$ at $p$; let $f_n$ denote any arbitrary element of this set. As $D$ and $B_1^+$ are differential operators with some highest-order component, by Taylor's theorem, we know that there exists $k \in \mathbb{N}$ such that:
\begin{equation*}
D(f)(p) = D(f_n)(p), \; \; B_1^+(f,g)(p) = B_1^+(f_n,g_n)(p), \; \; \forall n \geq k.
\end{equation*}
We moreover know that there exists $m \geq k$ such that:
\begin{equation*}
D(f_n g_n)(p) = D((fg)_k)(p)=D(fg)(p), \; \; \forall n \geq m,
\end{equation*}
with $m$ chosen large enough for the Taylor series of $fg$ to be covered up to order $k$ by $f_n g_n$. Now we are done, using the fact that we have already proved (\ref{ref30}) for all polynomials; we have that:
\begin{multline*}
D(fg)(p) = D(f_m g_m)(p) \\ =f_m (p)D(g_m)(p)+ D(f_m)(p)g_m(p)-B_1^+(f_m,g_m)(p)= f(p)D(g)(p) +D(f)(p)g(p)-B_1^+(f,g)(p).
\end{multline*}
Finally, choose a coordinate covering for our manifold: \begin{equation*}
    M=\bigcup_i U_i.
\end{equation*} Now using the previous arguments for $\mathbb{R}^n$ locally, we obtain a collection of local differential operators $D_i$. Choose a partition of unity $\{\rho_i\}_{i \in I}$ subordinate to the cover $\bigcup_i U_i$, and define $D:= \sum_i \rho_i D_i $, noting that multiplication by $\rho_i$ makes $D_i$ into a global operation. Using that $\sum_i \rho_i =1$, we obtain the global relationship:
\begin{equation*}
D(fg)=fD(g)+D(f)g-B_1^+(f,g),
\end{equation*}
and as $D$ is locally a differential operator, it is a global differential operator with all the required properties.
\end{proof}

Because we are only considering star products up to gauge equivalence, Proposition \ref{prop1} implies that if we can quantize a Poisson structure with $B_1^-=\{.,.\}$, we can quantize with $B_1=\{.,.\}$. From here on we will just consider this.
\subsection{Formal Poisson structures}
We can now do something similar for Poisson structures:
\begin{defn}[Formal Poisson structures]
A formal Poisson structure is a formal power series $\pi_{\hbar}= \sum_{i=0}^{\infty} \hbar^i \pi_i$, where the $\pi_i$ are skew-symmetric bivector fields. Also the Schouten-Nijenhuis bracket of $\pi_{\hbar}$ with itself must vanish at each $\hbar$ order.
\end{defn}
Clearly, we do not need each $\pi_i$ to be a Poisson structure, only the leading order term $\pi_0$. For the actual structures we will deform, $\pi_0=0$. Then the formal power series can be thought of as deformations of the zero Poisson structure. When are formal Poisson structures equivalent? The recipe is as follows:
\begin{enumerate} 
\item Given a diffeomorphism $\phi: M \rightarrow N$, we have the pushforward $\phi_*$, where $\phi_* [X,Y]_S=[\phi_* X, \phi_* Y ]_S$. 
\item Because the $[.,.]_S$ bracket of a Poisson structure $\pi$ with itself is invariant of the pushforward, the group of diffeomorphisms of $M$ acts on Poisson structures, with $\pi_{\phi}=\phi_* \pi$. 
\item We can now generalise this action to formal Poisson structures.
\end{enumerate}
\begin{defn}[Equivalent formal Poisson structures] \label{ref22}
The formal Poisson structures $\pi_{\hbar}$ and $\pi_{\hbar}'$ are equivalent if there exists a formal diffeomorphism between them, so that:
\begin{equation*}
\pi_{\hbar}= \sum_{n=0}^{\infty} \frac{{\hbar}^n}{n!} (\mathcal{L}_X)^n \pi_{\hbar}^{'} =  \text{exp}(\hbar \mathcal{L}_X) \pi_{\hbar}'.
\end{equation*}
Here $X$ is a formal vector field lying in $\mathfrak{X}^1(M)[\![\hbar]\!]$ and $\mathcal{L}_X$ is the Lie derivative.
\end{defn}
\subsection{Kontsevich's theorem}
We now have everything necessary to formulate Kontsevich's result.
\begin{thm}[Kontsevich, \cite{Kontsevich}] \label{ref27}
Every Poisson manifold has a deformation quantization. Furthermore, the set of gauge equivalence classes of star products on a smooth manifold $M$ can be naturally identified with the set of equivalence classes of Poisson structures depending formally on $\hbar$:
\begin{equation*}
\alpha_{\hbar} = \alpha(\hbar) = \alpha_1 \hbar + \alpha_2 {\hbar}^2+... \in \Gamma (\wedge^2 TM) [\![\hbar]\!], \;  \; [\alpha,\alpha]_S = 0 \in \Gamma (\wedge^3 TM) [\![\hbar]\!],
\end{equation*}
modulo the group of formal diffeomorphisms.
\end{thm}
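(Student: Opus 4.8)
\section*{Proof proposal}

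The plan is to recast both sides of the correspondence as Maurer--Cartan elements in suitable differential graded Lie algebras, and then transport one deformation problem to the other by an $L_\infty$-morphism. First I would identify the two controlling DGLAs. On the Poisson side, the formal Poisson structures $\alpha_\hbar = \sum_{i\geq 1}\hbar^i\alpha_i$ with $[\alpha_\hbar,\alpha_\hbar]_S=0$ are exactly the Maurer--Cartan elements of the polyvector fields $T_{poly}(M)=\bigoplus_{k} \mathfrak{X}^{k+1}(M)$, graded so that bivectors lie in degree $1$, equipped with the Schouten--Nijenhuis bracket and zero differential; here the Maurer--Cartan equation $\text{d}a+\frac12[a,a]=0$ reduces to $[\alpha_\hbar,\alpha_\hbar]_S=0$, and the equivalence of Definition \ref{ref22} is precisely the gauge action of $\exp(\hbar\mathcal{L}_X)$ by degree-zero elements $X\in\mathfrak{X}^1(M)$. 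On the star-product side, writing $\star = \mu_0 + \sum_{i\geq 1}\hbar^i B_i$ with $\mu_0(f,g)=fg$, the associativity condition (\ref{ref6}) is the Maurer--Cartan equation in the Hochschild complex $D_{poly}(M)$ of polydifferential operators, with $\text{d}$ the Hochschild differential and $[.,.]$ the Gerstenhaber bracket; gauge equivalence of Definition \ref{ref24} is again the corresponding degree-zero gauge action, and Proposition \ref{prop1} lets us normalise at first order so that only $B_1^-$ survives.

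Next I would invoke the general principle of deformation theory that the set of Maurer--Cartan elements modulo gauge depends only on the $L_\infty$-isomorphism class of the DGLA. Hence it suffices to produce an $L_\infty$-quasi-isomorphism $U: T_{poly}(M) \to D_{poly}(M)$. A naive candidate is the HKR map sending a polyvector field to its antisymmetrised polydifferential operator; by the HKR theorem this is a quasi-isomorphism of complexes and already matches the two sides on cohomology. The obstacle, flagged in the introduction, is that HKR is \emph{not} a morphism of Lie algebras --- it does not intertwine $[.,.]_S$ with the Gerstenhaber bracket --- so it cannot by itself carry solutions of one Maurer--Cartan equation to the other.

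The heart of the proof is therefore Kontsevich's construction, on $M=\mathbb{R}^d$, of an $L_\infty$-morphism $U=(U_1,U_2,U_3,\dots)$ whose linear term $U_1$ is HKR. Each $U_n$ is a sum over a finite set of admissible directed graphs $\Gamma$, where $\Gamma$ prescribes a polydifferential operator built from the $n$ input polyvector fields, weighted by a real coefficient $w_\Gamma$. I would define $w_\Gamma$ as the integral, over the configuration space of $n$ points in the upper half-plane with the remaining vertices on the real axis, of a product of the closed angle one-forms (Kontsevich's propagator). The $L_\infty$-relations for $U$ then follow from Stokes' theorem: the integral of an exterior derivative over the compactified configuration space vanishes, and its boundary strata --- where points collide --- reproduce exactly the quadratic $L_\infty$-identities, with the single anomalous boundary contribution vanishing by a symmetry argument (Kontsevich's vanishing lemma). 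I expect this Stokes/boundary analysis on the Fulton--MacPherson-type compactification to be by far the main obstacle, both in making the configuration-space integrals converge and in showing that only the admissible faces contribute.

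Finally, with $U$ in hand on $\mathbb{R}^d$, applying a Poisson bivector $\alpha$ (so that $\alpha_\hbar=\hbar\alpha$) gives the explicit star product $f\star g = fg + \sum_{n\geq 1} \hbar^n\, U_n(\alpha,\dots,\alpha)(f,g)$, proving existence; and the bijection of Maurer--Cartan moduli induced by the quasi-isomorphism $U$ yields the claimed natural identification of equivalence classes. To pass from $\mathbb{R}^d$ to an arbitrary Poisson manifold I would globalise the formality morphism by formal geometry, resolving $T_{poly}$ and $D_{poly}$ by their fibrewise analogues along the diagonal over $M$ and using a flat Fedosov-type connection to glue the local maps into a global $L_\infty$-quasi-isomorphism; checking compatibility of $U$ with this connection is the second technical hurdle.
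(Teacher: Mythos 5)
Your proposal follows essentially the same route as the paper: both sides are recast as Maurer--Cartan moduli in $T_{poly}(M)[\![\hbar]\!]$ and $D_{poly}(M)[\![\hbar]\!]$, and the identification is transported through Kontsevich's $L_{\infty}$-quasi-isomorphism with HKR first term, which is exactly how the paper deduces Theorem \ref{ref27} from Theorem \ref{ref26} and the formality theorem. You in fact go beyond the paper --- which explicitly declares the proof of formality out of scope --- by correctly sketching the graph-weight/configuration-space/Stokes argument and the Fedosov-style globalisation; the only slip is the omitted factor of $1/n!$ in your star-product formula, cf.\ the map $x \rightarrow \sum_{n=1}^{\infty}\frac{1}{n!}f_n(x,\dots,x)$ in Theorem \ref{ref26}.
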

As an example, we have a deformation quantization of our constant Poisson structure on $\mathbb{R}^n$, namely:
\begin{exmp}[The Moyal product] 
\begin{equation} \label{ref13}
f \star g:= \sum_{n=0}^{\infty} \frac{{\hbar}^n}{n!} \left( \prod_{k=1}^{n} \alpha^{i_k j_k} \right) \left( \prod_{k=1}^{n} \partial_{i_k} f \right) \left( \prod_{k=1}^{n} \partial_{j_k} g \right).
\end{equation}
\end{exmp} 
This was known before Kontsevich, but we will see later how it is rederived from his formula. It is the simplest non-trivial example we can consider.

\section{The formality theorem}
The goal of the following sections is to state Kontsevich's formality theorem, via describing the richer mathematical structure referred to in the introduction.
\subsection{Differential graded Lie algebras}
Recall that a \emph{graded vector space} $V$ is just a vector space that breaks up into a direct sum of the form: $V=\bigoplus_{n \in \mathbb{Z}} V^n$. If $v \in V^n$, we say $\text{deg}(v)=\bar{v}=n$.  We now have:
\begin{defn}[Differential graded Lie algebra] \label{ref18}
A differential graded Lie algebra $(V, \textnormal{d}, [.,.])$ is a graded vector space $V=\bigoplus_{n \in \mathbb{Z}} V^n$, together with a linear map $\textnormal{d}:V \rightarrow V$, and a bilinear map $[.,.]: V^i \otimes V^j \rightarrow V^{i+j}$, satisfying all the following conditions for any homogeneous $x,y,z \in V$: \begin{enumerate}
    \item \textnormal{d}, the differential, makes $(V,\textnormal{d})$ into a  chain complex: $\textnormal{d}(V^n) \in V^{n+1}$ and $\textnormal{d}^2=0$.
    \item $[.,.]$ is graded skew-symmetric: $[x,y]=(-1)^{\bar{x}\bar{y}+1}[y,x]$.
    \item $[.,.]$ satisfies the graded  Jacobi identity: $[x,[y,z]]=[[x,y],z]+(-1)^{\bar{x} \bar{y}}[y,[x,z]]$.
    \item $[.,.]$ and \textnormal{d} satisfy the graded Leibniz rule: $\textnormal{d}[x,y]=[\textnormal{d}x,y]+(-1)^{\bar{x}}[x,\textnormal{d}y]$.
\end{enumerate}
\end{defn}
Simply put, a DGLA is a complex with a compatible graded Lie algebra structure. If there is no differential, but just a bracket satisfying the above rules, this is a \emph{graded Lie algebra}, or GLA. Any GLA can be made into a trivial DGLA using the differential d $=0$. We say that elements in the kernel of d are \emph{closed}, or \emph{cocycles}, and that elements in image of d are \emph{exact}, or \emph{coboundaries}.
\begin{defn}[Cohomology complex]
Given a DGLA $V$, there is the cohomology quotient group:
\begin{equation*}
H^i(V) := \left. \textnormal{Ker}(\textnormal{d}: V^i \rightarrow V^{i+1})  \middle/ \textnormal{Im}(\textnormal{d}:  V^{i-1} \rightarrow V^i ) \right..
\end{equation*}
Then the $H := \bigoplus_{n} H^i(V)$ is itself a DGLA. It has an obvious graded vector space structure; an inherited bracket defined on equivalences classes $|x|,|y|$ by $[|x|,|y|]_H$ $:= |[x,y]_V|$; and finally a differential \textnormal{d} just set to be $0$.
\end{defn}
 Now a \emph{DGLA morphism} is a linear map $\phi: V_1 \rightarrow V_2$ of degree $0$, i.e. $\phi(V_1^i) \subset V_2^i$, that commutes with the brackets and differentials. Obviously a DGLA morphism $\phi:V_1 \rightarrow V_2$ induces a morphism $H(\phi): H_1 \rightarrow H_2$, between the cohomologies. 
\begin{defn}[Quasi-isomorphism] \label{ref19}
A quasi-ismorphism of DGLAs is a morphism of DGLAs that induces an isomorphism between the cohomologies.
\end{defn}
It is important to note that this morphism of DGLAs is compatible with the brackets, of both the DGLAs and cohomologies. On the other hand, a \emph{quasi-isomorphism of complexes} only respects the differential. This distinction will soon be important.
\subsection{DGLA examples}
As stated in the introduction, deformations of mathematical structures are controlled by DGLAs \cite{Manetti}. Here we are concerned with deformations of function multiplication and the zero Poisson structure, so it is natural DGLAs arise. Kontsevich identified two key DGLAs associated to these deformation problems.
\begin{defn}[$\mathbf{T_{poly}(M)}$]
Let $T_{poly}^n(M):=\mathfrak{X}^{n+1}(M)$. Take the graded vector space:
\begin{equation*}
    T_{poly}(M) := \bigoplus_{n=-1}^{\infty} T_{poly}^n(M).
\end{equation*}
The Schouten-Nijenhuis bracket for homogeneous elements is already defined by (\ref{ref15}). $T_{poly}(M)$ is a DGLA with $[.,.]_S$ and the differential \textnormal{d} $=0$, the DGLA of multivector fields.
\end{defn}
Note that we need to shift the degree in the direct sum to respect the definition of the bracket in Definition (\ref{ref18}). The cohomology just coincides with the DGLA itself. For our second example we must first define the Hochschild DGLA corresponding to an associative unital algebra $A$ (over a ring $\mathbb{K}$); it is just the DGLA on the complex of multilinear maps from $A$ to itself.
\begin{defn}[Hochschild DGLA] \label{amend1}
Let $C^n(A):=\text{Hom}_{\mathbb{K}}(A^{\otimes(n+1)},A)$. Take the graded vector space:
\begin{equation*}
C:=\bigoplus_{n=-1}^{\infty}C^n.  
\end{equation*}
Define a composition between $\phi \in C^{m_1}(A)$ and $\psi \in C^{m_2}(A)$ as:
\begin{equation} \label{ref16}
(\phi \circ \psi)(f_0,....,f_{{m_1}+{m_2}}):= \sum_{i=0}^{m_1} (-1)^{(m_2)(i)}\phi(f_0,...,f_{i-1},\psi(f_i,...,f_{i+m_2}),f_{i+m_2+1},...,f_{m_1+m_2}),
\end{equation}
for arbitrary $f_i, 0 \leq i \leq m_1+m_2+1$, elements of $A$. The Gerstenhaber bracket on homogeneous elements: $[.,.]_G:C^{m_1}\otimes C^{m_2} \rightarrow C^{m_1+m_2}$, is given by:
\begin{equation*}
[\phi,\psi]_G:=\phi \circ \psi - (-1)^{m_1 m_2} \psi \circ \phi.
\end{equation*}
We also have a differential \textnormal{d}: $C^n\rightarrow C^{n+1}$. An explicit formula can be written as:
\begin{multline} \label{ref17}
(-1)^n(\textnormal{d} \phi) (f_0,...,f_{n}) :=  \sum_{i=0}^{n-1}(-1)^{i+1}\phi(f_0,...,f_{i-1},f_if_{i+1},...,f_{n})\\ +f_0\phi(f_1,...,f_{n})+(-1)^{n-1}\phi(f_0,...,f_{n-1})f_{n}=(-\phi \circ \mu+(-1)^n \mu \circ  \phi )(f_0,...,f_{n})=-[\phi,\mu]_G(f_0,...,f_{n}),
\end{multline}
where $\mu \in C^1(A)$ denotes the ordinary product for $A$. Now $C$ is a DGLA with $[.,.]_G$ and the differential (\ref{ref17}).
\end{defn}
The fact this is a DGLA can be proved, for example in \cite{Cattaneo}, where they also provide a nice pictorial representation of the composition map (\ref{ref16}). There is again a shift in degree; we naturally interpret  $C^{-1}(A)$ as just $A$ itself. Note also that the differential has been simply expressed in terms of the Gerstenhaber bracket. We now have:
 \begin{lemma} \label{amend12}
  An element of $C^1(A)$ is an associative multiplication of the algebra $A$ if and only if the Gerstenhaber bracket with itself vanishes. 
 \end{lemma}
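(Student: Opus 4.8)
The plan is to reduce the statement to a single explicit computation of the Gerstenhaber bracket of an element $\mu \in C^1(A)$ with itself, and then to recognise the result as the associator of the product $\mu$. The whole lemma is really the observation that the Hochschild differential/bracket was set up so as to encode exactly associativity at this lowest degree.

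First I would specialise the bracket definition to the case at hand. Since $\mu$ has degree $m_1 = m_2 = 1$, the sign $(-1)^{m_1 m_2} = -1$, so
\begin{equation*}
[\mu,\mu]_G = \mu \circ \mu - (-1)^{1\cdot 1}\,\mu \circ \mu = 2\,(\mu \circ \mu).
\end{equation*}
Thus the vanishing of $[\mu,\mu]_G$ is equivalent to the vanishing of $\mu \circ \mu$ (using that $2$ is invertible in $\mathbb{R}$, the ground field of interest; for a general ring one should either assume $2$ is not a zero divisor or work directly with $\mu \circ \mu$).

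Next I would unpack the composition formula (\ref{ref16}) in the case $m_1 = m_2 = 1$. The sum runs over $i = 0, 1$, and the signs $(-1)^{m_2 \cdot i} = (-1)^i$ give
\begin{equation*}
(\mu \circ \mu)(f_0, f_1, f_2) = \mu(\mu(f_0,f_1), f_2) - \mu(f_0, \mu(f_1,f_2)).
\end{equation*}
Writing $\mu(a,b) = a \cdot b$ for the candidate product, the right-hand side is exactly the associator $(f_0 \cdot f_1)\cdot f_2 - f_0 \cdot (f_1 \cdot f_2)$.

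Finally I would conclude by noting that this expression vanishes for all $f_0, f_1, f_2 \in A$ precisely when $\mu$ is associative, and combining with the first step, this holds if and only if $[\mu,\mu]_G = 0$. There is no real conceptual obstacle here; the only point I would take care over is the sign bookkeeping in the composition formula, both the overall factor $2$ coming from $(-1)^{m_1 m_2}$ and the relative minus sign between the two terms of $\mu \circ \mu$, since it is precisely these signs that make the bracket reproduce the associator rather than some symmetrised variant of it.
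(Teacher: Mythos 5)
Your proposal is correct and follows essentially the same route as the paper's proof (itself taken from \cite{Cattaneo}): specialise the Gerstenhaber bracket to degree one to get $[\mu,\mu]_G = 2\,(\mu\circ\mu)$, unpack the composition formula (\ref{ref16}) to identify $\mu\circ\mu$ with the associator, and conclude. Your extra remark about the invertibility of $2$ in the ground ring is a sensible precaution the paper omits (harmlessly, since $\mathbb{K}=\mathbb{R}$ here), but otherwise the two arguments coincide step for step.
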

\begin{proof}
\emph{(Taken from \cite{Cattaneo}).} Given $\phi \in C^1(A)$, it is clear by Definition \ref{amend1} that $\phi$ is an associative multiplication of $A$ if and only if it satisfies the associativity condition:
\begin{equation} \label{amend2}
\phi(\phi(f,g),h)-\phi(f, \phi(g,h))=0, \; \; \forall f,g,h \in A.
\end{equation}
But now:
\begin{equation} \label{amend3}
[\phi, \phi]_G(f,g,h)=2\phi \circ \phi (f,g,h)=2( \phi(\phi(f,g),h)-\phi(f,\phi(g,h))).
\end{equation}
Clearly, (\ref{amend3}) vanishes $\forall f,g,h \in A$ if and only if condition (\ref{amend2}) holds.
\end{proof}

Let us now define $D_{poly}(M)$, simply restricting our last example to multidifferential operators. From here on $A$ denotes the specific algebra $C^{\infty}(M)$ and take $\mathbb{K}=\mathbb{R}$.
\begin{defn}[$\mathbf{D_{poly}(M)}$]
Let $D_{poly}^n(M)$, $n \geq  -1$, be the subset of $\text{Hom}(A^{\otimes(i+1)},A)$ containing only the multidifferential operators. An expression for the general form of these maps, taken from \cite{Kontsevich}, is:
\begin{equation}
f_0 \otimes... \otimes f_n \rightarrow \sum_{(I_0,...,I_n)}  C^{I_0,...,I_n}\partial_{I_0}(f_0)...\partial_{I_n}(f_n),
\end{equation}
where the sum is finite, the $I_k$ denote multi-indices, and the $f_k$, $C^{I_0,...I_n}$ are functions in the local coordinates.  Now take the graded vector space:
\begin{equation*}
D_{poly}(M):= \bigoplus_{n=-1}^{\infty} D_{poly}^n(M).
\end{equation*}
$D_{poly}(M)$ is closed under the Gerstenhaber bracket and the Hochschild differential. It is therefore a DGLA, with $[.,.]_G$ and \textnormal{d}, (\ref{ref17}).
\end{defn}
Why have we defined these DGLAs? Several relationships can already be stated. Given a star product $f\star g= fg+B_1(f,g)\hbar+...$, the bidifferential operator $B_1$ is in $D^1_{poly}(M)$. If we use a gauge transformation $T$ to transform $\star$ into $\star'$, as in (\ref{ref37}), the formula for $C_1$ gives:
\begin{equation} \label{ref41}
    C_1(f,g)-B_1(f,g)=T_1(fg)-T_1(f)g-fT_1(g)=\text{d}(-T_1)(f,g),
\end{equation}
an expression for gauge equivalence at order $\hbar$ in terms of the differential. So the gauge equivalence classes of closed bidifferential operators are just $H^1(D_{poly}(M))$.

Clearly Poisson structures on $M$ lie in $T_{poly}(M)$. Now $D_{poly}(M)[\![\hbar]\!]$ and $T_{poly}(M)[\![\hbar]\!]$ are also DGLAs, just tensoring with $\mathbb{R}[\![\hbar]\!]$ and extending operations $\mathbb{R}[\![\hbar]\!]$ linearly. Then formal Poisson structures that are deformations of the zero Poisson structure lie in $\hbar T_{poly}^1(M)[\![\hbar]\!]$ and star products correspond to elements of $\hbar D_{poly}^1(M)[\![\hbar]\!]$.

\subsection{The Maurer-Cartan equation}

We now explain our assertion in the introduction that Poisson structures and star products correspond to solutions of the Maurer-Cartan equation.
\begin{defn}[The Maurer-Cartan equation]
Given a DGLA $V$ there is the Maurer-Cartan equation:
\begin{equation*} 
\textnormal{d}a+\frac{1}{2}[a,a]_V=0, \; \;  a \in {V^1}.
\end{equation*}
For $V=D_{poly}(M)[\![\hbar]\!]$ or $T_{poly}(M)[\![\hbar]\!]$, the Maurer-Cartan set $MC(V)$ equals $\gamma \in \hbar V^1$ such that \textnormal{d}$\gamma+\frac{1}{2}[\gamma,\gamma]=0$.
\end{defn}
Consider the Maurer-Cartan sets of our two important DGLAs:
\begin{exmp}[$\mathbf{MC(T_{poly}(M))}$]
We know that $\pi$ is a Poisson structure $\iff$  $[\pi,\pi]_S=0$. Since for $\hbar T_{poly}(M)[\![\hbar]\!]$, $\textnormal{d}=0$, we see that formal Poisson structures with $\pi_0=0$ coincide with $MC(T_{poly}^1(M)[\![\hbar]\!])$. 
\end{exmp}
\begin{exmp}[$\mathbf{MC(D_{poly}(M))}$]
If $B \in \hbar D_{poly}^1(M)[\![\hbar]\!]$, by Lemma \ref{amend12} we know that the deformation $\mu+B$ is associative $\iff$ $[\mu+B,\mu+B]_G=0$. Now $[\mu,\mu]_G=0$, and as  $\bar \mu  = \bar B = 1$, we also have that $\text{d}(B)=[\mu,B]_G$ and $[\mu,B]_G=[B,\mu]_G$. Thus $B$ defines a star product $\mu+B \iff B \in MC(D_{poly}^1(M)[\![\hbar]\!])$.
\end{exmp}
\subsection{Gauge groups}
Finally, we also have gauge groups for our DGLAs. These further translate the previous discussions of gauge equivalence into our new framework. Take $G^{0}( T_{poly}(M)[\![\hbar]\!])$ to be the group of formal diffeomorphisms given in Definition \ref{ref22}; the group structure comes from the Baker-Campbell-Hausdorff formula:
\begin{equation*}
\textnormal{exp}(\hbar X) \textnormal{exp}(\hbar Y):= \textnormal{exp}\left( \hbar X + \hbar  Y = \frac{1}{2} \hbar [X,Y]+...\right).
\end{equation*}
Because the exponential of a derivation is an automorphism of the DGLA, the bracket and Maurer-Cartan set is preserved by $G^0(T_{poly}(M)[\![\hbar]\!])$. We can thus define the quotient space:
\begin{equation*} \label{ref23}
    \overline{MC}(T_{poly}(M)[\![\hbar]\!]) := \frac{MC(T_{poly}(M)[\![\hbar]\!])}{G^0( T_{poly}(M)[\![\hbar]\!])},
\end{equation*}
corresponding with formal Poisson structure equivalence classes. Similarly, define:
\begin{equation*}
G^0(D_{poly}(M)[\![\hbar]\!]) \\ := \{ \phi: D_{poly}(M)[\![\hbar]\!] \rightarrow D_{poly}(M)[\![\hbar]\!] : \phi = e^{\hbar[S,.]},\; \; S \in D_{poly}^0(M)[\![\hbar]\!]  \}.
\end{equation*}
Then taking the quotient set $\overline{MC}(D_{poly}(M)[\![\hbar]\!])$, elements in the same equivalence class correspond to gauge equivalent star products, in the sense of Definition \ref{ref24} (for a proof see \cite{Waldmann}). In fact, these constructions exist for any DGLA, but for simplicity we restrict to our examples.
\subsection{The HKR map}
From here on, we may sometimes be somewhat informal in our definitions, as anything defined for $T_{poly}(M)$ or $D_{poly}(M)$ can just be extended linearly to the tensors with $\mathbb{R}[\![\hbar]\!]$. Of course, our goal is ultimately to associate our two deformation problems. The first step towards achieving this is the HKR theorem.
\begin{thm}[Hochschild-Konstant-Rosenberg, \cite{Hochschild}]  \label{ref25}
Define the $A$-linear map $U_1: T_{poly}(M) \rightarrow D_{poly}(M)$ on homogeneous elements in $T_{poly}(M)$ as :
\begin{equation*} \label{ref29}
(U_1(X_0 \wedge ... \wedge X_n))(f_0,...,f_n) := \frac{1}{(n+1)!}\sum_{\sigma \in S_{n+1}} \text{sgn}(\sigma) \prod_{i=0}^{n} X_{\sigma(i)}(f_i),
\end{equation*}
with $f_0,...,f_n$ functions on $M$ and $\sigma$ denoting a permutation in the set of permutations of $n$ elements, $S_n$. Then this map is a quasi-isomorphism of complexes.
\end{thm}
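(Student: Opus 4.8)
The plan is to verify the two defining properties of a quasi-isomorphism of complexes separately: that $U_1$ is a chain map, and that the induced map on cohomology is an isomorphism. Since the differential on $T_{poly}(M)$ is zero, its cohomology is just $T_{poly}(M)$ itself, so the second property amounts to the assertion that $U_1$ identifies $T_{poly}(M)$ with the Hochschild cohomology $H(D_{poly}(M))$ --- this is the substantive content, the classical Hochschild--Kostant--Rosenberg isomorphism. Before either step I would record that $U_1$ is well defined and of degree $0$: each factor $X_{\sigma(i)}(f_i)$ is a first-order operator applied to $f_i$, so the output is a multidifferential operator in $n+1$ arguments, landing in $D_{poly}^n(M)$; and the explicit antisymmetrization $\sum_{\sigma}\text{sgn}(\sigma)$ makes the formula totally skew in the $X_j$, so it genuinely factors through $\wedge^{n+1}TM = T_{poly}^n(M)$.

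For the chain-map property, since $\text{d}_T = 0$ it suffices to show $\text{d}_D \circ U_1 = 0$, i.e. that $U_1$ of any multivector field is a Hochschild cocycle. First I would check this for a single vector field $X$, where the formula (\ref{ref17}) for $\text{d}(U_1(X))(f_0,f_1)$ reduces, up to sign, to $X(f_0)f_1 - X(f_0 f_1) + f_0 X(f_1)$, which vanishes by the Leibniz rule. The general case is the same phenomenon organized combinatorially: in $\text{d}_D(U_1(X_0 \wedge \cdots \wedge X_n))$ each merging term $f_i f_{i+1}$ is expanded by Leibniz into two contributions, and the total antisymmetrization in the $X_j$ forces these against the two boundary terms so that everything cancels in pairs. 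I would present this as a direct computation exploiting skew-symmetry together with the derivation property of each $X_j$.

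The hard part is the cohomology isomorphism, which I would establish by exhibiting a quasi-inverse together with a chain homotopy. Let $\rho: D_{poly}(M) \to T_{poly}(M)$ be the \emph{antisymmetrization} (principal-symbol) map, sending a multidifferential operator to the totally skew, first-order-in-each-slot part of its top-order symbol. A direct check gives that $\rho$ is a chain map with $\rho \circ U_1 = \text{id}$ on $T_{poly}(M)$, which already yields injectivity of $H(U_1)$. The real content is a chain homotopy $h$ on $D_{poly}(M)$ satisfying $U_1 \circ \rho - \text{id} = \text{d}_D h + h\, \text{d}_D$; granting this, $U_1$ and $\rho$ are mutually inverse homotopy equivalences, so $H(U_1)$ is an isomorphism. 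I would build $h$ locally on a chart $U \cong \mathbb{R}^d$, where the computation reduces to the standard Koszul-resolution calculation of the Hochschild cohomology of a smooth algebra; the contracting homotopy of the Koszul complex exhibits the higher-order and non-skew components of any cocycle as coboundaries.

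Two points then need care. First, the passage from the purely algebraic (polynomial, or formal power series) Koszul argument to genuine smooth functions: I would reduce each cochain to its finite jets by Taylor expansion and Borel's lemma, exactly as in the localization step used for Proposition \ref{prop1}. Second, globalization over a cover $M = \bigcup_i U_i$: patching the local homotopies with a subordinate partition of unity assembles the local identities into a global chain homotopy. The construction of this Koszul homotopy together with the analytic reduction to jets is where essentially all the difficulty resides; by contrast the chain-map property of $U_1$ and the identity $\rho \circ U_1 = \text{id}$ are formal.
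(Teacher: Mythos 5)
Your outline is the classical, full-strength HKR argument, and it is sound as far as it goes --- but it takes a genuinely different (and more ambitious) route than the paper, which explicitly declines to prove the theorem in all degrees: it cites Kontsevich's abstract proof for the general case and proves only the degree-zero and degree-one statements, the ones deformation quantization actually uses. There, degree zero is the observation that a Hochschild $0$-cocycle is precisely a derivation, i.e.\ a vector field; in degree one, injectivity is Lemma \ref{lem1} (gauge invariance of the induced bracket $B_1^-$), surjectivity is Proposition \ref{prop1} --- an explicit elementary construction of a differential operator $T_1$ whose coboundary kills the symmetric part of a cocycle, built by induction on monomials and extended to smooth functions by Taylor jets and a partition of unity --- and Proposition \ref{prop5} closes the loop by checking directly, via permutations of the cocycle identity, that a skew-symmetric Hochschild cocycle is a biderivation, hence a bivector field. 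Your route instead handles all degrees at once: chain-map property of $U_1$, the antisymmetrization $\rho$ with $\rho \circ U_1 = \mathrm{id}$, and a Koszul chain homotopy, localized and then glued. What your approach buys is the full theorem and contact with the standard literature; what the paper's buys is complete, self-contained detail in exactly the degrees governing the Maurer--Cartan correspondence at order $\hbar$, with no homological machinery, and indeed your localization step (jets plus partition of unity) is the very technique the paper deploys inside Proposition \ref{prop1}, so the analytic core overlaps. Three caveats on your side. First, the central step --- the Koszul contracting homotopy adapted to multidifferential operators on $C^\infty(\mathbb{R}^d)$ --- is only cited, not constructed, and as you concede this is where essentially all the difficulty resides; as written your text is a proof outline rather than a proof. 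Second, extracting the ``first-order-in-each-slot part of the symbol'' is not coordinate-invariant (under a change of chart, higher-order-in-a-slot terms shed first-order contributions that total antisymmetrization across slots does not kill), so you should either fix a chart throughout the local argument or define $\rho$ as the plain antisymmetrization of the cochain, which lands in multivector fields only on cocycles but suffices for the statement about cohomology. Third, Borel's lemma is overkill: a multidifferential operator has finite order, so at any point both sides of your identities depend only on finite Taylor polynomials, and Taylor's theorem alone carries the reduction, exactly as in the paper's argument for Proposition \ref{prop1}.
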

This was stated originally in \cite{Hochschild} in a different setting. Kontsevich gives a proof for smooth manifolds, in \cite{Kontsevich}, but the proof is quite abstract; the relation to the original theorem is not even completely clear. Here we will only prove the HKR theorem at degrees zero and one.
\begin{prop}
\begin{equation*}
T_{poly}^0(M) \cong H^0(D_{poly}(M)).
\end{equation*}
\end{prop}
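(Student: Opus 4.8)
The plan is to unwind both sides into concrete objects and then match them through the degree-zero part of the HKR map $U_1$. On the left, by definition $T_{poly}^0(M) = \mathfrak{X}^1(M) = \Gamma(TM)$ is the space of smooth vector fields. On the right, recall that $D_{poly}^{-1}(M) = A = C^\infty(M)$, that $D_{poly}^0(M)$ consists of differential operators $A \to A$, and that $D_{poly}^1(M)$ consists of bidifferential operators, so that
\begin{equation*}
H^0(D_{poly}(M)) = \frac{\textnormal{Ker}(\textnormal{d}: D_{poly}^0 \to D_{poly}^1)}{\textnormal{Im}(\textnormal{d}: D_{poly}^{-1} \to D_{poly}^0)}.
\end{equation*}
First I would compute the denominator. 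Specialising the Hochschild differential (\ref{ref17}) to $\phi = a \in D_{poly}^{-1} = A$, interpreted as a nullary operator, one finds $(\textnormal{d}a)(f) = fa - af$, which vanishes identically because $A$ is commutative. Thus the image is zero and $H^0(D_{poly}(M)) = \textnormal{Ker}(\textnormal{d}: D_{poly}^0 \to D_{poly}^1)$.

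Next I would identify this kernel. Specialising (\ref{ref17}) to $\phi = D \in D_{poly}^0$ gives
\begin{equation*}
(\textnormal{d}D)(f,g) = D(fg) - fD(g) - D(f)g,
\end{equation*}
so $\textnormal{d}D = 0$ is exactly the Leibniz rule; equivalently, $D$ is a derivation of $A$. It then remains to show that a differential operator which is a derivation is precisely a vector field. Writing $D = \sum_I c_I \partial_I$ in local coordinates and testing the Leibniz rule on coordinate monomials forces every coefficient of order $\geq 2$ to vanish (the product rule produces mixed derivatives of $f$ and $g$ unaccounted for by $fD(g)+D(f)g$), while the order-zero term vanishes since $D(1) = 2D(1)$. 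This leaves $D = c^i \partial_i \in \mathfrak{X}^1(M)$, and conversely every vector field is a derivation. Hence $\textnormal{Ker}(\textnormal{d}: D_{poly}^0 \to D_{poly}^1) = \mathfrak{X}^1(M) = T_{poly}^0(M)$.

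Finally I would package this identification through the map of Theorem \ref{ref25}. At degree zero the HKR formula reduces to $U_1(X)(f) = X(f)$, so $U_1$ sends a vector field to its action as a first-order differential operator, landing in the kernel precisely because vector fields are derivations. The map $U_1$ is injective (if $X(f) = 0$ for all $f$ then $X = 0$) and, by the previous paragraph, surjective onto the space of closed degree-zero cochains, yielding the claimed natural isomorphism $T_{poly}^0(M) \cong H^0(D_{poly}(M))$. I expect the only genuine content to be the middle step: verifying that, within the class of differential operators, the Leibniz condition collapses a general operator to a pure first-order vector field. This is elementary but is where the restriction to differential operators (rather than arbitrary linear maps) truly enters; everything else is bookkeeping with the Hochschild differential.
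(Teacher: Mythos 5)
Your proof is correct and takes essentially the same route as the paper: identify $H^0(D_{poly}(M))$ with the derivations among differential operators, then match them with vector fields via the degree-zero component of the HKR map. You in fact supply two details the paper leaves implicit---the vanishing of $\textnormal{Im}(\textnormal{d}\colon D_{poly}^{-1}\to D_{poly}^{0})$ by commutativity of $C^{\infty}(M)$, and the coordinate check that a differential operator satisfying the Leibniz rule must be first order---so your write-up is, if anything, more complete than the original.
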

\begin{proof}
\begin{itemize}
    \item For a morphism of complexes, any $B \in T_{poly}^0(M)$ must be closed in $D_{poly}^0(M)$. Take $f,g$  arbitrary in $C^\infty(M)$. We have the cocycle condition:
    \begin{equation*}
    \textnormal{d}(B)(f,g)=-B(fg)+fB(g)+B(f)g=0.
    \end{equation*}
    This is just the statement that $B$ is a derivation, so we are done as $B$ is a vector field. 
    \item In this situation, injectivity is immediate. Surjectivity requires that every Hochschild cocycle is a vector field, but this follows by the argument above, noting that a  Hochschild cocycle is necessarily a derivation.
\end{itemize}
\end{proof}
\begin{prop} \label{prop5}
\begin{equation*}
T_{poly}^1(M)\cong H^1(D_{poly}(M)).
\end{equation*}
\end{prop}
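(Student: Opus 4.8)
The plan is to show that the degree-one part of the HKR map $U_1$ from Theorem~\ref{ref25} descends to an isomorphism $T_{poly}^1(M)=\mathfrak{X}^2(M)\to H^1(D_{poly}(M))$. On a bivector $\pi$ the map $U_1$ simply antisymmetrizes, producing the skew-symmetric first-order bidifferential operator $U_1(\pi)(f,g)=\pi^{ij}\,\partial_i f\,\partial_j g$ with $\pi^{ij}=-\pi^{ji}$; in particular $U_1(\pi)$ is a \emph{biderivation}, a derivation in each argument separately. I would organise everything around three claims: that $U_1$ lands among the cocycles (so that it induces a map to cohomology), that the induced map is injective, and that it is surjective.

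Well-definedness and injectivity are the easy steps. For a biderivation $B$, a direct expansion of the Hochschild differential (\ref{ref17}), using the Leibniz rule in each slot, collapses to zero, so every $U_1(\pi)$ is a cocycle and $H(U_1)$ is defined. For injectivity the key observation is that every coboundary is \emph{symmetric}: by (\ref{ref41}) it has the form $\textnormal{d}(\phi)(f,g)=f\phi(g)+\phi(f)g-\phi(fg)$, which is manifestly symmetric in $f$ and $g$. Since $U_1(\pi)$ is skew-symmetric, if it were a coboundary it would equal its own symmetric part, forcing $U_1(\pi)=0$ and hence $\pi=0$ (evaluate on coordinate functions). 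Thus $H(U_1)$ is injective.

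The substance is surjectivity, which I would split in two. Given any Hochschild cocycle $B\in D_{poly}^1(M)$, that is, any bidifferential operator satisfying the cocycle condition (\ref{ref7}), the construction carried out in the proof of Proposition~\ref{prop1} produces a differential operator $T_1$ for which $C:=B+\textnormal{d}(-T_1)$ is skew-symmetric; crucially that construction uses only (\ref{ref7}) and never the existence of a full star product, so it applies verbatim to an arbitrary cocycle, giving $[B]=[C]$ with $C$ a skew-symmetric cocycle. The second part is the key lemma that \emph{a skew-symmetric $2$-cocycle is automatically a biderivation}. Introducing the derivation defect $\Delta(f,g;h):=C(fg,h)-fC(g,h)-gC(f,h)$, the cocycle identity (\ref{ref7}) shows $\Delta$ equals the analogous defect in the second slot; the symmetries of these two defects then make $\Delta$ totally symmetric in its three arguments, while the skew-symmetry of $C$ makes $\Delta$ change sign under a transposition, so $\Delta\equiv0$. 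A biderivation vanishing on constants is first order in each argument, so $C(f,g)=c^{ij}\,\partial_i f\,\partial_j g$ with $c^{ij}=-c^{ji}$, i.e. $C=U_1(\pi)$ for $\pi=\tfrac12 c^{ij}\,\partial_i\wedge\partial_j$. Hence $[B]=[U_1(\pi)]$ and $H(U_1)$ is onto.

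The main obstacle is this last lemma. The difficulty is that skew-symmetry and the cocycle condition are each individually too weak to control the order of the operator; the content is that their interaction is rigid. Concretely, the delicate point is verifying that $\Delta$ is simultaneously forced to be totally symmetric (from the cocycle relation, after relabelling the three arguments) and to pick up a minus sign under transpositions (from the skew-symmetry of $C$), since it is precisely this clash that pins $C$ down to first order. I would also take care that the reduction to the antisymmetric part genuinely invokes nothing beyond the cocycle condition, so that the argument does not covertly assume that $B$ integrates to an associative star product.
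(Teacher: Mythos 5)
Your proposal is correct and takes essentially the same route as the paper: the same three-part structure (biderivations are cocycles; injectivity because coboundaries $f\phi(g)+\phi(f)g-\phi(fg)$ from (\ref{ref41}) are symmetric, as in Lemma \ref{lem1}; surjectivity by symmetrizing via the construction of Proposition \ref{prop1} and then showing a skew-symmetric cocycle is a biderivation), with your defect-symmetry argument being an equivalent repackaging of the paper's direct manipulations (\ref{ref39})--(\ref{ref42}). Your explicit check that the Proposition \ref{prop1} construction uses only the cocycle condition (\ref{ref7}), and never the existence of a full star product, is a point the paper leaves implicit when citing that proposition for an arbitrary cocycle, and it is worth stating.
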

Using Lemma \ref{lem1} and Proposition \ref{prop1} this is already nearly proven: we just fill in the remaining details ourselves here.
\begin{proof}
\begin{itemize}
\item For a morphism of complexes, any $B \in T_{poly}^1(M)$ must be closed in $D_{poly}^1(M)$. Take $f,g,h$ arbitrary in $C^{\infty}(M)$. We have the cocycle condition:
\begin{equation*}
\text{d}(B)(f,g,h)=fB(g,h)-B(fg,h)+B(f,gh)-B(f,g)h=0.
\end{equation*}
It is easy and sufficient to check this for $B=X\otimes Y- Y \otimes X$, $X$ and $Y$ arbitrary vector fields, by direct computation. 

\item For injectivity, any $B_1,B_2 \in T_{poly}^1(M)$, such that $B_1 \neq B_2$, must not be in the same cohomology class. This is just Lemma \ref{lem1}.

\item For surjectivity, every Hochschild cocycle must be cohomologous to a bivector field. Proposition \ref{prop1} together with the gauge transformation in terms of d, (\ref{ref41}), gives us a cohomologous skew-symmetric Hochschild cocycle. 
\end{itemize}
\ \\
If every skew-symmetric Hochschild cocycle $B$ is a bivector field, the proof is complete. It is sufficient to check that $B$ is a derivation in one variable. We want to show that:
\begin{equation*}
B(f,gh)=gB(f,h)+B(f,g)h, \; \; \forall f,g,h.
\end{equation*}
Recall the cocycle condition:
\begin{equation} \label{ref39}
fB(g,h)+B(f,gh)=B(fg,h)+B(f,g)h.
\end{equation}
Cyclically permuting $f,g,h$ then using skew-symmetry, we also have:
\begin{equation} \label{ref40}
gB(h,f)+B(g,hf)=B(g,h)f-B(f,gh).
\end{equation}
Subtracting (\ref{ref40}) from (\ref{ref39}) then gives the relation:
\begin{equation} \label{ref42}
2B(f,gh)=B(fg,h)+B(f,g)h-gB(h,f)-B(g,hf).
\end{equation}
Reapplying the cocycle condition implies that:
\begin{equation*}
B(fg,h)-B(g,fh)=gB(f,h)+B(f,g)h.
\end{equation*}
Substituting this into (\ref{ref42}) and using skew-symmetry, we are done.

\end{proof}

We have constructed a map from the DGLA controlling deformations of the zero Poisson structure to the DGLA controlling deformations of function multiplication. However, this is not the full story. We have just seen that formal Poisson structures and star products correspond to the Maurer-Cartan sets of $T_{poly}^1(M)[\![\hbar]\!]$ and $D_{poly}^1(M)[\![\hbar]\!]$: we therefore need a mapping that respects the DGLA structures. 

The HKR theorem only gives us a quasi-isomorphism of complexes. Because the multiplicative structure associated to $[.,.]_G$ is not mapped to $[.,.]_S$, it does not give a bijection of Maurer-Cartan sets. Ideally we would want a DGLA quasi-isomorphism, but a weaker map is enough: the $L_{\infty}$-quasi-morphism. Although the HKR theorem was insufficient, we will soon see that it was a necessary first step.

\subsection{L-infinity morphisms}

Recall the notion of a DGLA quasi-isomorphism $\phi_1:V_1 \rightarrow V_2$, given in Definition \ref{ref19}. The map acts from $V_1$ to $V_2$; we do not require that it has an inverse. 
Nevertheless, we may believe that an equivalence between $V_1$ and $V_2$ of the form:
\begin{equation*}
V_1 \rightarrow V_3 \leftarrow V_4 \rightarrow ... \leftarrow V_n \leftarrow V_2,    
\end{equation*}
where the arrows denote quasi-isomorphisms, should still have meaning. In fact, if $V_1$ and $V_2$ are related in the above way, there exists $V_3$  such that:
\begin{equation*}
V_1 \leftarrow V_3 \rightarrow V_2,
\end{equation*}
(see \cite{Keller}). If $V_1$ and $V_2$ satisfy this condition, we say that they are \emph{homotopy equivalent}. We will soon see that the condition of homotopy equivalence is closely related to our new type of mapping, the $L_{\infty}$-morphism. These are `weaker' versions of DGLA morphisms, where we relax the necessary conditions.

\begin{defn}[$\mathbf{L_{\infty}}$\textbf{-morphism}]
An $L_{\infty}$-morphism of DGLAs, $f: V_1\rightarrow V_2$ is a sequence of maps:
\begin{equation*}
f_n: {V_1}^{\otimes n} \rightarrow V_2, \; \;n\geq 1,
\end{equation*}
with each map homogeneous of degree $1-n$, satisfying the following conditions on arbitrary homogeneous elements $x_1,...,x_n$:
\begin{enumerate}
    \item The $f_n$ are graded anti-symmetric:
    \begin{equation*}
        f_n(x_1\otimes...\otimes x_i \otimes x_{i+1}\otimes ...\otimes x_n)= -(-1)^{\overline x_i \overline x_{i+1} }f_n(x_1\otimes...\otimes x_{i+1} \otimes x_i\otimes...\otimes x_n).
    \end{equation*}
   \item For $n\geq1$:
    \begin{multline} \label{ref20}
    \textnormal{d}f_n(x_1 \wedge x_2 \wedge ... \wedge x_n)-\sum_{i=1}^{n} \pm f_n(x_1 \wedge ... \wedge \textnormal{d}x_i \wedge ... \wedge x_n)\\
    =\frac{1}{2}\sum_{k,l\geq 1, k+l=n}\frac{1}{k!l!}\sum_{\sigma \in S_n} \pm [f_k(x_{\sigma_1}\wedge...\wedge x_{\sigma_k}),f_l(x_{\sigma_{k+1}},\wedge...\wedge x_{\sigma_n})]+
    \sum_{i<j} \pm f_{n-1}([x_i,x_j]\wedge x_1\wedge...\wedge x_n).
    \end{multline}
\end{enumerate}
\end{defn}
Here the notation of the wedge product is used because the value on the tensor product only depends on its  equivalence class in the graded wedge product. Homogeneous of degree $1-n$ means that for homogeneous elements we sum the degrees, then add $1-n$: $f_n$ maps into this grading. For $n=1$, (\ref{ref20}) implies $f_1$ is a \emph{morphism of complexes}: $f_1 \circ d = d \circ f_1$. For $n=2$, it implies that:
\begin{equation*} \label{ref21}
    f_1([x_1,x_2])-[f_1(x_1),f_1(x_2)]=\textnormal{d}(f_2(x_1,x_2))+f_2(\textnormal{d}(x_1), x_2)+(-1)^{\overline x_1}f_2(x_1, \textnormal{d}(x_2)),
\end{equation*}
for all homogeneous $x_1$ and $x_2$. The bracket structure is respected by $f_1$, up to the right hand side. In fact, if $x_1$ and $x_2$ are closed, the right hand side is exact and quotiented out by the image of d. So $f_1$ induces a morphism of cohomologies, considered themselves as DGLAs. 

$L_{\infty}$-morphisms generalise DGLA morphisms. We can now consider:
\begin{defn}[$L_{\infty}$-quasi-isomorphism]
An $L_{\infty}$-quasi-isomorphism is an $L_{\infty}$-morphism such that the first term, $f_1$, is a quasi-isomorphism of complexes. 
\end{defn}
Interestingly, this is the same as homotopy equivalence.
\begin{prop}
The following statements are equivalent:
\begin{enumerate}
    \item There exists $V_3$ such that: $V_1 \leftarrow V_3 \rightarrow V_2$.
    \item There exists an $L_{\infty}$-quasi-isomorphism from $V_1 \rightarrow V_2$.
\end{enumerate}
\end{prop}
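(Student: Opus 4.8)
The plan is to package the bare definition of an $L_\infty$-morphism with two structural facts from the homotopy theory of DGLAs: that $L_\infty$-morphisms compose, and that an $L_\infty$-quasi-isomorphism admits a quasi-inverse. Once these are available, both implications reduce to bookkeeping around a single hard input.

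For the implication $(1)\Rightarrow(2)$, suppose we are given DGLA quasi-isomorphisms $p\colon V_3\to V_1$ and $q\colon V_3\to V_2$. The first observation is that every DGLA morphism is a (strict) $L_\infty$-morphism: taking $f_1$ to be the morphism and $f_n=0$ for $n\geq2$, the relations (\ref{ref20}) collapse to precisely the statements that $f_1$ commutes with the differentials and the brackets. Hence $p$ and $q$ are $L_\infty$-quasi-isomorphisms. I would then invoke the inversion theorem --- any $L_\infty$-quasi-isomorphism has an $L_\infty$-quasi-inverse (see \cite{Keller}) --- to produce an $L_\infty$-quasi-isomorphism $g\colon V_1\to V_3$ reversing $p$. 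Composing $L_\infty$-morphisms, the map $q\circ g\colon V_1\to V_2$ has linear part $q_1\circ g_1$, a composite of quasi-isomorphisms of complexes, so it is the desired $L_\infty$-quasi-isomorphism.

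For $(2)\Rightarrow(1)$, I would realise the span through the bar--cobar adjunction between DGLAs and coaugmented dg-coalgebras. By definition an $L_\infty$-morphism $f\colon V_1\to V_2$ is the same data as a morphism of Chevalley--Eilenberg coalgebras $\mathrm{B}f\colon \mathrm{B}V_1\to\mathrm{B}V_2$; applying the cobar functor $\Omega$ yields an honest DGLA morphism $\Omega\mathrm{B}f\colon \Omega\mathrm{B}V_1\to\Omega\mathrm{B}V_2$. The counits $\varepsilon_i\colon\Omega\mathrm{B}V_i\to V_i$ are DGLA quasi-isomorphisms, and since $f$ is an $L_\infty$-quasi-isomorphism the strictified map $\Omega\mathrm{B}f$ is again a quasi-isomorphism. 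Setting $V_3:=\Omega\mathrm{B}V_1$, the honest DGLA maps $\varepsilon_1\colon V_3\to V_1$ and $\varepsilon_2\circ\Omega\mathrm{B}f\colon V_3\to V_2$ form exactly the required span $V_1\leftarrow V_3\to V_2$.

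The two genuinely hard inputs are the inversion theorem used in $(1)\Rightarrow(2)$ and, in $(2)\Rightarrow(1)$, the facts that the bar--cobar counit is a quasi-isomorphism and that $\Omega\mathrm{B}$ preserves quasi-isomorphisms. Each is proved by an obstruction-theoretic, degree-by-degree construction of the relevant component maps, and whichever route one follows one such recursive argument is unavoidable; I expect this to be the main obstacle. At the level of this essay I would state the needed facts carefully and cite \cite{Keller} rather than reproduce the full homological algebra.
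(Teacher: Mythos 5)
Your proposal is correct in outline, but there is nothing in the paper to compare it against: the paper does not prove this proposition at all, stating it and deferring entirely to \cite{Keller} (``For more discussion of this, see \cite{Keller}''). What you have written is the standard argument from the homotopy theory of DGLAs, and both halves are sound. For $(1)\Rightarrow(2)$, the observation that a strict DGLA morphism is an $L_{\infty}$-morphism with $f_n=0$ for $n\geq 2$ is exactly right (the $n=1$ and $n=2$ relations in (\ref{ref20}) reduce to compatibility with d and with the brackets, and the higher relations hold trivially), and the inversion theorem for $L_{\infty}$-quasi-isomorphisms plus closure of $L_{\infty}$-morphisms under composition finishes it; both inputs are genuinely nontrivial and appropriately cited, which matches the paper's own level of rigour, since it cites rather than proves even the full statement. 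For $(2)\Rightarrow(1)$, your bar--cobar rectification $V_1 \leftarrow \Omega\mathrm{B}V_1 \rightarrow V_2$ is the standard construction, with one point deserving more care than your write-up gives it: since $f$ is not strict, the square relating $\varepsilon_2\circ\Omega\mathrm{B}f$ to $f$ commutes only up to homotopy, so the claim that $\Omega\mathrm{B}f$ (and hence the right leg of the span) is a quasi-isomorphism does not follow from naturality alone --- it is the separate statement that $\Omega\mathrm{B}$ sends $\infty$-morphisms with quasi-isomorphism linear part to quasi-isomorphisms, which you correctly flag as a hard input but should state as such rather than as a consequence of the counit being a quasi-isomorphism. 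You should also note that char-$0$ (here $\mathbb{R}$) and conilpotence of the Chevalley--Eilenberg coalgebra are being used silently in both inversion and rectification. With those caveats made explicit, your sketch supplies an actual proof strategy where the paper offers only a pointer, which is a strict improvement.
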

For more discussion of this, see \cite{Keller}. At the cost of more theoretical machinery, we have a convenient single map, rather than the messy setup of multiple DGLA morphisms in different directions. 
\subsection{The formality theorem}
Before giving the main result, we state:
\begin{thm} \label{ref26}
Given an $L_{\infty}$-quasi isomorphism $f:T_{poly}(M)[\![\hbar]\!] \rightarrow D_{poly}(M)[\![\hbar]\!]$, the map:
\begin{equation*} \label{ref28}
x \rightarrow \sum_{n=1}^{\infty} \frac{1}{n!} f_n(x,...,x),
\end{equation*}
induces a bijection:
\begin{equation*}
  \overline{MC}(T_{poly}(M)[\![\hbar]\!]) \simeq \overline{MC}(D_{poly}(M)[\![\hbar]\!]).
  \end{equation*}
\end{thm}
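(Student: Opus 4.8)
The plan is to proceed in three stages: first show that the map is well defined on Maurer--Cartan \emph{elements}, then that it respects the gauge action and so descends to the quotient $\overline{MC}$, and finally that the descended map is a bijection. Throughout, the crucial technical point is that we work inside $\hbar T_{poly}^1(M)[\![\hbar]\!]$, so every element $x$ carries degree $1$ and lies in a positive power of $\hbar$; this guarantees that the infinite sum $\widehat f(x):=\sum_{n\geq 1}\frac{1}{n!}f_n(x,\ldots,x)$ converges $\hbar$-adically, and it makes the sign bookkeeping in (\ref{ref20}) uniform, since all inputs coincide and share the same odd degree $1$.

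\emph{Step A (well-definedness on $MC$).} I would set every argument in the $L_\infty$-relation (\ref{ref20}) equal to $x$ and sum over $n$ with the weights $\frac{1}{n!}$. The three groups of terms should then reassemble precisely into $\mathrm{d}\widehat f(x)$, into $\frac12[\widehat f(x),\widehat f(x)]$, and into a remainder, yielding the curvature transformation formula
\begin{equation*}
\mathrm{d}\widehat f(x)+\tfrac12[\widehat f(x),\widehat f(x)]=\sum_{n\geq 0}\frac{1}{n!}\,f_{n+1}\!\left(\left(\mathrm{d}x+\tfrac12[x,x]\right)\otimes x^{\otimes n}\right).
\end{equation*}
Hence if $x$ solves the Maurer--Cartan equation the right-hand side vanishes and $\widehat f(x)\in MC(D_{poly}(M)[\![\hbar]\!])$.

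\emph{Step B (gauge invariance).} Gauge-equivalent solutions are joined by a path $x(t)$ with $\dot x(t)=\mathrm{d}\xi_t+[x(t),\xi_t]$ for a degree-$0$ family $\xi_t$. Differentiating $\widehat f(x(t))$ in $t$ and again feeding the relation (\ref{ref20}) with one slot occupied by $\xi_t$ and the remaining slots by $x(t)$, I expect to obtain $\frac{\mathrm{d}}{\mathrm{d}t}\widehat f(x(t))=\mathrm{d}\eta_t+[\widehat f(x(t)),\eta_t]$ with $\eta_t:=\sum_{n\geq 0}\frac{1}{n!}f_{n+1}(\xi_t\otimes x(t)^{\otimes n})$, a degree-$0$ element of $D_{poly}(M)[\![\hbar]\!]$. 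Thus $\widehat f$ sends gauge orbits to gauge orbits and descends to a map $\overline{MC}(T_{poly}(M)[\![\hbar]\!])\to\overline{MC}(D_{poly}(M)[\![\hbar]\!])$.

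\emph{Step C (bijectivity).} Here I would invoke the homotopy theory of $L_\infty$-morphisms from \cite{Keller}: since $f_1$ is a quasi-isomorphism of complexes, $f$ admits a homotopy inverse, that is an $L_\infty$-quasi-isomorphism $g$ in the opposite direction with $g\circ f$ and $f\circ g$ homotopic to the respective identities. Because homotopic $L_\infty$-morphisms induce the same map on $\overline{MC}$, the descended maps of $f$ and $g$ are mutually inverse, which yields the bijection. Alternatively, and more explicitly, one can argue by induction on the $\hbar$-order: at order $\hbar^k$ the obstruction to matching a solution lives in $H^2$ and the residual ambiguity in the gauge lives in $H^1$ of the two complexes, and the isomorphism on cohomology induced by $f_1$ is exactly what transports solutions and gauges back and forth, with the higher $f_n$ absorbing the failure of $f_1$ to be a strict bracket morphism. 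The main obstacle is this last step: the homotopy-inverse route is clean but hides substantial input (existence of homotopy inverses, and invariance of the induced map under homotopy), whereas the direct inductive route forces one to track, simultaneously at every $\hbar$-order, both the cohomological lifting of a Maurer--Cartan element and the compatible cohomological adjustment of its gauge. Steps A and B are conceptually routine once the curvature transformation formula is established, the only delicate part being the sign verification in summing (\ref{ref20}).
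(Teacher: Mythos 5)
The paper does not actually prove Theorem \ref{ref26}: it states the result, remarks that it ``is in fact true for any two DGLAs,'' and explicitly defers the proof to the literature as beyond the essay's scope. So there is no internal proof to compare your proposal against; what can be assessed is whether your outline matches the standard argument that the paper is implicitly citing --- and it does.

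Your Steps A and B are the genuinely computational parts and are correct in outline. The curvature transformation formula in Step A is the standard identity: every ``defect'' term in (\ref{ref20}) carries exactly one insertion of $\mathrm{d}x_i$ or $[x_i,x_j]$, so upon setting all inputs equal to $x$ and summing with weights $\frac{1}{n!}$ these assemble into single insertions of $\mathrm{d}x+\frac{1}{2}[x,x]$, with no terms quadratic in the curvature; your observation that $x\in\hbar T^1_{poly}(M)[\![\hbar]\!]$ ensures $\hbar$-adic convergence and uniform signs is exactly the right technical point. In Step B you should note one small gap to close: the paper defines gauge equivalence via the group actions $\exp(\hbar\mathcal{L}_X)$ on $T_{poly}$ (where $\mathrm{d}=0$, so your path equation degenerates to $\dot{x}(t)=[x(t),\xi_t]$) and $e^{\hbar[S,\cdot]}$ on $\mu+B$ in $D_{poly}$ (which, rewritten as an action on $B$, produces precisely the $\mathrm{d}\xi+[x,\xi]$ form since $\mathrm{d}=[\mu,\cdot]$); the equivalence of these group actions with your ODE/path formulation holds in this pronilpotent setting but should be stated, not assumed. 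Step C is where the real content lives, and you are right to flag it: both of your routes are viable and standard, but the homotopy-inverse route imports two substantial theorems (existence of $L_\infty$ homotopy inverses of quasi-isomorphisms in characteristic zero, and homotopy invariance of the induced map on $\overline{MC}$), while the inductive route is a filtered Goldman--Millson-type argument requiring simultaneous order-by-order control of obstructions in $H^2$ and gauge ambiguities in $H^1$. As written, Step C is closer to a citation than a proof --- though in fairness that puts you on exactly the same footing as the paper itself, which cites rather than proves the whole theorem.
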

This is in fact true for any two DGLAs, with a bijection induced between the two corresponding quotient sets previously alluded to. With this setup, we have everything necessary to understand Kontsevich's formality theorem.
\begin{thm}[Kontsevich, \cite{Kontsevich}]
There exists an $L_{\infty}$-quasi-isomorphism:
\begin{equation*}
U: T_{poly}(M)[\![\hbar]\!] \rightarrow D_{poly}(M)[\![\hbar]\!],
\end{equation*}
with the first component $U_1$ given by the HKR quasi-isomorphism of complexes of Theorem \ref{ref25}.
\end{thm}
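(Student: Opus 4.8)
The plan is to follow Kontsevich's explicit graphical construction, first establishing the result for $M$ an open subset of $\mathbb{R}^d$ and then globalizing. Locally I would build each component $U_n \colon T_{poly}(\mathbb{R}^d)^{\otimes n} \to D_{poly}(\mathbb{R}^d)$ as a finite sum over a class of \emph{admissible graphs}. Such a graph $\Gamma$ has $n$ \emph{aerial} vertices, labelled by the multivector-field arguments $\xi_1, \dots, \xi_n$, together with $m$ \emph{ground} vertices labelled by functions $f_1, \dots, f_m$ on which the resulting operator acts, where $m$ is forced by the degree condition. The number of edges leaving the $i$-th aerial vertex equals the order of $\xi_i$; each edge decorates its target vertex with a partial derivative and contributes the corresponding component of the multivector at its source. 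Summing over the ways of assigning derivative indices yields a multidifferential operator $B_\Gamma(\xi_1, \dots, \xi_n)$ in the correct graded piece of $D_{poly}$.

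Each graph carries a real \emph{weight} $w_\Gamma$, defined as the integral over the Fulton--MacPherson-type compactification $\overline{C}_{n,m}$ of the configuration space of $n$ distinct points in the upper half-plane $\mathcal{H}$ and $m$ ordered points on $\partial\mathcal{H} = \mathbb{R}$, of the product $\bigwedge_e \mathrm{d}\phi_e$ of the closed one-forms $\mathrm{d}\phi_e$ given by Kontsevich's harmonic angle function attached to each edge $e$; a dimension count shows the number of edges matches $\dim \overline{C}_{n,m} = 2n + m - 2$, so the integral is over a form of top degree. Setting $U_n(\xi_1, \dots, \xi_n) := \sum_\Gamma w_\Gamma B_\Gamma(\xi_1, \dots, \xi_n)$, the graded antisymmetry and the homogeneity of degree $1-n$ follow from the combinatorics of the graphs. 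I would then verify the first-component claim directly: for $n = 1$ the only admissible graphs have a single aerial vertex whose edges all land on ground vertices, the weight reduces to a normalized simplex volume giving the factor $1/m!$, and $B_\Gamma$ is the antisymmetrized action of the multivector on the functions, recovering the HKR map $U_1$ of Theorem \ref{ref25}.

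The heart of the argument is the quadratic $L_\infty$ relation (\ref{ref20}), which I would obtain by applying Stokes' theorem to the integral of $\bigwedge_e \mathrm{d}\phi_e$ over $\overline{C}_{n,m}$. As each $\mathrm{d}\phi_e$ is closed, the integrand is closed on the interior, so its integral over $\partial \overline{C}_{n,m}$ vanishes identically. This boundary decomposes into codimension-one strata indexed by which subset of points collides: when exactly two aerial points collapse one recovers a term $U_{n-1}([\xi_i, \xi_j]_S \wedge \cdots)$ built from the Schouten--Nijenhuis bracket, and when a cluster of points collapses onto the real line the configuration space factorizes and produces a term $[U_k(\cdots), U_l(\cdots)]_G$ built from the Gerstenhaber bracket. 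Matching the surviving strata against the right-hand side of (\ref{ref20}), together with the differential contributions (the Schouten differential vanishes, leaving only the Hochschild terms), gives exactly the desired identity.

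The hard part will be the \emph{vanishing lemmas}: I must show that every boundary stratum \emph{not} appearing in (\ref{ref20}) contributes zero. The essential cases are the collapse of three or more aerial points and configurations escaping to infinity; here Kontsevich argues that the relevant forms are degenerate for dimensional reasons, or vanish by a reflection symmetry forcing the integral to equal its own negative. Proving convergence of the weights --- that the angle forms extend smoothly to the compactification --- and tracking the orientation signs through the stratification are the most delicate technical points. Finally, to pass from $\mathbb{R}^d$ to an arbitrary Poisson manifold I would use formal geometry in the sense of Gelfand--Kazhdan: apply the local formula fibrewise on the bundle of formal coordinate systems and glue by means of a flat Fedosov-type connection, so that the resulting global $U$ inherits the $L_\infty$ structure. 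Combined with Theorem \ref{ref26}, this produces the bijection of Maurer--Cartan quotients and hence a deformation quantization of every Poisson manifold.
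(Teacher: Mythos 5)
The paper does not actually prove this theorem: it states explicitly that the proof ``goes beyond the scope of this essay'' and defers entirely to \cite{Kontsevich}, so there is no in-paper argument to compare against. Your proposal is a faithful outline of Kontsevich's original proof --- the graph expansion of $U_n$, weights as integrals of products of angle forms over the compactified configuration spaces $\overline{C}_{n,m}$ with $\dim \overline{C}_{n,m}=2n+m-2$, the $L_\infty$ relations extracted via Stokes' theorem from codimension-one boundary strata (two-point interior collapses giving Schouten terms, boundary collapses giving Gerstenhaber terms), the vanishing lemmas, and globalization by Gelfand--Kazhdan formal geometry --- and it is correct at this level of detail; it is precisely the argument the paper points to. Two small precisions worth making if you expand the sketch: the Stokes argument for relation (\ref{ref20}) applies to graphs with $2n+m-3$ edges, so that $\bigwedge_e \mathrm{d}\phi_e$ has degree $\dim\overline{C}_{n,m}-1$ (the top-degree case defines the weights $w_\Gamma$ themselves, where Stokes gives nothing); and the hardest vanishing case, the collapse of three or more interior points, is not a dimension count but Kontsevich's separate vanishing lemma, whose proof via a conjugation/symmetry trick is one of the genuinely delicate steps.
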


The proof of these results go beyond the scope of this essay. Nevertheless, we have established sufficient machinery for the reader to understand that together they give us the natural identification of  Theorem \ref{ref27}. In particular, we have a deformation quantization of any given Poisson manifold, what we wanted all along. Note that if we want to quantize an ordinary Poisson structure $\alpha$, we input $\frac{\hbar \alpha}{2}$, again so that the star product's commutator has the right form at order $\hbar$.

The HKR map was indeed the first step in constructing the $L_{\infty}$-quasi-isomorphism. Moreover, this bijection is a deformation quantization, because $f_1$, the HKR map, sends $\hbar \alpha$ to itself, and all other components map into $\hbar$ order two and higher. In the final chapter, we will return to these ideas when we discuss the broader mathematical connections. One can formulate a notion of homotopy equivalence between the so-called \emph{stable} $L_{\infty}$-quasi-isomorphisms: then the real Grothendieck-Teichm{\"u}ller group $GT(\mathbb{R})$ acts on equivalence classes  \cite{Dolgushev}, \cite{Dolgushev1}.

\section{Kontsevich's formula}
Armed with Kontsevich's main theoretical result, we proceed to something more concrete. The $L_{\infty}$-quasi-isomorphism gives us a simple, explicit formula for finding a star product given a Poisson structure on a manifold that is an open subset of $\mathbb{R}^d$. This section's background material is drawn largely from \cite{Cattaneo}, \cite{Esposito} and \cite{Kontsevich}; the later discussion is based on \cite{Panzer}.
\subsection{Admissible graphs}
We are going to write our star product formula in terms of a family of \emph{admissible graphs}. These can be defined as follows.
\begin{defn}[Oriented graphs] 
An oriented graph $\Gamma$ is a pair $(V_{\Gamma}, E_{\Gamma},)$  such that $E_{\Gamma}$ is a subset of $V_{\Gamma} \times V_{\Gamma}$. 
\end{defn}
$V_{\Gamma}$ are the vertices of $\Gamma$ and $E_{\Gamma}$ are the edges of $\Gamma$; the edge $e=(v_1,v_2)$ starts at $v_1$ and ends at $v_2$. Formally, a labelled, oriented graph $\Gamma$ is admissible if: 
\begin{enumerate}
    \item $\Gamma$ has $n+2$ vertices, $\{1,...,n,L,R\}$, where $L,R$ are just symbols. 
    \item $\Gamma$ has $2n$ edges.
    \item The edges are labelled by $e_1^1,e_1^2,...,e_n^1,e_n^2$; the edges $e_k^1$, $e_k^2$ start at the vertex $k$. 
    \item The edges $e_k^1=(v_k,v_1)$ and $e_k^2=(v_k,v_2)$ are such that $v_1 \neq v_2$.
    \item There is no vertex $v$ such that the $(v,v) \in E_{\Gamma}$.
\end{enumerate}
We are not allowing infinite graphs or multiple edges. To restate the rules simply, two edges go out from the first $n$ vertices $\{1,...,n\}$; these edges must end at different vertices from each other. The vertices $\{L,R\}$ have no edges going out. Finally, we cannot have loops. Let  $G$ be the set of admissible graphs and $G_n$ be the set of admissible graphs with $n+2$ vertices. If $\Gamma  \in G_n$, we say $|\Gamma|=n$; note that $G_n$ contains $(n(n+1))^n$ elements. Now Kontsevich's formula is a sum over the graphs.
\begin{thm}[Kontsevich, \cite{Kontsevich}]
Given $\pi$, a Poisson structure on an open subset of $\mathbb{R}^d$, the formula:
\begin{equation} \label{ref9}
f \star g := \sum_{\Gamma \in G} \frac{{\hbar}^n}{n!} \omega_{\Gamma} B_{\Gamma, \pi}(f,g),
\end{equation}
defines an associative product. If we change coordinates, we obtain a gauge-equivalent star product.
\end{thm}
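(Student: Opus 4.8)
The plan is to first pin down the two ingredients that appear in the sum, since the statement as written presupposes their definition. Recall that to each admissible $\Gamma \in G_n$ one attaches a bidifferential operator $B_{\Gamma,\pi}$ obtained by placing a copy of the bivector $\pi = \pi^{ij}\partial_i \otimes \partial_j$ at each internal vertex $\{1,\dots,n\}$, placing $f$ at $L$ and $g$ at $R$, and contracting according to the edges: the edge $e_k^1$ (resp.\ $e_k^2$) makes the first (resp.\ second) index of the bivector at $k$ act as a derivative on whatever function sits at the target vertex. The weight $\omega_\Gamma$ is the integral, over the configuration space of $n$ distinct points in the upper half-plane together with two marked boundary points $L,R$ (an oriented manifold $C_n$ of real dimension $2n$, taken modulo the affine rescalings fixing the real axis), of the wedge of the $2n$ closed one-forms $\mathrm{d}\phi_e$, one per edge, where $\phi_e$ is Kontsevich's harmonic angle function. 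Once these are fixed, the $\hbar^0$ term is $fg$ (the unique graph with no internal vertex) and each $B_i$ is manifestly bidifferential, so $\star$ is a well-formed candidate star product.

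For associativity I would verify the order-$n$ condition (\ref{ref6}) directly. Expanding both sides collects, for each fixed graph obtained by gluing, a quadratic expression in the weights $\omega_\Gamma$; associativity of $\star$ is thus equivalent to a family of quadratic relations among these weights. The tool to produce them is Stokes' theorem. Since each $\mathrm{d}\phi_e$ is closed, the $2n$-form $\bigwedge_e \mathrm{d}\phi_e$ is closed, so on the Fulton--MacPherson / Axelrod--Singer compactification $\overline{C}_n$ one has $\int_{\partial \overline{C}_n} \bigwedge_e \mathrm{d}\phi_e = 0$. I would then enumerate the codimension-one boundary strata of $\overline{C}_n$, which correspond to subsets of vertices colliding, either in the interior or at a point of the real axis. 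Matching the strata in which exactly two vertices collide against the individual terms of (\ref{ref6}) reproduces precisely the associativity relation, so the entire content of the argument is to show that every remaining stratum contributes zero.

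The main obstacle is exactly this vanishing of the spurious strata, which requires two distinct mechanisms. First, a family of dimension counts (the Kontsevich ``vanishing lemmas'') shows that many strata carry a form of the wrong degree, or that a collapsing cluster with an unbalanced count of incident edges integrates to zero. The most delicate case is a cluster of three or more interior points collapsing together: here I would observe that the contribution factors through a graph carrying $[\pi,\pi]_S$ on the collapsed sub-configuration, so that the Poisson hypothesis $[\pi,\pi]_S = 0$ of (\ref{ref4}) kills it outright. This is where the combinatorics of the boundary and the geometry of the angle forms interact most intricately, and it is where I expect essentially all of the effort to go.

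Finally, for the coordinate-change assertion I would avoid recomputing and instead appeal to naturality. A change of coordinates is a diffeomorphism $\phi$, and the construction is a component of Kontsevich's $L_\infty$-quasi-isomorphism, which intertwines the (formal) diffeomorphism actions on $T_{poly}(M)$ and $D_{poly}(M)$; consequently the two presentations produce formal Poisson structures in the same equivalence class of Definition \ref{ref22}, whence by the bijection of Theorem \ref{ref26} (equivalently Theorem \ref{ref27}) their star products lie in a single gauge-equivalence class of Definition \ref{ref24}. Alternatively one can build the intertwining $T \in J$ explicitly from the flow of the vector field generating the coordinate change, but the naturality route is cleaner and is already supplied by the machinery assembled earlier in the essay.
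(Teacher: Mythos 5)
The paper itself offers no proof of this theorem: it defines $B_{\Gamma,\pi}$ and $\omega_\Gamma$ and then quotes the result from \cite{Kontsevich}, so the only fair comparison is with Kontsevich's own argument, and your outline does follow its genuine architecture (closedness of the angle forms, Stokes' theorem on the compactified configuration space, classification of codimension-one strata). However, there is a concrete error at the heart of your plan: you have the two vanishing mechanisms swapped. The Schouten bracket $[\pi,\pi]_S$ is quadratic in $\pi$ and therefore lives on exactly \emph{two} internal vertices; it is the strata where precisely two interior points collide at an interior point of $\mathcal{H}$, joined by a single contracted edge, whose contributions assemble into operators built from $[\pi,\pi]_S$ and are killed by the Poisson hypothesis (\ref{ref4}). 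The strata where a cluster of \emph{three or more} interior points collapses cannot be removed by the Jacobi identity at all: they are killed by Kontsevich's hard vanishing lemma, which asserts that the integral of the product of angle forms over the configuration space of $n\ge 3$ points in the plane modulo translations and dilations vanishes, and this analytic-combinatorial fact (proved by a delicate symmetry argument) is where the real effort goes. As written, your proposal would invoke $[\pi,\pi]_S=0$ where it does not apply and would leave the genuine two-point interior collisions unaccounted for.

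Two further inaccuracies in the matching step. First, the terms $B_i(B_j(f,g),h)$ of the associativity condition (\ref{ref6}) do not come from strata ``in which exactly two vertices collide''; they come from boundary strata in which a cluster of interior vertices of \emph{arbitrary} size collapses, together with two consecutive marked boundary points, onto a single point of the real axis, the cluster contributing the inner operator and the complementary graph the outer one. Second, the Stokes argument must be run on configurations with \emph{three} marked boundary points (one each for $f$, $g$, $h$), i.e.\ on the compactification of $C_{n,3}$, not on the two-marked-point space you set up to define the weights $\omega_\Gamma$; the quadratic relations among weights only emerge after this enlargement. Your treatment of the coordinate-change assertion via equivariance of the formality morphism is acceptable at the level of rigor of the paper, which likewise leaves that claim to \cite{Kontsevich}, though one should note it is closer to a restatement than a proof: the equivariance of $U$ under (non-affine) diffeomorphisms is itself part of what must be established.
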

Collecting terms, the operator at order $h^n$ is given by:
\begin{equation*}
B_n(f,g) = \frac{1}{n!} \sum_{\Gamma \in G_n} {\omega_{\Gamma}} B_{\Gamma, \pi}(f,g).
\end{equation*}
Of course we need to explain the terms in the sum. Given a Poisson structure $\pi$, we assign a bidifferential operator $B_{\Gamma, \pi}$ to each graph $\Gamma$, depending on both $\Gamma$ and $\pi$. In front of this there is a weight, $\omega_{\Gamma} \in \mathbb{R}$, depending only on $\Gamma$. The star product of $f$ and $g$ is then given by the formal power series (\ref{ref9}).
\begin{figure} 
\includegraphics[width=12cm]{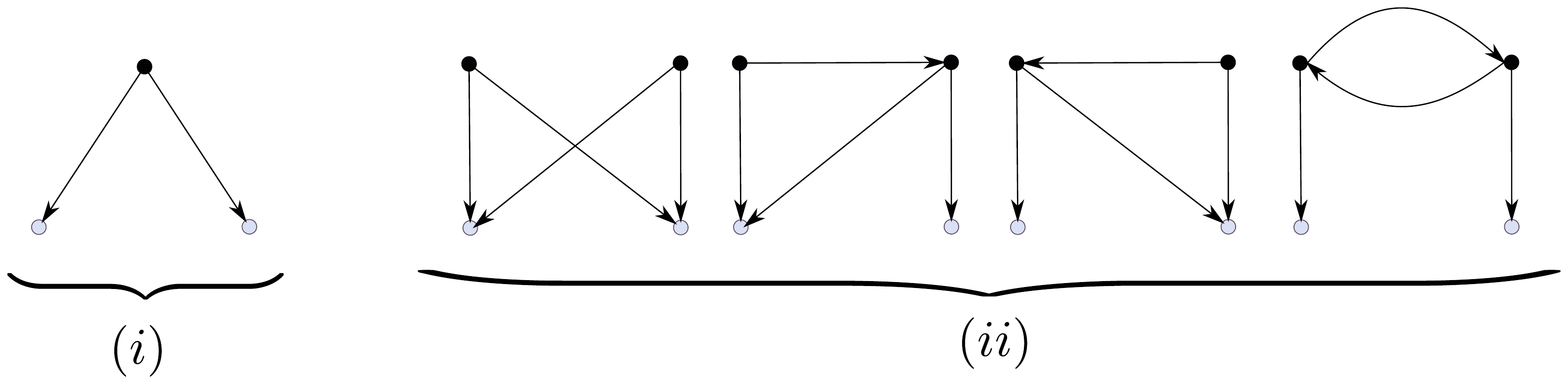}
\centering
\caption{Unlabelled admissible graphs for (i) $n=1$ and (ii) $n=2$.}
\centering
\label{fig:2}
\end{figure}
\subsection{Bidifferential operators}
Given a graph and Poisson bivector field, $\{f,g\}=\pi^{ij}\partial_i f \partial_j g$, how do we calculate $B_{\Gamma, \pi}(f,g)$? The recipe is as follows. 
\begin{enumerate}
    \item Use a map $I:E_{\Gamma} \rightarrow \{1,...,d\}: (e_1^1,e_1^2,...,e_n^1,e_n^2) \rightarrow (i_1,...,i_d)$, changing the labels of the edges.
    \item Assign $f$ to $L$ and $g$ to $R$. Assign the tensor $\pi^{I(e^1_k) I(e^2_k)}$ to the vertex $k$.
    \item For every edge labelled $i_j$, apply a partial derivative with respect to $i_j$ to the function or tensor associated with the vertex endpoint.
\item Following this setup, we can define a general formula, summing over maps $I$:
\end{enumerate}
\begin{equation*}
B_{\Gamma, \pi} := \sum_{I:E_{\Gamma} \rightarrow \{1,...,d\}} \left[ \prod_{k=1}^{n} \left( \prod_{e\in E_{\Gamma}, e=(*,k)} \partial_{I(e)} \right) \pi^{I(e_k^1) I(e_k^2)}\right]  \times \left( \prod_{e\in E_{\Gamma}, e=(*,L)} \partial_{I(e)} \right)f\left( \prod_{e\in E_{\Gamma}, e=(*,R)} \partial_{I(e)} \right)g.
\end{equation*}
In summary, each vertex is assigned a tensor or function, and edges define partial derivatives on their endpoints. Multiplying these together, we then sum over the edge's labels. To illustrate the rule, consider the graph in Figure \ref{fig:1}. The corresponding operator is:
\begin{equation*}
B_{\Gamma, \pi}(f,g)=\sum_{1 \leq i_1...i_6 \leq d} \pi^{i_5 i_6}  \partial_{i_5}\partial_{i_2}(\pi^{i_3 i_4})\partial_{i_3}(\pi^{i_1 i_2})
\partial_{i_1} \partial_{i_4}(f) \partial_{i_6} (g).
\end{equation*}
\begin{figure} 
\includegraphics[width=6cm]{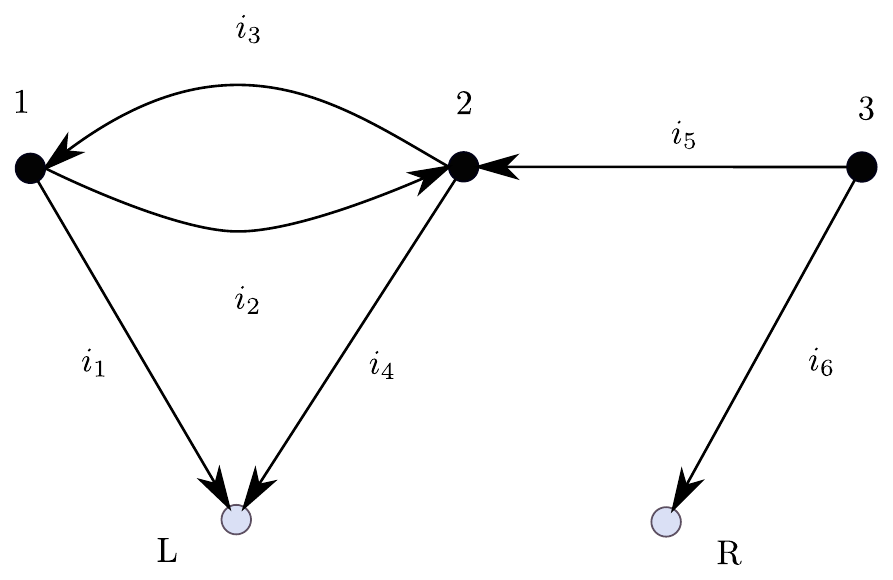}
\centering
\caption{An admissible graph of order 3.}
\label{fig:1}
\end{figure}
\subsection{Weight integrals}
For the weights $\omega_\Gamma$, as with the formula, we first give the definition, then explain the meaning of the terms. The weights $\omega_\Gamma$ are defined as integrals of differential $2n$-forms:
\begin{equation} \label{ref10}
    \omega_{\Gamma}:= \int_{{\mathcal{H}}_n} \text{d}\phi_{e_1} \wedge ... \wedge \text{d}\phi_{e_{2n}}.
\end{equation}
These coefficients depend only on the graphs; they need to be calculated once and for all to quantize any Poisson structure. We use the following recipe:
\begin{enumerate}
    \item Let $\mathcal{H}$ denote the upper half complex plane, $\text{Im}(z)>0$.
    \item Given $p,q \in \mathcal{H}$, define the function:
    \begin{equation*}
        \phi(p_1,p_2):=\frac{1}{2 \pi} \text{arg}\left(\frac{p_2-p_1}{p_2-\overline p_1}\right).
    \end{equation*}
   This function is best thought of as an angle, normalised by $2 \pi$. It can also be interpreted geometrically in terms of geodesics of $\mathcal{H}$ endowed with the Lobachevsky metric. This function can be extended continuously to the set the set of points $(p,q)$ with $Im(q)  \geq 0, \; p \neq q$. For a fuller discussion of this see \cite{Esposito}. \item Lets $\mathcal{H}_n$ be the set of pairwise distinct $n$-tuples of points in $\mathcal{H}$:
   \begin{equation*} 
       \mathcal{H}_n:=\{(p_1,...,p_n)|p_i \in \mathcal{H}, \; p_i \neq p_j \; \text{if} \; i \neq j\}.
   \end{equation*}
   This called the \emph{space of configurations}. It gets an orientation coming from the natural complex structure.
   \item Given a graph $\Gamma \in G_n$ and a configuration $(p_1,...,p_n)$, draw $\Gamma$ on $\mathbb{R}^2$ by assigning $p_k \in \mathcal{H}$ to the vertex $k$, $0$ to the vertex $L$, and $1$ to the vertex $R$.
   \item Each edge $e_k$ now defines an ordered pair of points $(p_i,p_j)$ and an angle $\phi_{e_k}:=\phi(p_i,p_j)$. 
\end{enumerate}
As the points $p_i$ move around $\mathcal{H}$, the $\phi_{e_i}$ define functions on $\mathcal{H}_n$ via projection. Then the integral in (\ref{ref10}) gives us the weight $\omega_{\Gamma}$; it is shown in \cite{Kontsevich} that it converges absolutely. Everything in (\ref{ref9}) is now defined. 

We now consider some simple weight integrals and star products. If $e_k$ is the edge  from vertex $p_i$ to vertex $p_j$, we can write $\phi_{e_k}$ alternatively as $\phi_{p_i\rightarrow p_j}$. We will first calculate the most basic wedge integral, corresponding to a graph with three vertices and two edges in $\mathcal{H}$. We refer to this setup as an `out-out' vertex. 
\begin{lemma}[General wedge integral] \label{lem2}
We have that:
\begin{equation} \label{ref11}
   \int_{x \in \mathcal{H}}\textnormal{d}\phi_{x \rightarrow y} \wedge \textnormal{d}\phi_{x \rightarrow z} = \frac{1}{2 \pi i} \textnormal{log} \left( \frac{y-\bar{z}}{z-\bar{y}} \right). 
\end{equation}

\end{lemma}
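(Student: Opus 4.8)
The plan is to reduce the angular forms $\textnormal{d}\phi$ to combinations of holomorphic and antiholomorphic logarithmic differentials, integrate out one real variable, and then the other. First I would write $x = u + iv$ with $v > 0$ and use that, for a fixed second argument $w$, $\phi(x,w) = \tfrac{1}{2\pi}\,\textnormal{Im}\,\log\tfrac{w-x}{w-\bar x}$, so that
\begin{equation*}
2\pi\,\textnormal{d}\phi_{x\to w} = \textnormal{Im}\left( \frac{-\textnormal{d}x}{w-x} + \frac{\textnormal{d}\bar x}{w-\bar x} \right) = A_w\,\textnormal{d}x + \overline{A_w}\,\textnormal{d}\bar x,
\end{equation*}
where $A_w = \tfrac{-1}{2i}\bigl(\tfrac{1}{w-x} + \tfrac{1}{\bar w - x}\bigr)$; the reality of $\phi$ forces the $\textnormal{d}\bar x$ coefficient to be $\overline{A_w}$. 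Wedging the forms for $w=y$ and $w=z$ kills the $\textnormal{d}x\wedge\textnormal{d}x$ and $\textnormal{d}\bar x\wedge\textnormal{d}\bar x$ terms, and with $\textnormal{d}x\wedge\textnormal{d}\bar x = -2i\,\textnormal{d}u\wedge\textnormal{d}v$ this leaves
\begin{equation*}
\textnormal{d}\phi_{x\to y}\wedge\textnormal{d}\phi_{x\to z} = \frac{1}{\pi^2}\,\textnormal{Im}\bigl(A_y\,\overline{A_z}\bigr)\,\textnormal{d}u\wedge\textnormal{d}v,
\end{equation*}
a genuine real $2$-form whose coefficient is an explicit rational function of $u,v$.

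Next I would integrate in the spirit of the fibration algorithm of \cite{Panzer}: integrate first over $u\in(-\infty,\infty)$ at fixed $v$, then over $v\in(0,\infty)$. After partial fractions the product $A_y\overline{A_z}$ is a sum of terms $\tfrac{1}{(a-x)(b-\bar x)}$ with $a\in\{y,\bar y\}$ and $b\in\{z,\bar z\}$; each is handled by a one-dimensional contour/residue integral in $u$ (equivalently, by differences of logarithms evaluated at $u=\pm\infty$), so that the inner integral depends on $v$ only through the locations of the poles relative to the real axis. The surviving $v$-integral is then again of $\textnormal{d}\log$ type and evaluates to a single logarithm. Collecting the boundary data and using $z-\bar y = -\overline{(y-\bar z)}$ to combine the several arguments should consolidate everything into $\tfrac{1}{2\pi i}\log\tfrac{y-\bar z}{z-\bar y}$; note that $\tfrac{y-\bar z}{z-\bar y}$ has unit modulus, so the right-hand side is real, consistent with the left.

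As a geometric cross-check, and arguably a cleaner route, I would instead apply Stokes' theorem to the primitive $\phi_{x\to y}\,\textnormal{d}\phi_{x\to z}$ on $\mathcal{H}$ with small disks about $y,z$ and a branch cut for the multivalued $\phi_{x\to y}$ removed. The real-axis boundary contributes nothing, since $\phi_{x\to w}\equiv 0$ when $x$ is real; the surviving contributions are a residue-type term from the circle about $z$ (proportional to $\phi(z,y)$ times the winding of $\textnormal{d}\phi_{x\to z}$), the jump of $\phi_{x\to y}$ across its cut (proportional to $\phi(y,z)$), and the semicircular arc at infinity, which contributes $\tfrac12$ because $\phi_{x\to w}\to\psi/\pi$ as $x = Re^{i\psi}\to\infty$. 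After fixing signs and the branch, these reassemble into $\tfrac{1}{2\pi i}\log\tfrac{y-\bar z}{z-\bar y}$.

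The hard part will be boundary bookkeeping rather than any single integral: tracking the orientation that $\mathcal{H}_n$ inherits from its complex structure, verifying absolute convergence near $x=y,z$, along the real axis, and at infinity, and — on the Stokes route — correctly accounting for the jump of the multivalued angle $\phi_{x\to y}$ across its branch cut together with the non-vanishing arc at infinity. Once the overall sign and the branch of the final logarithm are pinned down (conveniently by the degenerate check $y=z$, where both sides vanish), the argument identity collapsing the separate angles into the single ratio $\tfrac{y-\bar z}{z-\bar y}$ is routine.
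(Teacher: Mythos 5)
Your proposal is correct in outline, but both of your routes differ genuinely from the paper's proof (Appendix \ref{amend14}). The paper also runs a Stokes argument over the boundary consisting of $[-R,R]$, a large arc, and small circles about $y$ and $z$, but its key device is a deliberately \emph{single-valued} primitive $\eta$: each holomorphic $\textnormal{d}\log$ factor is paired with the matching antiholomorphic logarithms so that every logarithm has positive real argument (products of the form $(z-x)(\bar z-\bar x)(\bar z-x)(z-\bar x)=|z-x|^2\,|\bar z-x|^2$). The resulting bookkeeping is then the mirror image of yours: the two small circles cancel each other, the arc at infinity contributes $0$, and the entire answer is extracted from the real-axis integral by splitting the integrand into terms holomorphic in the upper and lower half-planes and taking residues at $y,z$ and $\bar y,\bar z$. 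In your Stokes variant with the multivalued primitive $\phi_{x\to y}\,\textnormal{d}\phi_{x\to z}$, the real axis gives zero and everything comes from the arc ($\tfrac12$), the cut jump, and the winding about $z$; this does close up, since with suitably tracked branches
\begin{equation*}
\tfrac{1}{2}+\phi(y,z)-\phi(z,y)\;\equiv\;\tfrac{1}{2\pi}\,\textnormal{arg}\left(\frac{y-\bar z}{z-\bar y}\right)\;=\;\tfrac{1}{2\pi i}\,\textnormal{log}\left(\frac{y-\bar z}{z-\bar y}\right)\pmod 1,
\end{equation*}
and the integer ambiguity (with principal branches the discrepancy is $0$ for some configurations of $y,z$ and exactly $1$ for others) is precisely the cut-placement and winding bookkeeping you flag as the hard part, resolvable by continuity together with your degenerate check. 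Your first, fibration-style route (integrate $u$, then $v$, by partial fractions and residues) is the generic direct method and also works, at the cost of a piecewise case analysis in $v$ as poles cross the real $u$-axis at $v=\textnormal{Im}\,y$ and $v=\textnormal{Im}\,z$. What the paper's single-valued primitive buys --- and the stated reason it is set up this way, following the algorithm of \cite{Panzer} --- is that no branch data need be carried forward: the argument $\frac{y-\bar z}{z-\bar y}$ never meets $\mathbb{R}^-$, so the output is manifestly single-valued in $y,z$, and the same procedure iterates to higher-order weight integrals where $y$ and $z$ must themselves be integrated out; your shorter multivalued-primitive route is geometrically transparent for this one lemma (it is essentially the standard trick for the degenerate case $y=0$, $z=1$) but scales poorly in exactly that respect.
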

This lemma is stated in \cite{Panzer}, but the proof there is more complicated. Drawing heavy inspiration from the work of our supervisor, we have performed the calculation originally in a simpler way. A full proof is given in Appendix \ref{amend14}.
\begin{proof}
\emph{(Sketch)}. Apply the following algorithm: \begin{enumerate}
    \item Separate the integrand differential form into holomorphic and anti-holomorphic factors.
    \item Add on any antiholomorphic differentials to the holomorphic factor, to enable a single-valued primitive to be found.
    \item Apply Stokes' theorem to our single-valued integrand.
    \item Express the remaining contributions from Stokes' in terms of single valued functions depending on the remaining points. 
    \item Iterate this as many times as necessary.
\end{enumerate}
\end{proof}
Note we are not yet enforcing that two of the vertices lie on the real line, as in Kontsevich's actual construction. One can show the continuity of the integral (\ref{ref11}), so that in the limit as $y \rightarrow 0$, $z \rightarrow 1$, it implies that:
\begin{equation*}
   \int_{x \in \mathcal{H}}\textnormal{d}\phi_{x \rightarrow 0} \wedge \textnormal{d}\phi_{x \rightarrow 1} = \frac{1}{2}. 
\end{equation*}
In $G_1$ there are two graphs that differ by swapping the edges; they are exactly the `out-out' vertex graphs. Thus $\omega_1=\frac{1}{2}$ and by skew-symmetry of the differential form, $\omega_2=-\frac{1}{2}$. So $f \star g$ has contribution at order $\hbar$ given by:
\begin{equation*}
\frac{\hbar}{2}{\pi}^{ij}({\partial}_i f {\partial}_j g  -{\partial_j} f {\partial_i} g) = \hbar  {\pi}^{ij}\partial_i f \partial_j g,
\end{equation*}
and if we input $\frac{1}{2}\{.,.\}$, we get an induced  $B_1^-$ bracket of $\frac{1}{2}\{.,.\}$, exactly what was stated in Definition \ref{amend13}. 

Having performed this integral, we can now return to the Moyal product. 
\begin{prop}
Applying the Kontsevich formula (\ref{ref9}) to a constant Poisson structure gives us the Moyal product (\ref{ref13}).
\end{prop}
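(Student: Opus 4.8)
The plan is to exploit the fact that a constant Poisson structure has every $\pi^{ij}$ constant in the coordinates, so that any partial derivative $\partial_{i}\pi^{jk}$ vanishes. First I would inspect the bidifferential operator $B_{\Gamma,\pi}$ attached to a graph $\Gamma \in G_n$. By construction, an edge $e=(*,k)$ ending at an internal vertex $k \in \{1,\dots,n\}$ applies a derivative to the tensor $\pi^{I(e_k^1)I(e_k^2)}$ sitting at that vertex, so $B_{\Gamma,\pi}=0$ for every graph possessing an edge terminating at an internal vertex. The only surviving graphs are those all of whose $2n$ edges land on $L$ or $R$. Combined with the admissibility requirement that the two edges leaving each vertex $k$ end at distinct vertices, this forces each internal vertex to send exactly one edge to $L$ and one to $R$. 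There are precisely $2^n$ such graphs in $G_n$, one for each choice, vertex by vertex, of which labelled edge $e_k^1$ or $e_k^2$ goes to $L$.

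Next I would compute the weight of a surviving graph. Since each internal vertex $p_k$ now carries only the two angle forms $d\phi_{p_k\to 0}$ and $d\phi_{p_k\to 1}$, the $2n$-form in (\ref{ref10}) factors as a wedge of $n$ two-forms, each depending on a single configuration point; as these are even-degree they commute, and the integral over $\mathcal{H}_n$ splits into a product of $n$ single-vertex integrals (the diagonal being of measure zero). By the corollary to Lemma \ref{lem2}, each vertex with $e_k^1\to L$, $e_k^2\to R$ contributes $\int_{\mathcal{H}} d\phi_{x\to 0}\wedge d\phi_{x\to 1}=\tfrac12$, while swapping the two edges reverses the orientation and gives $-\tfrac12$. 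Writing $S$ for the set of vertices whose edges are swapped, we obtain $\omega_\Gamma=(-1)^{|S|}2^{-n}$.

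I would then read off the operator. A vertex with $e_k^1\to L$, $e_k^2\to R$ contributes the tensor $\pi^{a_k b_k}$ with $\partial_{a_k}$ acting on $f$ and $\partial_{b_k}$ on $g$; a swapped vertex contributes $\pi^{b_k a_k}$ with the roles of the two edge-indices exchanged, which after renaming the summed dummy indices and using skew-symmetry $\pi^{b_k a_k}=-\pi^{a_k b_k}$ is exactly $-1$ times the unswapped contribution. Thus $B_{\Gamma,\pi}(f,g)=(-1)^{|S|}\sum_I \big(\prod_k \pi^{a_k b_k}\big)(\partial_{a_1}\cdots\partial_{a_n}f)(\partial_{b_1}\cdots\partial_{b_n}g)$. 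The decisive observation is that the orientation sign in $\omega_\Gamma$ and the skew-symmetry sign in $B_{\Gamma,\pi}$ cancel: $\omega_\Gamma B_{\Gamma,\pi}=2^{-n}\sum_I(\prod_k \pi^{a_k b_k})(\partial_{a_1}\cdots\partial_{a_n}f)(\partial_{b_1}\cdots\partial_{b_n}g)$, independent of $S$. Summing this identical contribution over all $2^n$ surviving graphs and dividing by $n!$ as in $B_n$, the factor $2^n\cdot 2^{-n}$ collapses to $1$, producing exactly the order-$\hbar^n$ term of the Moyal product (\ref{ref13}) with $\alpha=\pi$.

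The routine parts are the explicit derivative bookkeeping; the step I expect to be the main obstacle is the careful tracking of signs, namely verifying that the reversal of orientation in the weight integral is precisely compensated by the skew-symmetry of $\pi$, so that all $2^n$ graphs contribute equally rather than cancelling. A secondary point requiring care is the justification that the configuration-space integral genuinely factorizes over the vertices, which rests on the integrand being a product of commuting two-forms supported on disjoint sets of variables.
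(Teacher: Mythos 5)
Your proposal is correct and takes essentially the same route as the paper's proof: both note that constancy of $\pi$ annihilates every graph with an edge ending at an internal vertex, count the $2^n$ surviving graphs, factorize the weight integral into single-vertex wedge integrals worth $\pm\tfrac12$, and cancel the orientation sign against the skew-symmetry $\pi^{ij}=-\pi^{ji}$ so that all graphs contribute identically and the $2^n$ collapses against $2^{-n}$. Your explicit $(-1)^{|S|}$ bookkeeping simply makes precise what the paper asserts more briefly, namely that swapping a vertex's edges leaves $\omega_\Gamma B_{\Gamma,\pi}$ unchanged.
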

\begin{proof}
\begin{itemize}
\item Consider the constant Poisson structure with ${\pi}^{ij} \in \mathbb{R}$. Any graph with an edge ending in a vertex other than $L$ or $R$ gives contribution $0$, as there is a term ${\partial}_i {\pi}^{jk}=0$. 

\begin{figure} 
\includegraphics[width=4.5cm]{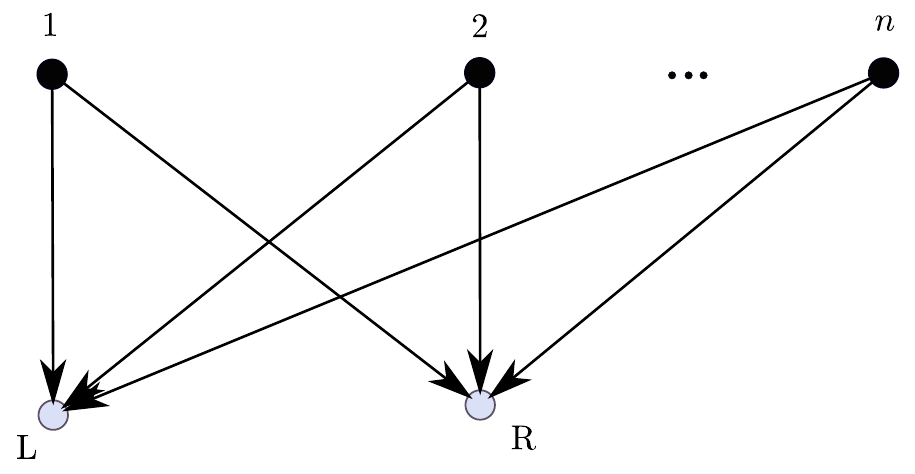}
\centering
\caption{A graph of order $n$ with non-zero contribution to the Moyal product.}
\centering
\label{fig:3}
\end{figure} 
\item The only non-zero contribution graphs have vertices with all edges going into $L$ or $R$. There are $2^n$ such graphs in $G_n$, differing by swapping pairs of edges from each vertex. 

\item Swapping the edges does not change the contribution to the star product, as we saw for ${\Gamma_1}, {\Gamma_2}$, because ${\pi}^{ij}=-{\pi}^{ji}$. 
\end{itemize}
\ \\
Let $\Gamma$ be the graph of this form such that the first labelled edge from each vertex goes into L. Because the contribution from all graphs is the same, at order $n$ we have:
\begin{equation*}
B_n(f,g) = \frac{2^n}{n!} {\omega_{\Gamma}} B_{\Gamma,\pi}(f,g) = \frac{2^n}{n!} {\omega}_{\Gamma} \left( \prod_{k=1}^{n} {\pi}^{i_k j_k} \right) \left( \prod_{k=1}^{n} \partial_{i_k} f \right) \left( \prod_{k=1}^{n} \partial_{j_k} g \right).
\end{equation*}
For $\omega_{\Gamma}$, the differential $2n$-form decomposes trivially, so that:
\begin{equation*}
\int_{\mathcal{H}^n} \text{d} \phi_{e_1}  \wedge...\wedge \text{d} \phi_{e_n}  = \left( \int_{x \in \mathcal{H}} \phi_{x \rightarrow 0} \wedge  \phi_{x \rightarrow 1}  \right)^n = \frac{1}{2^n}.
\end{equation*}
Putting this together, the $2^n$ in the numerator and denominator cancel so that:
\begin{equation*}
    f  \star  g = \sum_{n} \frac{{\hbar}^n}{n!} \left( \prod_{k=1}^{n} {\pi}^{i_k j_k} \right) \left( \prod_{k=1}^{n} \partial_{i_k} f \right) \left( \prod_{k=1}^{n} \partial_{j_k} g \right),
\end{equation*}
exactly what we want, i.e. equation (\ref{ref13}).

\end{proof}

We can see that breaking up graphs into smaller pieces is a good technique for dealing with more complicated graphs. If we first enforce $y=0,z=1$, the integral (\ref{ref11}) is considerably simpler: see, for example, the proof given in \cite{Esposito}, which uses an easy trick using the range of integration. However, we have presented a more involved calculation for general $y$ and $z$, because the algorithm generalises when one considers more involved weight integral calculations. 

We are always using the principle branch of $\text{log}(z)$ and our method means that the argument is never allowed to be negative. As an example, the argument in our answer, $\frac{y-\bar{z}}{z-\bar{y}}$, is never in $\mathbb{R}^-$. If one considers larger graphs, where further integration is be needed (for example of $y$ and $z$), the use of single-valued functions throughout means that one does not need to carry data about choices of log branches forward in the calculation. In the next chapter, we discuss the general algorithm of \cite{Panzer} for computing weight integrals. It is just the same as one presented above, generalised to a more sophisticated theoretical setting; running through Appendix \ref{amend14} therefore gives the reader some insight into how the results are achieved.

\section{Multiple zeta values}
To conclude, we explore the weight integrals in further depth, focusing in particular on the recent work \cite{Panzer}, of Erik Panzer, Peter Banks and Brent Pym. We also consider the connections of the subject to other areas of mathematics. This section draws on \cite{Zagier}, \cite{Waldschmidt}, \cite{MZVs} for background material.
\subsection{Periods}
We traditionally think of our number systems as falling in the hierarchy:
\begin{align*}
\mathbb{N} \subset  \mathbb{Z}  \subset  \mathbb{Q} \subset \overline{\mathbb{Q}}\\ \cap \: \quad \cap \\
\mathbb{R} \subset \mathbb{C}
\end{align*}
(Taken from \cite{Zagier}). There is, however, another class of numbers between $\overline{\mathbb{Q}}$ and $\mathbb{C}$, known as the \emph{periods}. Intuitively, these can be thought of as complex numbers expressible with a finite amount of algebraic information; they are a natural extension of $\overline{\mathbb{Q}}$. More precisely, we can state:
\begin{defn}[Period]
A period is a complex number, such that the both the real and imaginary parts are absolutely convergent integrals of rational functions with rational coefficients, over domains in $\mathbb{R}^n$ given by rational coefficient polynomial inequalities.
\end{defn}
A particular class of periods, the \emph{multiple zeta values}, or MZVs, is particularly relevant for the present discussion. Recall the standard definition of the \emph{Riemann zeta function} for $s\in \mathbb{C}$:
\begin{equation*} \label{ref43}
\zeta(s):=\sum_{n=1}^{\infty}n^{-s}.
\end{equation*}
This leads to:
\begin{defn}[Multiple zeta values]
Generalising the zeta values we have:
\begin{equation*}
 \zeta(s_1,...,s_k) := \sum_{0<n_1<...<n_k} \frac{1}{{n_1}^{s_1}...{n_k}^{s_k}},
\end{equation*}
for $s:=\{s_1,...,s_k\},s_i \in \mathbb{Z}^+$, $s_k>1$. All MZVs are convergent. 
\end{defn}
For a given MZV, we say that $|s|=s_1+...+s_k$ is the weight. There are many interesting relations between the MZVs; for a broader discussion see \cite{MZVs}.
\subsection{MZVs and weight integrals}
Consider the admissible graphs of Figure \ref{fig:2}. We have already calculated the weight of $(i)$ as $\frac{1}{2}$; considering the graphs in $(ii)$ from left to right, one can calculate the weights as: $\frac{1}{4}, \frac{1}{12}, -\frac{1}{12}, \frac{1}{24}$ respectively (The first is just a Moyal graph). We may think we see a simple pattern, but this idea is quickly dispelled by considering $\Gamma$ in Figure \ref{fig:4}. The corresponding weight is:
\begin{equation} \label{ref44}
\omega_{\Gamma}=-\frac{1}{6048}+\frac{9}{128}\frac{{\zeta(3)}^2}{{\pi}^6}.
\end{equation}
\begin{figure} 
\includegraphics[width=6cm]{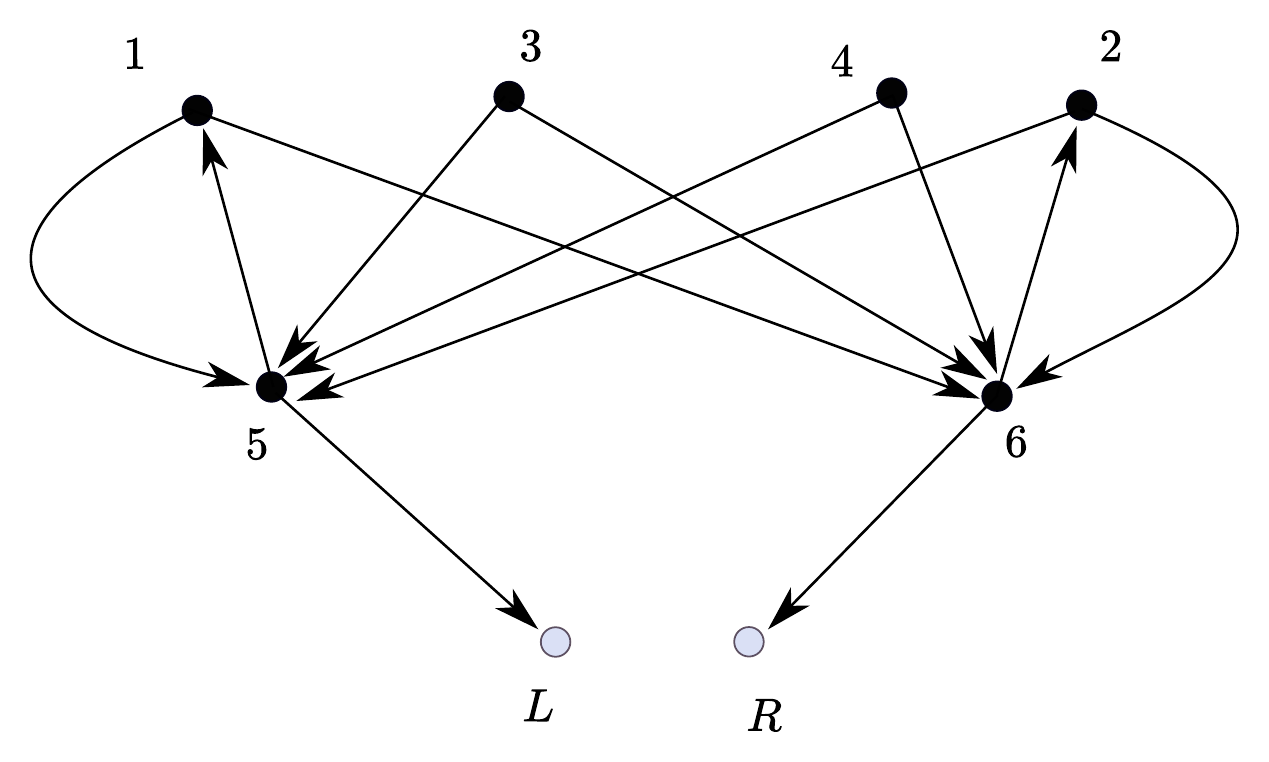}
\centering
\caption{An admissible graph of order 6.}
\label{fig:4}
\end{figure}
See \cite{Website}. The MZV $\zeta(3)$ has emerged from a weight integral calculation. In fact, Kontsevich conjectured himself that the weights are given by rational linear combinations of MZVs. This was recently proven in \cite{Panzer}:
\begin{thm}[Banks, Panzer, Pym, \cite{Panzer}]
Given a star product graph $\Gamma$ at order $\hbar^n$, the weight $\omega_{\Gamma}$ can be expressed as:
\begin{equation*}
 \omega_{\Gamma}=\sum_{|s|\leq n}\frac{\zeta(s)}{(i \pi)^{|s|}}\mathbb{Q}.
\end{equation*}
\end{thm}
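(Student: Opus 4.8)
The plan is to reduce the multiple-integral $\omega_\Gamma = \int_{\mathcal{H}_n} \mathrm{d}\phi_{e_1} \wedge \cdots \wedge \mathrm{d}\phi_{e_{2n}}$ to a manageable iterated integral by integrating out the configuration points $p_1,\dots,p_n$ one at a time, exactly in the spirit of Lemma \ref{lem2} and its appendix. The key observation is that each angle form $\mathrm{d}\phi(p_i,p_j)$ splits into a holomorphic and an anti-holomorphic piece, so that $\mathrm{d}\phi = \tfrac{1}{2\pi i}\,\mathrm{d}\log\!\big(\tfrac{p_j-p_i}{p_j-\bar p_i}\big)$ is, up to the normalisation $\tfrac{1}{2\pi i}$, a closed logarithmic form with singularities along the hyperplanes $p_i = p_j$ and $p_i = \bar p_j$. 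The first step is therefore to establish that the integrand, after wedging the $2n$ forms, is a rational combination of such $\mathrm{d}\log$ forms, with every coefficient a rational number and every logarithmic singularity located along a diagonal or an anti-diagonal defined over $\mathbb{Q}$.

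Next I would set up the iterated integration as a \emph{fibration}: fix $p_1$ on the innermost fibre, integrate it against the remaining forms using the single-valued primitive produced by the algorithm of Lemma \ref{lem2}, and express the result as a combination of single-valued functions in the remaining points $p_2,\dots,p_n$, together with the boundary contributions coming from Stokes' theorem. The central structural claim is that this class of functions is closed under the operation ``take a primitive, then restrict to the boundary'': concretely, one works in the algebra of \emph{single-valued hyperlogarithms} (single-valued multiple polylogarithms) in the complex variables $p_k$, whose $\mathrm{d}$-primitives stay in the same algebra and whose regularised limits at the boundary divisors ($p_k \to 0$, $p_k \to 1$, $p_k \to \infty$, and coincidences $p_k \to p_l$) are again single-valued hyperlogarithms of one fewer variable, with all new constants appearing as single-valued multiple polylogarithms evaluated at $0$ and $1$. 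Induction on $n$ (the number of integrations) then carries the whole weight down to a single number.

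The decisive arithmetic input is the evaluation at the last stage: once all $p_k$ have been integrated out, the surviving constant is a single-valued multiple polylogarithm at the point $1$, and these are known to lie in the $\mathbb{Q}$-span of the multiple zeta values divided by powers of $2\pi i$. Tracking the normalisation, each of the $2n$ one-forms carries a factor $\tfrac{1}{2\pi i}$, the Stokes boundary maps preserve the $\tfrac{1}{(i\pi)^{\bullet}}$ grading by transcendental weight, and the weight of the resulting MZV never exceeds the number of integrations, giving the stated bound $|s| \le n$. The main obstacle is precisely this induction step: one must prove that the intermediate primitives genuinely remain single-valued (so that no log-branch ambiguity leaks across successive integrations) and that the boundary regularisations close up within the same function algebra with rational structure constants. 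This is exactly the content that the elementary calculation of Lemma \ref{lem2} illustrates in the base case; promoting it to arbitrary graphs $\Gamma$, with their complicated web of edges producing nested and entangled $\mathrm{d}\log$ singularities, is the hard part, and is where the full machinery of \cite{Panzer}—the hyperlogarithm integration algorithm together with the single-valuedness of its output—does the essential work.
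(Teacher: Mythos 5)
Your proposal follows essentially the same route as the paper, which itself only sketches the argument of \cite{Panzer}: splitting each angle form into $\mathrm{d}\log$ pieces, integrating out one configuration point at a time using real-analytic single-valued primitives in (multiple) polylogarithms together with regularized Stokes' theorem, and extracting the MZVs at the final evaluation via Lemma \ref{lem4}, i.e.\ precisely the generalisation of the Lemma \ref{lem2} calculation that the paper describes. Like the paper, you correctly locate the essential difficulty --- closure of the single-valued hyperlogarithm algebra under taking primitives and regularized boundary limits, with rational structure constants --- as the content supplied by the machinery of \cite{Panzer} rather than by the elementary base case.
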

Because $\omega_{\Gamma} \in \mathbb{R}$, the $i$ in the denominator implies that some of the rational coefficients must be zero. In fact, the authors of \cite{Panzer} go further, as the constructive proof gives a general algorithm for computing any $\omega_{\Gamma}$. 
\subsection{Polylogarithms}
To understand some ideas behind the proof, we introduce:
\begin{defn}[Polylogarithm] Define the $n^{th}$ polylogarithm function, $n \geq 1$ as the Chen iterated integral:
\begin{equation*}
Li_{n}(x) := \int_{0}^{x} \frac{\textnormal{d}z}{1-z} \left( \frac{\textnormal{d}z}{z}\right)^{n-1}.
\end{equation*}
This depends on a choice of path, but only up to homotopy equivalence: it is therefore a multi-valued function like a logarithm. By choosing the branch cut $(-\infty,0] \cup [1,\infty)$, we get a single-valued function. This is an improper integral, but by choosing a suitable limit we can find the value for paths with endpoints $0$ and $1$.
\end{defn}
There are differing conventions for ordering the differential forms in the integrand: we assume that the leftmost integral is performed first. This now leads to:
\begin{lemma} \label{lem5} Polylogarithms can be related to the Riemann zeta function with the identity:
\begin{equation*} 
Li_n(1)=\zeta(n).
\end{equation*}
\end{lemma}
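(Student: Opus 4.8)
The plan is to show that the iterated-integral definition of $Li_n$ agrees with the classical power series $\sum_{k \geq 1} x^k / k^n$ on a neighbourhood of the origin, and then to pass to the limit $x \to 1$. Throughout I take $n \geq 2$, so that $\zeta(n)$ converges.

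First I would establish the base case $n=1$. Expanding the geometric series $\frac{1}{1-z} = \sum_{k=0}^\infty z^k$, valid for $|z| < 1$, and integrating term by term along the straight path from $0$ to $x$ gives
\[
Li_1(x) = \int_0^x \frac{\textnormal{d}z}{1-z} = \sum_{k=1}^\infty \frac{x^k}{k} = -\log(1-x).
\]
Next, reading off the iterated-integral structure with the stated convention (the leftmost form $\frac{\textnormal{d}z}{1-z}$ is performed first, hence sits innermost), I would record the recursion
\[
Li_n(x) = \int_0^x Li_{n-1}(z)\,\frac{\textnormal{d}z}{z}, \qquad n \geq 2.
\]
An induction on $n$ then yields the power series: assuming $Li_{n-1}(z) = \sum_{k\ge 1} z^k/k^{n-1}$, term-by-term integration of $z^{-1} Li_{n-1}(z)$ produces $Li_n(x) = \sum_{k\ge 1} x^k/k^n$. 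Setting $x = 1$ gives $Li_n(1) = \sum_{k\ge 1} 1/k^n = \zeta(n)$ directly from the definition of the Riemann zeta function.

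The main obstacle is the analytic subtlety at the endpoint $x=1$, where both the power series and the iterated integral are singular. For the interchange of summation and integration I would first work on $[0,x]$ with $x<1$, where convergence is uniform on compact subsets, and only afterwards take $x \to 1^-$. That limit requires justification: since $\sum_{k\ge1} 1/k^n$ converges for $n\geq 2$, Abel's theorem guarantees that the Abel limit of the power series as $x \to 1^-$ equals $\zeta(n)$, while on the integral side one checks that $Li_n(1)$ exists as an improper integral. Here the factors $\frac{\textnormal{d}z}{z}$ cause no trouble near $z=0$ because $Li_{n-1}(z) = O(z)$, and the only genuine singularity, the logarithmic one arising from $\frac{\textnormal{d}z}{1-z}$ at the upper endpoint, is integrable. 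The remaining care is to make precise the convention for approaching the base points $0$ and $1$ (the tangential limits alluded to in the definition), so that the two limiting values exist and coincide, completing the identification $Li_n(1) = \zeta(n)$.
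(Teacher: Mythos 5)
Your proposal follows essentially the same route as the paper's proof: expand the integrand as a power series, integrate term by term (your recursion $Li_n(x)=\int_0^x Li_{n-1}(z)\,\frac{\textnormal{d}z}{z}$ is just an inductive packaging of the paper's sequential integrations) to obtain $\sum_{k\geq 1} x^k/k^n$, and then let $x \to 1$. Your treatment is in fact slightly more careful than the paper's at the endpoint, where you invoke Abel's theorem and check integrability of the logarithmic singularity rather than simply taking the limit, and you correctly restrict to $n \geq 2$.
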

\begin{proof}
By choosing the branch cut $(-1,0]$, we have a simply-connected domain $D=\{|x|<1\}\setminus(-1,0]$. So all paths in $D$ are homotopy equivalent and $Li_n(x)$ is independent of the path and single-valued. Now:
\begin{equation*}
\int_{0}^{x} \frac{\textnormal{d}z}{1-z} \left( \frac{\textnormal{d}z}{z}\right)^{n-1}=\int_{0}^{x} -\textnormal{log}(1-z) \left( \frac{\textnormal{d}z}{z}\right)^{n-1}.
\end{equation*}
Using the convergent Taylor series for $\textnormal{log}(1-z)$ in $|z|<1$, this is equal to:
\begin{equation*}
\int_{0}^{x}\left( z+ \frac{z^2}{2}+\frac{z^3}{3}+...\right) \left( \frac{\textnormal{d}z}{z}\right)^{n-1} =\int_{0}^{x}\left( z+ \frac{z^2}{2^2}+\frac{z^3}{3^2}+...\right) \left( \frac{\textnormal{d}z}{z}\right)^{n-2}
 =...=\sum_{i=1}^{\infty}\frac{x^i}{i^n}.
\end{equation*}
Taking the limit as $x\rightarrow 1$ we obtain $\zeta(n)$ as desired.
\end{proof}
Now, we can easily generalise this construction to:
\begin{defn}[Multiple polylogarithm] The single-variable multiple polylogarithm is given by the Chen iterated integral:
\begin{equation*}
Li_{(s_1,...,s_n)}(x):=\int_{0}^{x}\frac{\textnormal{d}z}{1-z}\left(\frac{\textnormal{d}z}{z} \right)^{s_1-1}\frac{\textnormal{d}z}{1-z}...\frac{\textnormal{d}z}{1-z}\left(\frac{\textnormal{d}z}{z} \right)^{s_n-1},
\end{equation*}
for $s_1,...,s_n \in \mathbb{Z^+}$. This is a multi-valued function on $\mathbb{C} \setminus \{0,1\}$.
\end{defn}
The relevance is immediate from:
\begin{lemma}
The MZVs can be expressed with the identity:
\begin{equation*} \label{lem4}
Li_{(s_1,...,s_n)}(1)= \sum_{n_1>...>n_k>0} \frac{1}{{n_1}^{s_1}...{n_k}^{s_k}}.
\end{equation*}
\end{lemma}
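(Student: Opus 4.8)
The plan is to mimic the proof of Lemma~\ref{lem5}: first establish an expansion of the multiple polylogarithm as a nested power series valid for $|x|<1$, and then let $x\to 1$. Concretely, I would prove the identity
\[
Li_{(s_1,\dots,s_n)}(x)=\sum_{m_1>\cdots>m_n\geq 1}\frac{x^{m_1}}{m_1^{t_1}\cdots m_n^{t_n}},
\]
where $(t_1,\dots,t_n)$ is the appropriate re-indexing of $(s_1,\dots,s_n)$ dictated by the ordering convention; specialising to $x=1$ then yields the stated nested sum. The power series converges for $|x|<1$, so the only analytic input beyond formal manipulation is the passage to the boundary $x\to 1$.

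The engine of the proof is the effect of the two one-forms on a power series. Integrating against $\textnormal{d}z/z$ acts on $\sum_{m\geq1}a_m z^m$ as
\[
\int_0^x\Big(\sum_{m\geq1}a_m z^m\Big)\frac{\textnormal{d}z}{z}=\sum_{m\geq1}\frac{a_m}{m}\,x^m,
\]
so it increments by one the exponent attached to the current top index. Integrating against $\textnormal{d}z/(1-z)$ acts as
\[
\int_0^x\Big(\sum_{m\geq1}a_m z^m\Big)\frac{\textnormal{d}z}{1-z}=\sum_{M\geq1}\frac{x^M}{M}\sum_{m<M}a_m,
\]
which introduces a fresh summation index $M$ constrained to be strictly larger than all previous indices and carrying initial exponent one (the $M=1$ term vanishes automatically). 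The base case is $\int_0^x \textnormal{d}z/(1-z)=\sum_{m\geq1}x^m/m$, exactly as in Lemma~\ref{lem5}.

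With these two operations in hand I would argue by induction, peeling off the one-forms from the innermost (leftmost) end as in the single-variable case. Each block $\tfrac{\textnormal{d}z}{1-z}\big(\tfrac{\textnormal{d}z}{z}\big)^{s_i-1}$ first creates a new index via the $\tfrac{\textnormal{d}z}{1-z}$ factor and then raises its exponent via the $s_i-1$ copies of $\tfrac{\textnormal{d}z}{z}$, each new index being strictly larger than its predecessor. After processing all $n$ blocks one recovers the claimed nested series, the strict inequalities $m_1>\cdots>m_n$ being produced precisely by the $\textnormal{d}z/(1-z)$ steps.

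The main obstacle is essentially bookkeeping rather than conceptual, and is twofold. First, one must keep careful track of which exponent attaches to which index and in which order, since the order in which the forms are read together with the chosen orientation convention determines whether the largest index carries $s_1$ or $s_n$; getting this matching right is what makes the left-hand side agree with the sum as written. Second, the interchange of summation and integration at each step must be justified by the absolute convergence on $|x|<1$, and the final limit $x\to1$ requires Abel's theorem together with convergence of the boundary series, which holds under the hypothesis that the exponent on the largest index exceeds one.
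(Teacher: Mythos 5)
Your proposal is correct and follows exactly the route the paper takes: the paper's own proof is a sketch stating that the argument is identical to that of Lemma \ref{lem5}, namely expanding the integrand as a power series and integrating term by term, which is precisely what you carry out (with the welcome extra care about the two integration operations, the index bookkeeping, and the Abel-limit at $x\to1$). No gaps; your version simply fills in details the paper delegates to the reference.
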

\begin{proof} \emph{(Sketch)}. The idea is exactly the same as the proof of Lemma \ref{lem5}, expanding the integral using Taylor series and integrating term-by-term. See \cite{MZVs}
for full details. 
\end{proof}
For a broader discussion of these functions, and their relations to MZVs, see \cite{Waldschmidt}. We note a final interesting fact about MZVs. There is a long-standing `folklore conjecture' that the odd integer values of the zeta function:
\begin{equation*}
\pi, \; \zeta(3), \; \zeta(5), \; \zeta(7),...
\end{equation*}
are algebraically independent \cite{Brown}. This suggests that certain Kontsevich weights, such as $\omega_{\Gamma}$ in (\ref{ref44}), are, in fact, transcendental.
\subsection{A constructive proof}
Recall the sketch proof of Lemma \ref{lem2}; we just used the algorithm of \cite{Panzer}. However, the authors generalise the ideas we presented there by finding real-analytic, single-valued primitives in terms of multiple polylogarithms rather than ordinary logs. Integrating out variables and applying (regularized) Stokes' theorem \cite{Panzer}, MZVs are eventually obtained exactly through Lemma \ref{lem4}. 

Importantly, we can now theoretically give a deformation quantization of a Poisson structure up to any desired $\hbar$ order. In fact, the authors of \cite{Panzer} have created open source software for this, available at \cite{Website}, where there is also a full list of weights up to ${\hbar}^6$. Prior work used various tricks for calculating weight integrals in special cases. The \emph{Bernoulli graphs} were treated previously in \cite{Kathotia}, where the coefficients were found to be easily expressible in terms of Bernoulli numbers. The software was checked for consistency against these calculations (and others like it), producing the same results. 

The significance of \cite{Panzer} is clear when one considers the large amount of previous work (see also \cite{merkulov}, \cite{Shoikhet}, \cite{VandenBergh}) that has focused on these calculations. Bringing this all together, the authors have, in a sense, `closed' a long-standing area of research.

\subsection{Final remarks}

\begin{figure}  
\includegraphics[width=8.5cm]{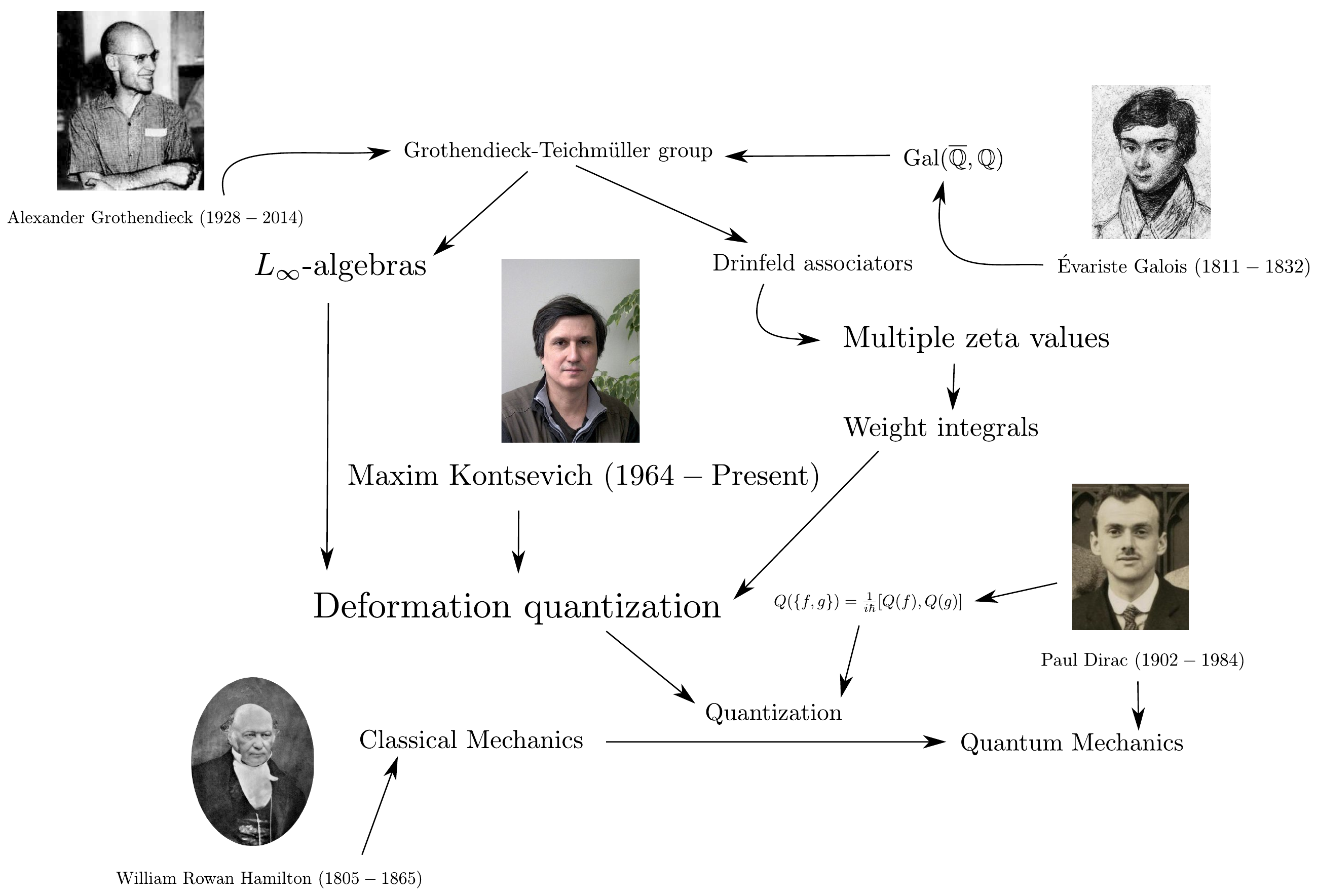}
\centering
\caption{Diagrammatic representation of the key ideas, figures and their relations.}
\label{fig5}
\centering
\end{figure} 

We began this essay with a simple, physically motivated aim: the quantization of classical mechanics. We saw that an early historical guess for this was given by Dirac, who suggested the quantization condition:
\begin{equation*}Q(\{f,g\})=\frac{1}{i\hbar}[Q(f),Q(g)].
\end{equation*} 
After introducing the idea of deformation quantization, there was finally a mysterious connection of this physics problem to MZVs, typically associated with number theory. This is not the only strange link. Recall that the Kontsevich formality theorem gives us an $L_{\infty}$-quasi-isomorphism:
\begin{equation*}
    U: T_{poly}(M)\rightarrow D_{poly}(M).
\end{equation*}
Consider the space of stable $L_{\infty}$-quasi-isomorphisms from $T_{poly}(M)$ to $D_{poly}(M)$ (see \cite{Dolgushev1}), denoted by:
\begin{equation*}\text{Maps}^{L_{\infty}}_{q.i.}(T_{poly}(M), D_{poly}(M)).\end{equation*} We can introduce a notion of homotopy equivalence between the maps in this set; then equivalent maps induce the same bijection of Maurer-Cartan sets, in the sense of Theorem \ref{ref26} (see \cite{Dolgushev}). 

Remarkably, the \emph{real Grothendieck-Teichm{\"u}ller group} $GT(\mathbb{R})$ now acts on the set of homotopy equivalence classes, $\pi_0 \text{Maps}^{L_{\infty}}_{q.i.}(T_{poly}(M), D_{poly}(M))$ (see \cite{Dolgushev}, \cite{Dolgushev1}, \cite{Grothendieck}). We know that there is a homomorphism from the \emph{absolute Galois group of} $\mathbb{Q}$, Gal$(\overline{\mathbb{Q}}/\mathbb{Q})$, to the \emph{profinite} $\widehat{GT}$ \cite{Drinfeld}, and therefore to the $p$-adic points of the \emph{algebraic} $GT$:
\begin{equation*}
\text{Gal}(\overline{\mathbb{Q}} / \mathbb{Q}) \rightarrowtail \widehat{GT} \rightarrow GT(\mathbb{Q}_p).
\end{equation*}
So Gal$(\overline{\mathbb{Q}}/\mathbb{Q})$ would act on the space of $p$-adic deformation quantizations, if such a notion made sense, another puzzling connection with number theory.

In fact, $GT(\mathbb{R})$ also acts on another construction, the space of \emph{Drinfeld associators}, Drin, which are formal noncommututative power series in two variables satisfying a complex set of algebraic relations. Drinfeld gives a specific associator, $KZ$, with MZVs as coefficients \cite{Drinfeld}; it is hard to escape the feeling that none of this is a coincidence:

\begin{equation*}
KZ  \in  \text{Drin} \; \reflectbox{$\lcirclearrowright$} \; GT \lcirclearrowright \pi_0 \text{Maps}^{L_{\infty}}_{q.i.}(T_{poly}(M), D_{poly}(M)) \; \reflectbox{$\in$} \; \text{$L_{\infty}$ Kont}.
\end{equation*}

In the introduction, we stated that deformation quantization has ambiguous physical applicability, because there is no Hilbert space to act on and convergence is ignored. Whilst this criticism may be fair in some sense, the strange and beautiful links of the subject to many disparate areas of mathematics and physics (see Figure \ref{fig5}) lead us to believe that it is not the end of the story. To conclude, we pose three (possibly very naive) questions.
\begin{enumerate}
    \item How can deformation quantization be used for Hilbert space quantization?
    \item What are the underlying relations between deformation quantization and other mathematical structures where MZVs appear? 
    \item Are arithmetic geometry and physics related on some fundamental level?
   
\end{enumerate}

    \appendix
    \section{Proof of Lemma \ref{lem2}} \label{amend14}
    \begin{proof}
    Firstly,
\begin{equation*}
\textnormal{d} \phi_{x\rightarrow y} =\frac{1}{2\pi}\text{d}\text{arg}\left(\frac{y-x}{y-\bar{x}} \right)= \frac{1}{4 \pi i}\text{d}\text{log}\left(\frac{(y-x)(\bar{y}-x)}{(y-\bar{x})(\bar{y}-\bar{x})} \right).
\end{equation*}
We want to calculate:
\begin{equation*}
I = \int_{x \in \mathcal{H}}\textnormal{d}\phi_{x \rightarrow y} \wedge \textnormal{d}\phi_{x \rightarrow z} \\\
= \frac{1}{(4 \pi  i)^2} \int_{x \in \mathcal{H}} \text{d}\text{log}\left(\frac{(y-x)(\bar{y}-x)}{(y-\bar{x})(\bar{y}-\bar{x})} \right) \wedge \text{d}\text{log}\left(\frac{(z-x)(\bar{z}-x)}{(z-\bar{x})(\bar{z}-\bar{x})} \right).
\end{equation*}
Consider the integrand:
\begin{multline*}
    \text{d}\text{log}\left(\frac{(y-x)(\bar{y}-x)}{(y-\bar{x})(\bar{y}-\bar{x})} \right) \wedge \text{d}\text{log}\left(\frac{(z-x)(\bar{z}-x)}{(z-\bar{x})(\bar{z}-\bar{x})} \right) \\\ =\text{d}\text{log}((z-\bar{x})(\bar{z}-\bar{x})) \wedge \text{d}\text{log}((y-x)(\bar{y}-x))
    -\text{d}\text{log}((y-\bar{x})(\bar{y}-\bar{x})) \wedge \text{d}\text{log}((z-x)(\bar{z}-x)).
\end{multline*}
We want a single-valued primitive $\eta$, which we can now give:
\begin{multline*}
    \eta =
    \text{log}\left((z-\bar{x})(\bar{z}-\bar{x})(\bar{z}-x)(z-x) \right)  \text{d}\text{log}((y-x)(\bar{y}-x))\\\
    -\text{log}\left((y-\bar{x})(\bar{y}-\bar{x})(\bar{y}-x)(y-x) \right)  \text{d}\text{log}((z-x)(\bar{z}-x)).
\end{multline*}
We use the branch of the argument function from $-\pi$ to $\pi$. The integral we want is:
\begin{equation*}
I = \frac{1}{(4 \pi  i)^2} \int_{x \in \mathcal{H}} \text{d} \eta .  
\end{equation*}
Now apply Stokes' theorem. We use the curve shown in Figure \ref{figA2}, consisting of the segment $[-R,R]$, a semi-circular arc $\Gamma$ of radius $R$, and little circles of radius $r$ inside that cut out the problematic points. Now:
\begin{equation*}
     \lim_{r\to 0} \int_{\gamma_1}
     \text{log}\left((z-\bar{x})(\bar{z}-\bar{x})(\bar{z}-x)(z-x) \right)  \text{d}\text{log}((y-x)(\bar{y}-x))   \\\ = 2 \pi i  \text{log}\left( (z-\bar{y})(\bar{z}-\bar{y})(\bar{z}-y)(z-y) \right) . 
\end{equation*}
and also:
\begin{equation*}
   \lim_{r\to 0} \int_{\gamma_2}
   \text{log}\left((y-\bar{x})(\bar{y}-\bar{x})(\bar{y}-x)(y-x) \right)  \text{d}\text{log}((z-x)(\bar{z}-x)) \\\ = 2 \pi i  \text{log}\left( (y-\bar{z})(\bar{y}-\bar{z})(\bar{y}-z)(y-z) \right).
\end{equation*}
\begin{figure} 
\includegraphics[width=6cm]{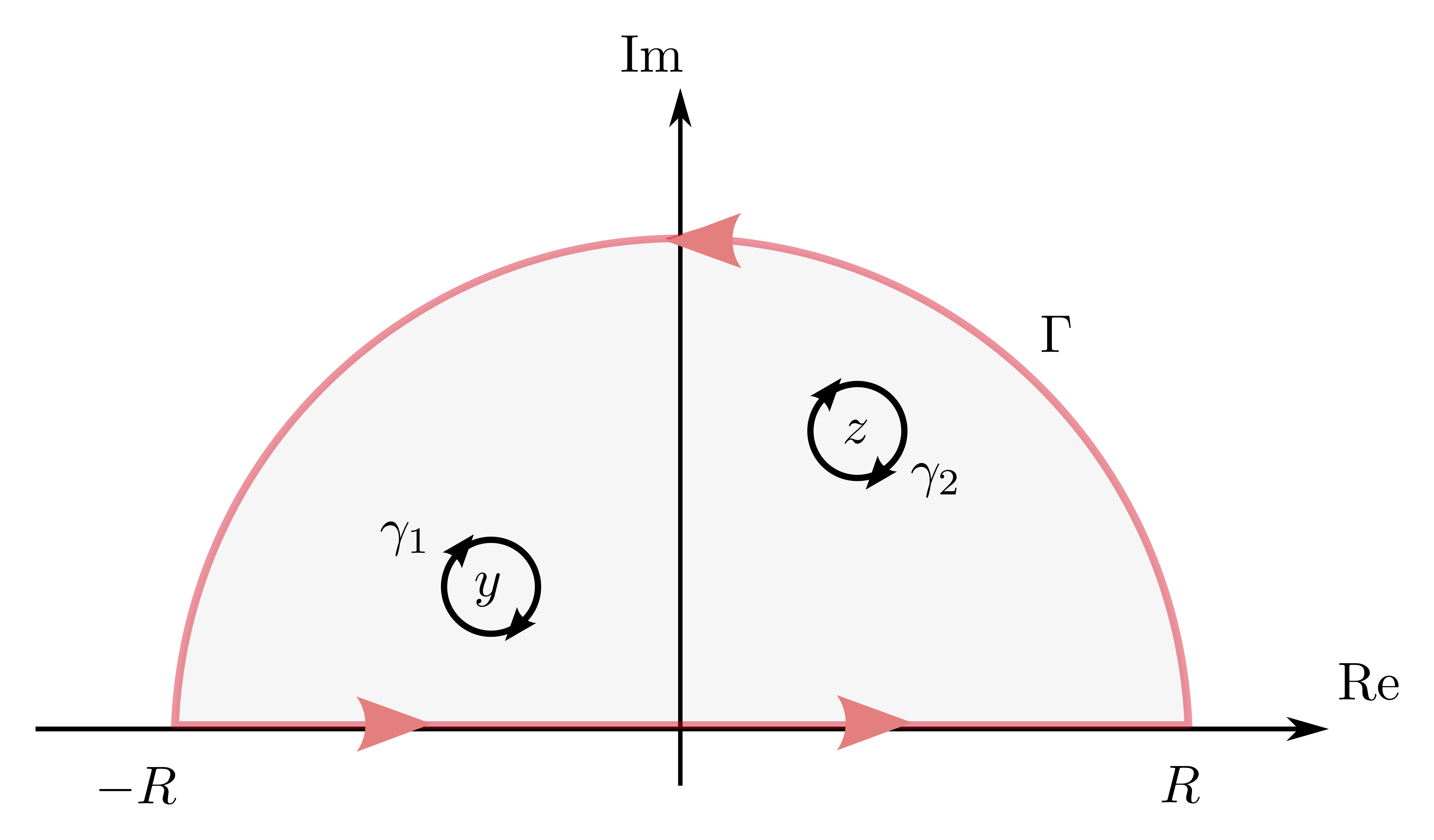}
\centering
\caption{The boundary curve  $\Gamma \cup [-R,R]\cup \gamma_1 \cup \gamma_2$.}
\centering
\label{figA2}
\end{figure}
As $r\rightarrow 0$, the first term integrated around $\gamma_2$ goes to $0$, and the second term integrated around $\gamma_1$ also goes to $0$ (essentially because $r\text{log}(r) \rightarrow 0 \; \text{as} \; r \rightarrow 0$). Now summing these up gives total contribution $0$. We now calculate the integral over $\Gamma$ and the integral over $[-R,R]$, in the limit as $R\rightarrow \infty$. Firstly:
\begin{multline*}
    \int_{\Gamma}\eta=
    \int_{\Gamma}  
    \text{log}\left((z-\bar{x})(\bar{z}-\bar{x})(\bar{z}-x)(z-x) \right)  \text{d}\text{log}((y-x)(\bar{y}-x))\\\
    -\text{log}\left((y-\bar{x})(\bar{y}-\bar{x})(\bar{y}-x)(y-x) \right)  \text{d}\text{log}((z-x)(\bar{z}-x)).
\end{multline*}
We calculate the first term first:
\begin{equation*}
A=\int_{\Gamma}\left( \text{log}({\bar{x}}^2 x^2)+\text{log} \left(\left(\frac{z}{\bar{x}}-1\right)\left(\frac{\bar{z}}{\bar{x}}-1\right)\left(\frac{\bar{z}}{x}-1\right)\left(\frac{z}{x}-1\right) \right) \right) \text{d}\text{log}((y-x)(\bar{y}-x)).
    \end{equation*}
Now as $R \rightarrow \infty$:
\begin{equation*}
\text{d}\text{log}((x-y)(x-\bar{y})) = \frac{\text{d}x}{x-y}+\frac{\text{d}x}{x-\bar{y}} = \frac{\text{d}x}{x}\left(\frac{1}{1-\frac{y}{x}} +\frac{1}{1-\frac{\bar{y}}{x}} \right) \rightarrow 2 \frac{\text{d}x}{x}.
\end{equation*}
Also:
\begin{equation*}
\text{log}({\bar{x}}^2 x^2)+\text{log}\left(\left(\frac{z}{\bar{x}}-1\right)\left(\frac{\bar{z}}{\bar{x}}-1\right)\left(\frac{\bar{z}}{x}-1\right)\left(\frac{z}{x}-1\right) \right) \rightarrow 2\text{log}(x\bar{x}).
\end{equation*}
In the limit, there is no $z$ dependence in the integrand of the first term, and the integrand of the second term tends to the same limit.  So these just cancel and we have a total contribution of $0$. For the integral over $\mathbb{R}$, we have:
\begin{multline*}
    \int_{\mathbb{R}}\eta=
    \int_{\mathbb{R}}  
    \text{log}\left((z-\bar{x})(\bar{z}-\bar{x})(\bar{z}-x)(z-x) \right)  \left( \frac{\text{d}x}{x-y}+\frac{\text{d}x}{x-\bar{y}} \right)\\\
    -\text{log}\left((y-\bar{x})(\bar{y}-\bar{x})(\bar{y}-x)(y-x) \right) \left( \frac{\text{d}x}{x-z}+\frac{\text{d}x}{x-\bar{z}} \right).
\end{multline*}
\begin{figure} 
\includegraphics[width=9cm]{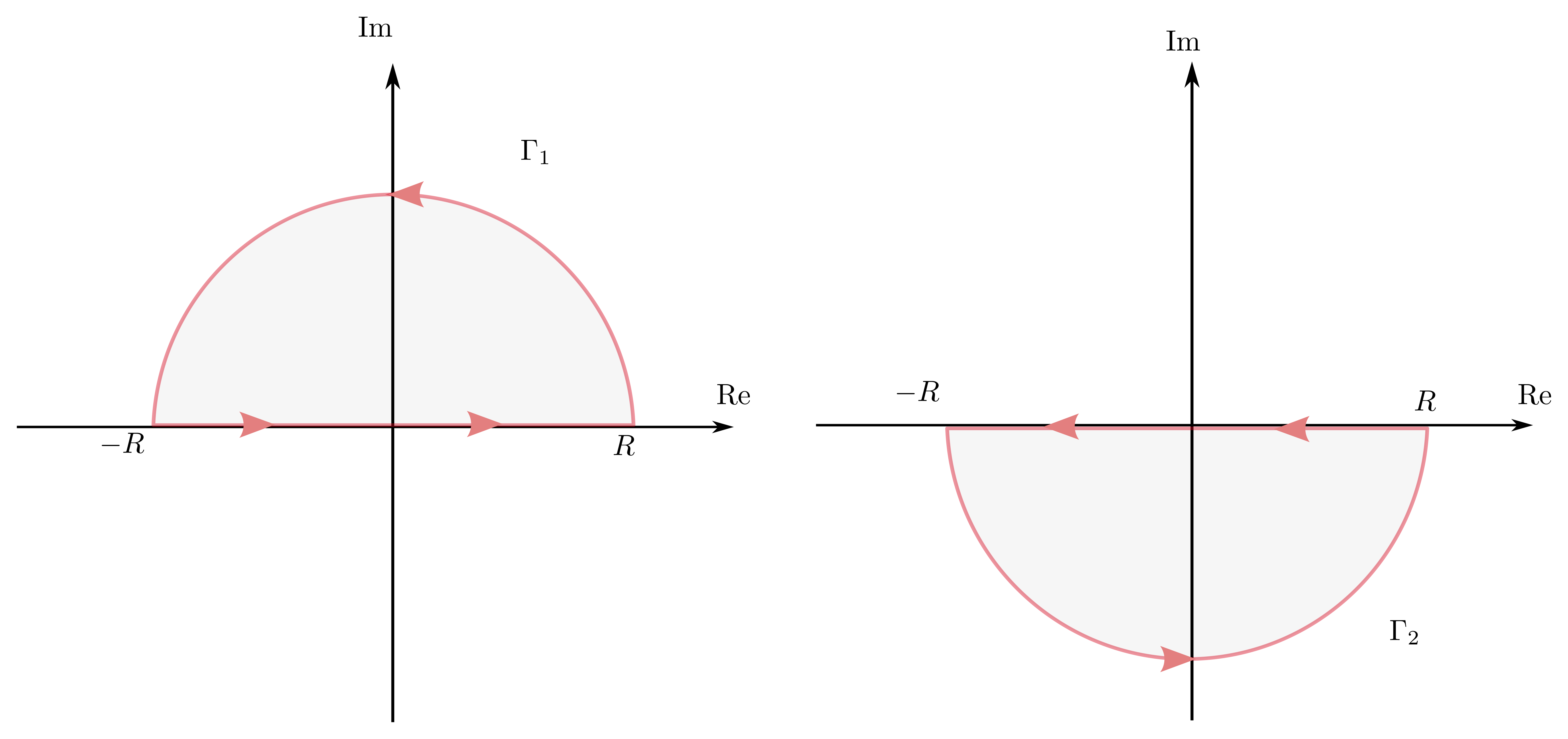}
\centering
\caption{The contours $\Gamma_1$ and $\Gamma_2$}
\label{figA1}
\centering
\end{figure}
Take two contours $\Gamma_1, \Gamma_2$, shown in Figure \ref{figA1}. $\Gamma_1$ consists of an upper semi-circular arc, and $[-R,R]$, as before. $\Gamma_2$ consists of a lower semi-circular arc and the oppositely oriented line segment. Because $x=\bar{x}$ on $\mathbb{R}$, as $R\rightarrow \infty$ we need to calculate:
\begin{equation*}
 \int_{\mathbb{R}}  
    \text{log}\left((z-x)^2(\bar{z}-x)^2) \right)  
    \text{d}\text{log}((y-x)(\bar{y}-x))
    -\text{log}\left((y-x)^2(\bar{y}-x)^2 \right) \text{d}\text{log}((z-x)(\bar{z}-x)).
\end{equation*}
Now along the real line, we can use the typical identities:
\begin{multline*}
\text{log}\left( (z-x)^2(\bar{z}-x)^2 \right)= \text{log}\left( (z-x)^2\right)+\text{log}  \left( (\bar{z}-x)^2 \right), \\ \text{log}\left( (y-x)^2(\bar{y}-x)^2 \right)= \text{log}\left( (y-x)^2\right)+\text{log}  \left( (\bar{y}-x)^2 \right) . 
\end{multline*}
We can use $\text{log}((\bar{z}-x)^2(z-x)^2)= 2 (\text{log}(\bar{z}-x)+\text{log}(z-x))$, because of the choice of the branch and as $z$, $\bar{z}$ are complex conjugates. The same holds for the $y, \bar{y}$ terms.

Now, in our integrand, two of the terms extend holomorphically to the upper half plane; the other two terms extend holomorphically to the lower half plane. We treat these pairs of terms separately. Firstly:
\begin{equation*}
\int_{\Gamma_1} 2\text{log}(\bar{z}-x)\text{d}\text{log}((y-x)(\bar{y}-x))-2\text{log}(\bar{y}-x)\text{d}\text{log}((z-x)(\bar{z}-x)),
\end{equation*}
can just be evaluated by taking the residues at $y,z$, in the upper half plane. We get the residue contributions:
\begin{equation*}
4 \pi i \left( \text{log}(\bar{z}-y)-\text{log}(\bar{y}-z)\right)= 4 \pi i  \text{log}\left( \frac{y-\bar{z}}{z-\bar{y}}\right).
\end{equation*}
Secondly:
\begin{equation*}
  \int_{\Gamma_2}  2\text{log}(z-x)\text{d}\text{log}((y-x)(\bar{y}-x))-2\text{log}(y-x)\text{d}\text{log}((z-x)(\bar{z}-x)),
\end{equation*}
can be evaluated by taking the residues at $\bar{y},\bar{z}$, in the lower half plane. We get the residue contributions:
\begin{equation*}
4 \pi i \left(   \text{log}(z-\bar{y})-  \text{log}(y-\bar{z})  \right)=4 \pi i  \text{log}\left( \frac{z-\bar{y}}{y-\bar{z}}\right).
\end{equation*}
The integrals along the upper and lower semicircles vanish as before for $R \rightarrow \infty$. The total integral along $\mathbb{R}$ is therefore:
\begin{equation*}
\int_{\mathbb{R}} \eta = 8 \pi i \text{log} \left( \frac{y-\bar{z}}{z-\bar{y}}\right).
\end{equation*}
Putting this together:
\begin{equation*}
    I=\frac{1}{2 \pi i}\text{log}\left( \frac{y-\bar{z}}{z-\bar{y}} \right),
\end{equation*}
as desired.
\end{proof}

\bibliographystyle{plain}
\bibliography{ref}
\end{document}